\newcommand{\E}{\mathbb{E}}
\newcommand{\Id}{\mathrm{d}}
\newcommand{\R}{\mathbb{R}}
\newcommand{\argmax}{\operatornamewithlimits{argmax}}
\newcommand{\argmin}{\operatornamewithlimits{argmin}}
\theoremstyle{plain}
\newtheorem{proposition}{Proposition}
\newtheorem{corollary}{Corollary}
\newtheorem{lemma}{Lemma}
\theoremstyle{definition}
\newtheorem{assumption}{Assumption}
\theoremstyle{remark}
\newtheorem{remark}{Remark}
\begin{document}

\title{Electricity price modeling and asset valuation: a multi-fuel structural approach}




\author{Ren\'{e} Carmona}
\address{Bendheim Center for Finance\\
Dept. ORFE, University of Princeton\\Princeton NJ 08544, USA}
\email{rcarmona@princeton.edu}

\author{Michael Coulon}
\thanks{Partially supported by NSF - DMS-0739195}
\address{ORFE\\University of Princeton\\Princeton NJ 08544, USA}
\email{mcoulon@princeton.edu}

\author{Daniel Schwarz}
\address{Oxford-Man Institute\\University of Oxford\\Oxford, UK}
\email{schwarz@maths.ox.ac.uk}


\maketitle

\begin{abstract}
We introduce a new and highly tractable structural model for spot and derivative prices in electricity markets. Using a stochastic model of the bid stack, we translate the demand for power and the prices of generating fuels into electricity spot prices. The stack structure allows for a range of generator efficiencies per fuel type and for the possibility of future changes in the merit order of the fuels. The derived spot price process captures important stylized facts of historical electricity prices, including both spikes and the complex dependence upon its underlying supply and demand drivers.  Furthermore, under mild and commonly used assumptions on the distributions of the input factors, we obtain closed-form formulae for electricity forward contracts and for spark and dark spread options. As merit order dynamics and fuel forward prices are embedded into the model, we capture a much richer and more realistic dependence structure than can be achieved by classical reduced-form models. We illustrate these advantages by comparing with Margrabe's formula and a simple cointegration model, and highlight important implications for the valuation of power plants.

 \keywords{Electricity markets \and structural model \and forward prices \and spread options \and power plant valuation}

\textbf{JEL Classification Numbers:} C60, G12, G13, Q40

\end{abstract}

\section{Introduction}\label{str:intro}
Since the onset of electricity market deregulation in the 1990s, the modeling of prices in these markets has become an important topic of research both in academia and industry. Energy companies own large portfolios of generation units and require sophisticated models of price dynamics in order to manage risk. Asset valuation is also of utmost importance in capital intensive industries and real option theory is typically used to associate the management of a plant to a string of spread options, spanning many years or decades. One of the main thrusts of this paper is to provide new and versatile tools for these valuations which efficiently capture the complex dependencies upon demand and production fuel prices.

In electricity price modeling, important challenges include prominent seasonalities and mean-reversion at various time scales, sudden spikes and the strong link between the prices of electricity and other energy commodities (see Figures 1b and 5 for sample daily historical spot and forward prices from the PJM market) -- features which mostly stem from the non-storability of electricity and the resulting matching of supply and demand at all times.  While every model should attempt to capture these properties as well as possible, at the same time there is a need for fast and efficient methods to value power plants and other derivatives on the spot price. To achieve the latter goal of efficiency much literature has ignored or oversimplified the former goal of modeling structural relationships. In this paper, we propose a model that realistically captures the dependency of power prices on their primary drivers; yet we obtain closed-form expressions for spot, forward and option prices.  

The existing literature on electricity price modeling can be approximately divided into three categories. At one end of the spectrum are so called full production cost models. These rely on knowledge of all generation units, their corresponding operational constraints and network transmission constraints.  Prices are then typically solved for by complex optimization routines (cf. \cite{aEydeland2003}). Although this type of model may provide market insights and forecasts in the short term, it is --- due to its complexity --- unsuited to handling uncertainty, and hence to derivative pricing or the valuation of physical assets.  Other related approaches which share this weakness include models of strategic bidding (cf. \cite{aHortacsu2007}) and other equilibrium approaches (cf. \cite{hBessembinder2002}).  At the other end of the spectrum are reduced form models. These are characterised by an exogenous specification of electricity prices, with either the forward curve (cf. \cite{lClewlow2000} and \cite{fBenth2007a}) or the spot price (cf. \cite{sDeng2003,nFrikha2009,fBenth2008}) representing the starting point for the model. Reduced form models typically either ignore fuel prices or introduce them as exogenous correlated processes; hence they are not successful at capturing the important afore mentioned dependence structure between fuels and electricity. Further, spikes are usually only obtained through the inclusion of jump processes or regime switches, which provide little insight into the causes that underly these sudden price swings. 

In between these two extremes is the structural approach to electricity price modeling, which stems from the seminal work of Barlow \cite{mBarlow2002}. We use the adjective structural to describe models, which --- to varying degrees of detail and complexity --- explicitly approximate the supply curve in electricity markets (commonly known as the bid stack due to the price-setting auction). The market price is then obtained under the equilibrium assumption that demand and supply must match. In Barlow's work the bid stack is simply a fixed parametric function, which is evaluated at a random demand level. Later works have refined the modeling of the bid curve and taken into account its dependency on the available capacity (cf. \cite{mBurger2004,aBoogert2008,aCartea2008a}), as well as fuel prices (cf. \cite{cPirrong2008,mCoulon2009a,rAid2009,rAid2011}) and the cost of carbon emissions (cf. \cite{sHowison2011,mCoulon2009}). The raison d'\^{e}tre of all structural models is very clear. If the bid curve is chosen appropriately, then observed stylized facts of historic data can be well matched. Moreover, because price formation is explained using fundamental variables and costs of production, these models offer insight into the causal relationships in the market; for example, prices in peak hours are most closely correlated with natural gas prices in markets with many gas `peaker' plants; similarly, price spikes are typically observed to coincide with states of very high demand or low capacity. As a direct consequence, this class of models also performs best at capturing the varied dependencies between electricity, fuel prices, demand and capacity.

The model we propose falls into the category of structural models. Our work breaks from the current status quo by providing closed-form formulae for the prices of a number of derivative products in a market driven by two underlying fuels and featuring a continuum of efficiencies (heat rates). In the considered multi-fuel setting, our model of the bid stack allows the merit order to be dynamic: each fuel can become the marginal fuel and hence set the market price of electricity. Alternatively, several fuels can set the price jointly. Despite this complexity, under only mild assumptions on the distribution, under the pricing measure, of the terminal value of the processes representing electricity demand and fuels, we obtain explicit formulae for spot prices, forwards and spread options, as needed for power plant valuation. Moreover, our formulae capture very clearly and conveniently the dependency of electricity derivatives upon the prices of forward contracts written on the fuels that are used in the production process. This allows the model to easily `see' additional information contained in the fuel forward curves, such as states of contango or backwardation --- another feature, which distinguishes it from other approaches.

The parametrization of the bid stack we propose combines an exponential dependency on demand, suggested several times in the literature (cf. \cite{pSkantze2000,aCartea2008,mLyle2009}), with the need for a heat rate function multiplicative in the fuel price, as stressed by Pirrong and Jermakyan \cite{cPirrong2008}.  Eydeland and Geman \cite{aEydeland1999} propose a similar structure for forward prices and note that Black-Scholes like derivative prices are available if the power price is log-normal. However, this requires the assumption of a single marginal fuel type and ignores capacity limits. Coulon and Howison \cite{mCoulon2009a} construct the stack by approximating the distribution of the clusters of bids from each technology, but their approach relies heavily on numerical methods when it comes to derivative pricing. In the work of Aid \emph{et al} \cite{rAid2009}, the authors simplify the stack construction by allowing only one heat rate (constant heat rate function) per fuel type, a significant oversimplification of spot price dynamics for mathematical convenience. Aid \emph{et al} \cite{rAid2011} extend this approach to improve spot price dynamics and capture spikes, but at the expense of a static merit order, ruling out, among other things, the possibility that coal and gas can change positions in the stack in the future. In both cases the results obtained by the authors only lead to semi-closed form formulae, which still have to be evaluated numerically. 

The importance of the features incorporated in our model is supported by prominent developments observed in recent data.  In particular, shale gas discoveries have led to a dramatic drop in US natural gas prices in recent years, from a high of over \$13 in 2008 to under \$3 in January 2012.  Such a large price swing has rapidly pushed natural gas generators down the merit order, and highlights the need to account for uncertainty in future merit order changes, particularly for longer term problems like plant valuation. In addition, studying hourly data from 2004 to 2010 on marginal fuels in the PJM market (published by Monitoring Analytics), we observe that the electricity price was fully set by a single technology (only one marginal fuel) in only 16.1 per cent of the hours.  For the year 2010 alone, the number drops to less than 5 per cent. Substantial overlap of bids from different fuels therefore exists, and changes in merit order occur gradually as prices move. We believe that our model of the bid stack adheres to many of the true features of the bid stack structure, which leads to a reliable reproduction of observed correlations and price dynamics, while retaining mathematical tractability.

\section{Structural approach to electricity pricing}\label{str:structural_approach}
In the following we work on a complete probability space $(\Omega, \mathcal{F},\mathbb{P})$. For a fixed time horizon $T \in \R_+$, we define the $(n+1)$-dimensional standard Wiener process $(W^0_t,\mathbf{W}_t)_{t\in[0,T]}$, where $\mathbf{W}:=(W^1,\ldots,W^n)$. Let $\mathcal{F}^0:=(\mathcal{F}^0_t)$ denote the filtration generated by $W^0$ and $\mathcal{F}^W:=(\mathcal{F}^W_t)$ the filtration generated by $\mathbf{W}$. Further, we define the market filtration $\mathcal{F}:=\mathcal{F}^0\vee\mathcal{F}^W$. All relationships between random variables are to be understood in the almost surely sense.

\subsection{Price Setting in Electricity Markets}\label{str:price_setting}
We consider a market in which individual firms generate electricity. All firms submit day-ahead bids to a central market administrator, whose task it is to allocate the production of electricity amongst them. Each firm's bids take the form of price-quantity pairs representing an amount of electricity the firm is willing to produce, and the price at which the firm is willing to sell it\footnote{Alternatively, firms may, in some markets, submit continuous bid curves, which map an amount of electricity to the price at which it is offered. For our purposes this distinction will however not be relevant.}. An important part is therefore played by the merit order, a rule by which cheaper production units are called upon before more expensive ones in the electricity generation process. This ultimately guarantees that electricity is supplied at the lowest possible price.\footnote{This description is of course a simplification of the market administrator's complicated unit commitment problem, typically solved by optimization in order to satisfy various operational constraints of generators, as well as transmission constraints.  Details vary from market to market and we do not address these issues here, as our goal is to approximate the price setting mechanism and capture the key relationships needed for derivative pricing.}

\begin{assumption}\label{as:merit_order}
 The market administrator arranges bids according to the merit order and hence in increasing order of costs of production.
\end{assumption}

We refer to the resulting map from the total supply of electricity and the factors that influence the bid levels to the price of the marginal unit as the \textit{market bid stack} and assume that it can be represented by a measurable function
\begin{equation*}\label{eq:bs_original}
b:[0,\bar{\xi}]\times\R^{n}\ni(\xi,\mathbf{s})\hookrightarrow b(\xi,\mathbf{s})\in\R,
\end{equation*}
which will be assumed to be strictly increasing in its first variable. Here, $\bar{\xi}\in\R_+$ represents the combined capacity of all generators in the market, henceforth the \textit{market capacity} (measured in MW), and $\mathbf{s}\in\R^n$ represents factors of production which drive firm bids (e.g. fuel prices).

\textit{Demand for electricity} is assumed to be price-inelastic and given exogenously by an $\mathcal{F}^0_t$-adapted process $(D_t)$ (measured in MW). As we shall see later, the prices of the factors of production used in the electricity generation process will be assumed to be $\mathcal{F}^W_t$-adapted. So under the objective historical measure $\mathbb{P}$, the demand is statistically independent of these prices. This is a reasonable assumption as power demand is typically driven predominantly by temperature, which fluctuates at a faster time scale and depends more on local or regional conditions than fuel prices. The market responds to this demand by supplying an amount $\xi_t \in [0,\bar{\xi}]$ of electricity. We assume that the market is in equilibrium with respect to the supply of and demand for electricity; i.e.
\begin{equation}\label{eq:demand_supply_equilibrium}
D_t = \xi_t, \quad \text{for } t \in [0,T]. 
\end{equation}
This implies that $D_t \in [0,\bar{\xi}]$ for $t \in [0,T]$ and $(\xi_t)$ is $\mathcal{F}^0_t$-adapted.

The \textit{market price of electricity} $(P_t)$ is now defined as the price at which the last unit that is needed to satisfy demand sells its electricity; i.e. using \eqref{eq:demand_supply_equilibrium},
\begin{equation}\label{eq:def_mp_el}
 P_t := b(D_t,\cdot), \quad \text{for } t\in [0,T].
\end{equation}

We emphasize the different roles played by the first variable (i.e. demand) and all subsequent variables (i.e. factors driving bid levels) of the bid stack function $b$. Due to the inelasticity assumption, the level of demand fully determines the quantity of electricity that is being generated; all subsequent variables merely impact the merit order arrangement of the bids.

\begin{remark}
The price setting mechanism described above applies directly to day-ahead spot prices set by uniform auctions, as in most exchanges today. However, we believe that in a competitive market with rational agents, the day-ahead auction price also serves as the key reference point for real-time and over-the-counter prices.
\end{remark}

\subsection{Mathematical Model of the Bid Stack}\label{str:bid_stack}
From the previous subsection, it is clear that the price of electricity in a structural model like the one we are proposing depends critically on the construction of the function $b$. Before we explain how this is done in the current setting, we make the following assumption about the formation of firms' bids.

\begin{assumption}\label{as:bs_cost_str}
Bids are driven by production costs. Furthermore,
\begin{enumerate}
  \item costs depend on fuel prices and firm-specific characteristics only;\label{as:bs_cost_str_1}
  \item firms' marginal costs are strictly increasing.\label{as:bs_cost_str_2}
\end{enumerate}
\end{assumption}

\begin{figure}[htbp]
  \centering
  \subfloat[Sample bid stacks.]{\label{fig:PJMstacks}\includegraphics[width=0.5\textwidth]{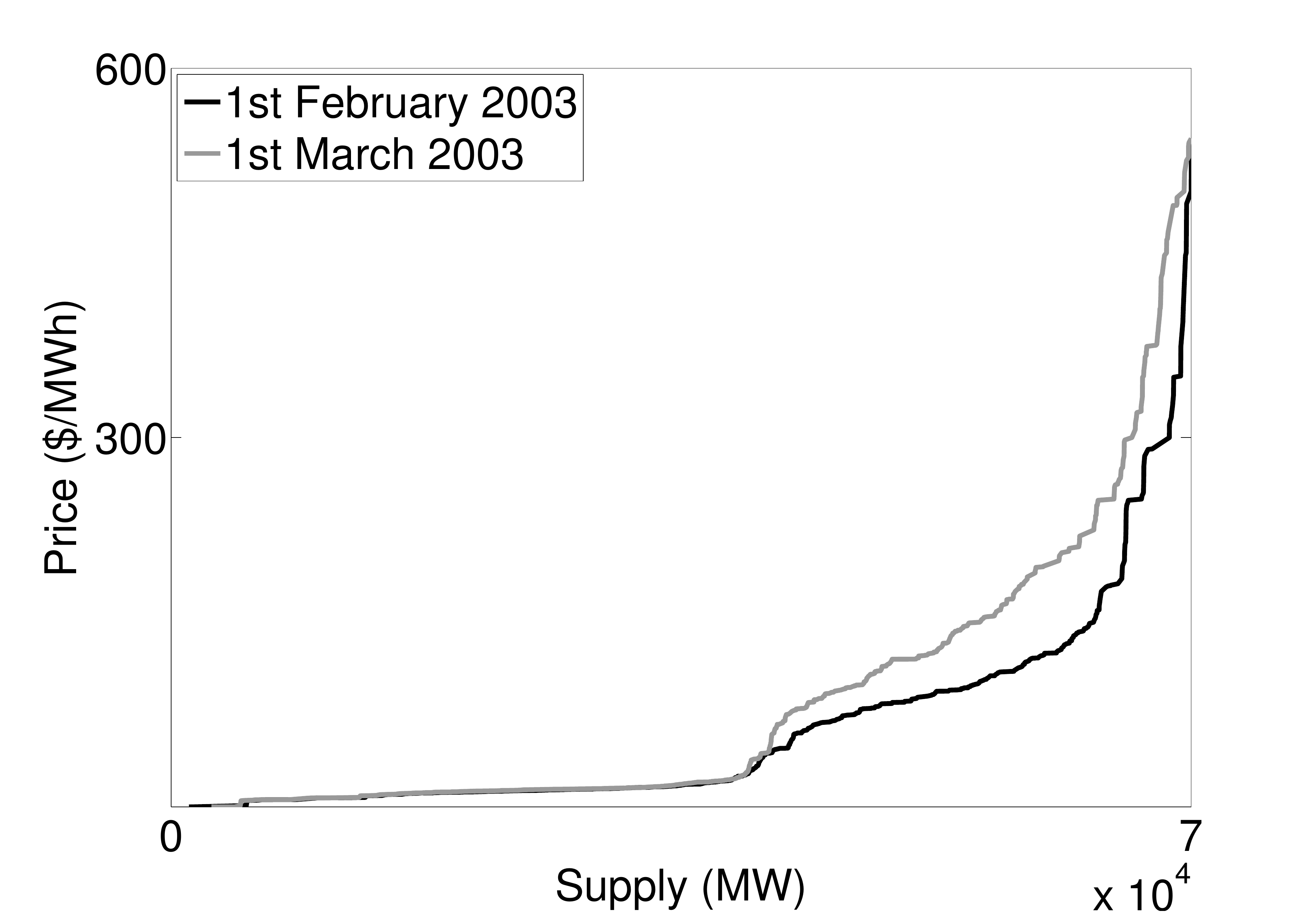}}
  \subfloat[Historical daily average prices.]{\label{fig:PJMprices}\includegraphics[width=0.5\textwidth]{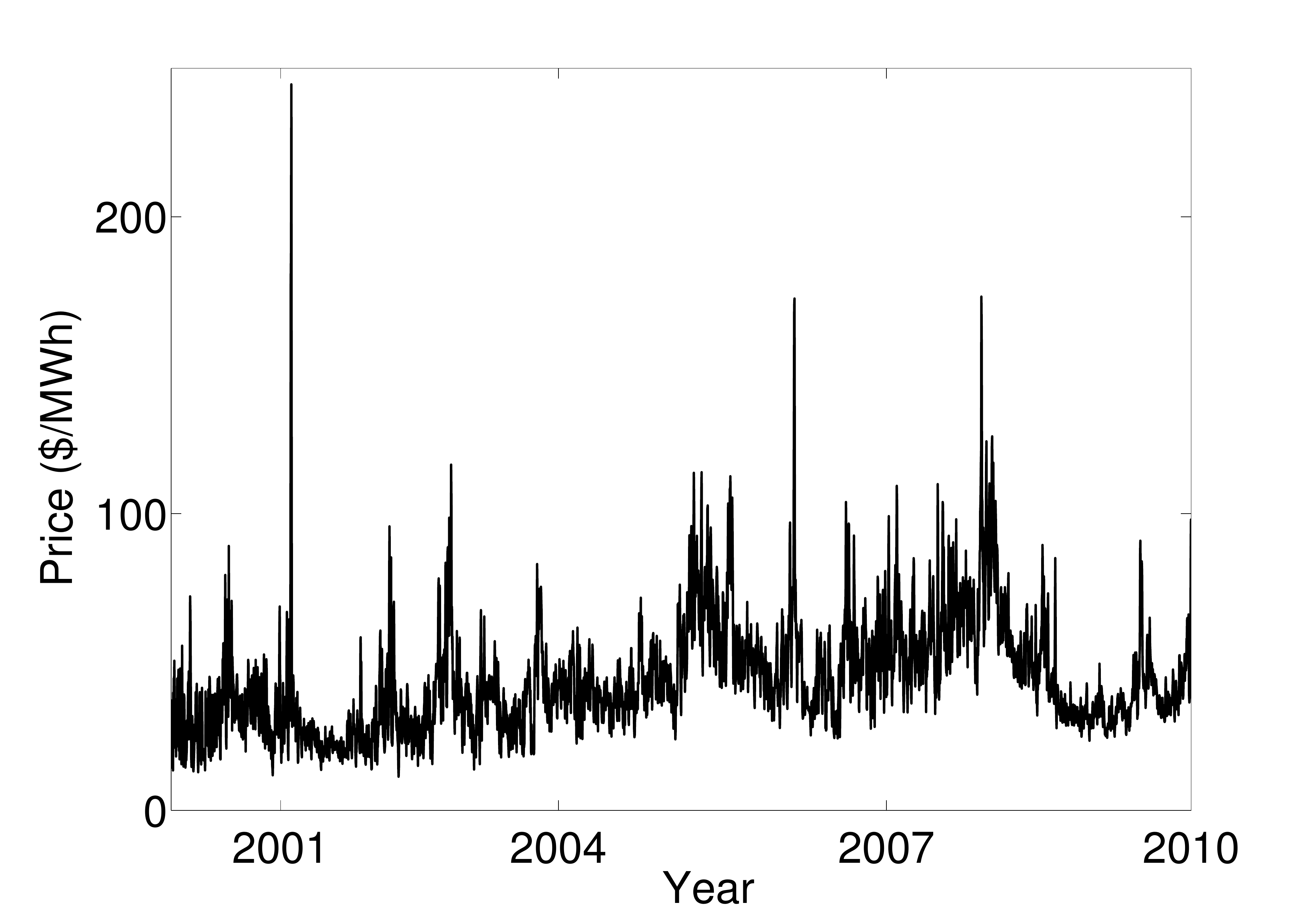}}
  \caption{Historical prices and bids from the PJM market in the North East US} 
 \label{fig:PJMdata}
\end{figure}

To back up Assumption \ref{as:bs_cost_str}, we briefly consider it in light of historic bid and spot price data. Figure \ref{fig:PJMstacks} plots the bid stack from the PJM market region in the US on the first day of two consecutive months. Firstly, rapidly increasing marginal costs lead to the steep slope of the stack near the market capacity (70,000 MW); this feature translates directly into spikes in spot prices (see Figure \ref{fig:PJMprices}) when demand is high. Between the two sample dates in February and March 2003, the prices of gas increased rapidly. In the PJM market gas fired plants have historically featured mainly in the second half of the bid stack. Therefore, it is precisely the gas price related increase in production costs, which explains the increase in bid levels observed beyond about 40,000MW in Figure \ref{fig:PJMstacks} (see also \textsection \ref{str:correlations} for a discussion of recent merit order changes).

Production costs are typically linked to a particular fuel price (e.g. coal, natural gas, lignite, oil, etc.). Furthermore, within each fuel class, the cost of production may vary significantly, for example as old generators may have a higher heat rate (lower efficiency) than new units.  It is not our aim to provide a mathematical model that explains how to aggregate individual bids or capture strategic bidding. Instead, we group together generators of the same fuel type and assume the resulting bid curve to be exogenously given and to satisfy Assumption \ref{as:bs_cost_str}. From this set of bid curves, the merit order rule then determines the construction of the market bid stack.

Let $I=\{1,\ldots,n\}$ denote the index set of all the fuels used in the market to generate electricity. We assume that their prices are the only factors influencing the bid stacks.  With each $i\in I$ we associate an $\mathcal{F}^W_t$-adapted \textit{fuel price process} $(S^i_t)$ and we define the \textit{fuel bid curve} for fuel $i$ to be a measurable function
\begin{equation*}
 b_i:[0,\bar\xi_i]\times \R\ni (\xi,s)\hookrightarrow b_i(\xi,s)\in\R,
\end{equation*}
where the argument $\xi$ represents the amount of electricity supplied by generators utilizing fuel type $i$, $s$ a possible value of the price $S^i_t$, and $\bar{\xi}^i\in\R_+$ the aggregate capacity of all the generators utilizing fuel type $i$. We assume that $b_i$ is strictly increasing in its first argument, as required by Assumption \ref{as:bs_cost_str}. Further, also for $i\in I$, let the $\mathcal{F}_t$-adapted process $(\xi^i_t)$ represent the amount of electricity supplied by generators utilizing fuel type $i$. It follows that
\begin{equation*}
\sum_{i\in I} \bar{\xi}^i = \bar{\xi},\quad \text{and} \quad D_t = \sum_{i\in I}\xi^i_t, \quad \text{for } t\in[0,T].
\end{equation*}

In order to simplify the notation below, for $i\in I$, and for each $s\in\R$ we denote by $b_i(\,\cdot\,,s)^{-1}$
the generalized (right continuous) inverse of the function $\xi\hookrightarrow b_i(\xi,s)$. Thus
$$
b_i(\,\cdot\,,s)^{-1}(p):=\bar{\xi}^i\wedge\inf\left\{\xi\in(0,\bar{\xi}^i]:\,b_i(\xi,s)>p\right\}, \quad \text{for } (p,s)\in \R \times \R,
$$
where we use the standard convention $\inf \emptyset=+\infty$. Using the notation
\begin{equation*}
\underline{b}_i(s):=b_i(0,s) \quad \text{and} \quad \bar{b}_i(s):= b_i(\bar{\xi}^i,s)
\end{equation*}
and writing $\hat b_i^{-1}(p,s)=b_i(\,\cdot\,,s)^{-1}(p)$ to ease the notation, we see that $\hat b_i^{-1}(p,s)=0$ if $p \in (-\infty,\underline{b}_i(s))$, $\hat b_i^{-1}(p,s)=\bar{\xi}^i$ if $p \in [\bar{b}_i(s),\infty)$, and $\hat b_i^{-1}(p,s)\in [0,\bar{\xi}^i]$ if $p \in [\underline{b}_i(s),\bar{b}_i(s))$. For fuel $i\in I$ at price $S^i_t=s$ and for electricity prices below $\underline{b}_i(s)$ no capacity from the $i$th technology will be available. Similarly, once all resources from a technology are exhausted, increases in the electricity price will not lead to further production units being brought online. So defined, the inverse function $\hat{b}_i^{-1}$ maps a given price of electricity and the price of fuel $i$ to the amount of electricity supplied by generators relying on this fuel type.
\begin{proposition}\label{prop:el_price}
 For a given vector $(D_t,\mathbf{S}_t)$, where $\mathbf{S}_t:=(S^1_t,\ldots,S^n_t)$, the market price of electricity $(P_t)$ is determined by
 \begin{equation}\label{eq:el_price_implicit}
  P_t = \min_{i\in I} \underline{b}_i\left(S^i_t\right) \vee \sup \left\{p  \in \R : \sum_{i\in I} \hat{b}_i^{-1}\left(p ,S^i_t\right) < D_t \right\}, \quad \text{for } t \in [0,T].
 \end{equation}
\end{proposition}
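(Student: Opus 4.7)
The plan is to rewrite the market clearing condition $P_t = b(D_t,\mathbf{S}_t)$ in terms of the aggregate supply function
\begin{equation*}
\Psi(p,\mathbf{s}) := \sum_{i\in I} \hat{b}_i^{-1}(p, s^i),
\end{equation*}
which, by the construction of the $\hat{b}_i^{-1}$, reports the total MW that the market can deliver when the electricity price is $p$ and the fuels trade at $\mathbf{s}$. Under Assumption \ref{as:merit_order} the merit-order arrangement of bids into the market stack $b$ is equivalent to saying that every generator is dispatched as soon as the electricity price exceeds its marginal cost, so the total dispatched quantity at price $p$ equals $\Psi(p,\mathbf{S}_t)$. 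Combined with the equilibrium $D_t=\xi_t$ from \eqref{eq:demand_supply_equilibrium}, this lets me characterize $P_t$ as the smallest price at which supply catches up with demand, namely $P_t = \inf\{p\in\R : \Psi(p,\mathbf{S}_t) \geq D_t\}$.

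The next step is to identify this infimum with the expression in \eqref{eq:el_price_implicit}. Each map $\hat{b}_i^{-1}(\cdot,s)$ is, as a right-continuous generalized inverse of a strictly increasing function, nondecreasing and right-continuous in $p$, and $\Psi(\cdot,\mathbf{s})$ inherits both properties as a finite sum. For any such function I plan to invoke the standard identity
\begin{equation*}
\inf\{p : \Psi(p,\mathbf{S}_t) \geq D_t\} \;=\; \sup\{p : \Psi(p,\mathbf{S}_t) < D_t\},
\end{equation*}
valid whenever the right-hand set is nonempty: the two level sets partition $\R$ into a lower interval that is open on the right (by right-continuity of $\Psi$) and an upper interval whose left endpoint is the shared value. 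For $D_t>0$, the lower set contains every $p<\min_i \underline{b}_i(S^i_t)$, since $\Psi(p,\mathbf{S}_t)=0$ there; consequently the supremum is automatically at least $\min_i\underline{b}_i(S^i_t)$ and the $\vee$ in \eqref{eq:el_price_implicit} is redundant.

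The remaining case is the boundary $D_t=0$, where $\{p : \Psi(p,\mathbf{S}_t)<0\}$ is empty and the supremum collapses to $-\infty$; the $\vee \min_i\underline{b}_i(S^i_t)$ then supplies the correct convention, namely the bottom of the stack $b(0,\mathbf{S}_t)$. The one technical point that requires care is verifying the right-continuity of each $\hat b_i^{-1}(\cdot,s)$, which follows from the strict inequality in the definition $\hat b_i^{-1}(p,s)=\bar\xi^i\wedge\inf\{\xi:b_i(\xi,s)>p\}$ together with the strict monotonicity of $b_i(\cdot,s)$ imposed in Assumption \ref{as:bs_cost_str}; everything else reduces to an unpacking of definitions.
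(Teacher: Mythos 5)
Your proof is correct and follows essentially the same route as the paper's: both identify the market price as the generalized (left-continuous) inverse of the aggregate supply function $\sum_{i\in I}\hat{b}_i^{-1}(\cdot,S^i_t)$ evaluated at $D_t$, which is exactly the displayed formula. You merely supply details the paper leaves implicit --- the right-continuity of $\hat{b}_i^{-1}(\cdot,s)$, the identity between $\inf\{p:\Psi\geq D_t\}$ and $\sup\{p:\Psi<D_t\}$, and the role of the $\vee$ term at the boundary $D_t=0$ --- so no substantive difference or gap.
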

\begin{proof}
By the definition of $\hat{b}_i^{-1}$, the function $\tilde{b}^{-1}$, defined by 
\begin{equation*}
\tilde{b}^{-1}(p,s^1,\cdots,s^n):=\sum_{i\in I}\hat{b}_i^{-1}(p,s^i),
\end{equation*}
is, when the prices of all the fuels are fixed, a non-decreasing map taking the electricity price to the corresponding amount of electricity generated by the market. Similarly to the case of one fixed fuel price,  for each fixed set of fuel prices, say $\mathbf{s}:=(s^1,\cdots,s^n)$, we define
the  bid stack function $\xi\hookrightarrow b(\xi,\mathbf{s})$ as the generalized (left continuous) inverse of the function $\xi\hookrightarrow \tilde{b}^{-1}(p,s^1,\cdots,s^n)$ defined above, namely
\begin{equation*}\label{eq:inv_bs_preimage}
b(\xi,\mathbf{s}):= \min_{i\in I}\underline{b}_i\left(s^i\right) \vee \sup \left\{p  \in \R : \sum_{i\in I} \hat{b}_i^{-1}\left(p ,s^i\right) < \xi \right\}, \quad \text{for } (\xi,\mathbf{s})\in [0,\bar{\xi}]\times \R^n,
\end{equation*}
where we use the convention $\sup \emptyset=-\infty$.

The desired result follows from the definition of the market price of electricity in \eqref{eq:def_mp_el}.
\end{proof}

\subsection{Defining a Pricing Measure in the Structural Setting}\label{str:measure}
The results presented in this paper do not depend on a particular model for the evolution of the demand for electricity and the prices of fuels. In particular, the concrete bid stack model for the electricity spot price introduced in \textsection \ref{str:exp_stack} is simply a deterministic function of the exogenously given factors under the real world measure $\mathbb{P}$. However, for the pricing of derivatives in \textsection \ref{str:forwards} and \textsection \ref{str:spreads} we need to define a pricing measure $\mathbb{Q}$ and the distribution of the random factors at maturity under this new measure will be important for the results that we obtain later.

For an $\mathcal{F}_t$-adapted process $\bm{\theta}_t$, where $\bm{\theta}_t:=(\theta^0_t,\theta^1_t,\ldots,\theta^n_t)$, a measure $\mathcal{Q}\sim\mathbb{P}$ is characterized by the Radon-Nikodym derivative
\begin{equation}\label{eq:measure_change}
 \frac{\Id \mathcal{Q}}{\Id \mathbb{P}}:=\exp\left(-\int_0^T\bm{\theta}_u\cdot\ \Id \mathbf{W}_u - \frac{1}{2}\int_0^T\left|\bm{\theta}_u\right|^2\ \Id u\right),
\end{equation}
where we assume that $(\bm{\theta}_t)$ satisfies the so-called Novikov condition
\begin{equation*}
\E\left[ \exp\left(\frac{1}{2}\int_0^T\left|\bm{\theta}_u\right|^2\ \Id u\right)\right]< \infty.
\end{equation*}
We identify $(\theta^0_t)$ and $(\theta^i_t)$ with the market prices of risk for demand and for fuel $i$ respectively.  We choose to avoid the difficulties of estimating the market price of risk (see for example \cite{aEydeland2003} for several possibilities) and instead make the following assumption.
\begin{assumption}\label{as:pricing_measure}
 The market chooses a pricing measure $\mathbb{Q}\sim\mathbb{P}$, such that
 \begin{equation*}
 \mathbb{Q}\in\left\{ \mathcal{Q} \sim \mathbb{P} : \text{all discounted prices of traded assets are local $\mathcal{Q}$-martingales}\right\}.
\end{equation*}
\end{assumption}

Note that we make no assumption regarding market completeness here. Because of the non-storability condition, certainly electricity cannot be considered a traded asset. Further, there are different approaches to modeling fuel prices; they may be treated as traded assets (hence local martingales under $\mathbb{Q}$) or --- more realistically --- assumed to exhibit mean reversion under the pricing measure. Either way, demand is a fundamental factor and the noise $(W^0_t)$ associated with it means that the joined market of fuels and electricity is bound to be incomplete. Note however, that all derivative products that we price later in the paper (forward contracts and spread options) are clearly traded assets and covered by Assumption \ref{as:pricing_measure}.
 
\section{Exponential Bid Stack Model}\label{str:exp_stack}
Equation \eqref{eq:el_price_implicit} in general cannot be solved explicitly. The reason for this is that any explicit solution essentially requires the inversion of the sum of inverses of individual fuel bid curves. 

We now propose a specific form for the individual fuel bid curves, which allows us to obtain a closed form solution for the market bid stack $b$. Here and throughout the rest of the paper, for $i\in I$, we define $b_i$ to be explicitly given by
\begin{equation}\label{eq:exp_bs}
 b_i(\xi,s):=s\exp(k_i + m_i \xi), \quad \text{for } (\xi,s)\in [0,\bar{\xi}^i]\times \R_+,
\end{equation}
where $k_i$ and $m_i$ are constants and $m_i$ is strictly positive. Note that $b_i$ clearly satisfies \eqref{as:bs_cost_str_1} and since it is strictly increasing on its domain of definition it also satisfies \eqref{as:bs_cost_str_2}.

\subsection{The Case of $n$ Fuels}\label{str:exp_n_fuel}
For observed $(D_t,\mathbf{S}_t)$, let us define the sets $M , C \subseteq I$ by
\begin{align*}
 M &:=\left\{ i \in I: \text{generators using fuel $i$ are partially used} \right\}\\
\text{and}\quad C &:=\left\{ i \in I: \text{the entire capacity $\bar{\xi}^i$ of generators using fuel $i$ is used}\right\}.
\end{align*}
A possible procedure for establishing the members of $M $ and $C $ is to order all the values of $\underline{b}_i$ and $\bar{b}_i$ and determine the corresponding cumulative amounts of electricity that are supplied at these prices. Then find where demand lies in this ordering.

With the above definition of $M $ and $C $ we arrive at the following corollary to Proposition \ref{prop:el_price}.

\begin{corollary}\label{cor:mp_el_exp_bs}
 For $b_i$ of exponential form, as defined in \eqref{eq:exp_bs}, the market price of electricity is given explicitly by the left continuous version of
\begin{equation}\label{eq:el_price_exp_n}
P_t=\left(\prod_{i\in M } \left(S^i_t\right)^{\alpha_i}\right) \exp\left\{\beta +\gamma\left(D_t-\sum_{i\in C } \bar{\xi}^i\right)\right\}, \quad \text{for } t\in[0,T],
\end{equation}
where
\begin{eqnarray*}
&&\alpha_i:= \frac{1}{\zeta}\left(\prod_{j\in M ,j\ne i} m_j\right),\quad
\beta:= \frac{1}{\zeta}\left(\sum_{l\in M } k_l\prod_{j\in M ,j\ne l}m_j\right),\\
&&\gamma:= \frac{1}{\zeta}\left(\prod_{j\in M }m_j\right)\quad\text{and}\quad\zeta:=\sum_{l\in M } \prod_{j\in M ,j\ne l} m_j.
\end{eqnarray*}
\end{corollary}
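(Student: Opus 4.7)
The plan is to plug the exponential specification \eqref{eq:exp_bs} into the general representation of $P_t$ obtained in Proposition \ref{prop:el_price} and solve the resulting equilibrium equation explicitly. First I would compute the inverses $\hat{b}_i^{-1}$ in closed form: since $\xi\mapsto s\exp(k_i+m_i\xi)$ is strictly increasing on $[0,\bar{\xi}^i]$ with endpoints $\underline{b}_i(s)=s e^{k_i}$ and $\bar{b}_i(s)=s e^{k_i+m_i\bar{\xi}^i}$, one has
\begin{equation*}
\hat{b}_i^{-1}(p,s)=\frac{1}{m_i}\bigl(\log(p/s)-k_i\bigr),\qquad p\in[\underline{b}_i(s),\bar{b}_i(s)),
\end{equation*}
with $\hat{b}_i^{-1}(p,s)=0$ below and $\hat{b}_i^{-1}(p,s)=\bar{\xi}^i$ above this interval.

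Next, fix $(D_t,\mathbf{S}_t)$ and read off the sets $M$ and $C$ defined in the statement. By construction of $M$ and $C$, a candidate equilibrium price $p$ satisfies $p\in[\underline{b}_i(S^i_t),\bar{b}_i(S^i_t))$ for every $i\in M$, $p\ge\bar{b}_i(S^i_t)$ for every $i\in C$, and $p<\underline{b}_i(S^i_t)$ for the remaining fuels. I would then plug these three cases into the equilibrium identity $\sum_{i\in I}\hat{b}_i^{-1}(p,S^i_t)=D_t$ coming from \eqref{eq:el_price_implicit}, which collapses to
\begin{equation*}
\sum_{i\in M}\frac{\log p-\log S^i_t-k_i}{m_i}=D_t-\sum_{i\in C}\bar{\xi}^i.
\end{equation*}
This is linear in $\log p$, so it admits a unique explicit solution. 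Solving and multiplying numerator and denominator through by $\prod_{j\in M}m_j$ produces exactly the constants $\alpha_i,\beta,\gamma,\zeta$ given in the statement; taking exponentials recovers \eqref{eq:el_price_exp_n}.

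Two small points then need to be addressed to turn this computation into a proof. First, I would verify that the $\min_{i\in I}\underline{b}_i(S^i_t)$ safeguard in Proposition \ref{prop:el_price} is automatically dominated by the supremum whenever $M\ne\emptyset$, since for any $i\in M$ the candidate $p$ satisfies $p\ge\underline{b}_i(S^i_t)\ge\min_{j}\underline{b}_j(S^j_t)$; the degenerate case $M=\emptyset$ (demand falling exactly at a jump between technologies) is precisely where the sup in \eqref{eq:el_price_implicit} and its ``left continuous version'' can disagree. Second, I would justify the phrase \emph{left continuous version}: the sup with strict inequality in \eqref{eq:el_price_implicit} yields a right continuous function of $D_t$, so at the countably many levels of $D_t$ where a fuel switches from $M$ to $C$ (i.e.\ where $p$ crosses some $\bar{b}_i(S^i_t)$), formula \eqref{eq:el_price_exp_n} applied with the updated $M$ and $C$ differs from the raw sup by a single value; replacing the right-continuous version with the left-continuous one restores agreement.

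The calculation itself is routine linear algebra once the correct partition $I=M\cup C\cup (I\setminus(M\cup C))$ is identified, so I expect the main obstacle to be not the algebra but the bookkeeping around these boundary cases — in particular, showing that a consistent $(M,C)$ always exists (it is determined by ordering the $\underline{b}_i(S^i_t)$ and $\bar{b}_i(S^i_t)$ as suggested in the paragraph preceding the corollary) and that the resulting $p$ genuinely lies in the intervals assumed when writing down the linear equation. Once this consistency is established, simplification of the coefficients into the form $\alpha_i,\beta,\gamma$ is purely algebraic.
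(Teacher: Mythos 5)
Your proposal is correct and follows essentially the same route as the paper: classify each fuel into $M$, $C$, or neither, reduce the equilibrium condition to $\sum_{i\in M}(\log p-\log S^i_t-k_i)/m_i = D_t-\sum_{i\in C}\bar{\xi}^i$, and solve the resulting linear equation in $\log p$ --- which is precisely the paper's ``replace $I$ with $M$, move $\sum_{i\in C}\bar{\xi}^i$ to the right-hand side, and invert the single $\log$ function.'' The only slip is in your side remark on continuity: the supremum over $\{p:\sum_{i}\hat{b}_i^{-1}(p,S^i_t)<D_t\}$ is already \emph{left}-continuous in $D_t$ (the constraint set is an increasing union as $D_t\uparrow$), so the ``left continuous version'' caveat in the statement serves to disambiguate formula \eqref{eq:el_price_exp_n} at the demand levels where the sets $M$ and $C$ switch, rather than to correct a right-continuous supremum.
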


\begin{proof}
At any time $t\in [0,T]$ the electricity price depends on the composition of the sets $M $ and $C $; i.e. the current set of marginal and fully utilized fuel types.
 
For $i \in M $, $\hat{b}_i^{-1} = b_i^{-1}$, for $i \in C $, $\hat{b}_i^{-1} = \bar{\xi}^i$ and for $i \in I\setminus \{ M  \cup C  \}$, $\hat{b}_i^{-1} = 0$. Therefore, inside the supremum in \eqref{eq:el_price_implicit}, we replace $I$ with $M$ and take $\sum_{i\in C }\bar{\xi}^i$ to the right hand side. By Proposition \ref{prop:el_price} the electricity price is given by the left continuous inverse of the function $\sum_{i\in M }\hat{b}_i^{-1}$, which in the exponential case under consideration, simplifies to a single $\log$ function and yields \eqref{eq:el_price_exp_n}.
\end{proof}

\begin{figure}[htbp]
  \centering
  \subfloat[Fuel bid curves $b_i$.]{\label{fig:sec2_fuel_bs}\includegraphics[width=0.5\textwidth]{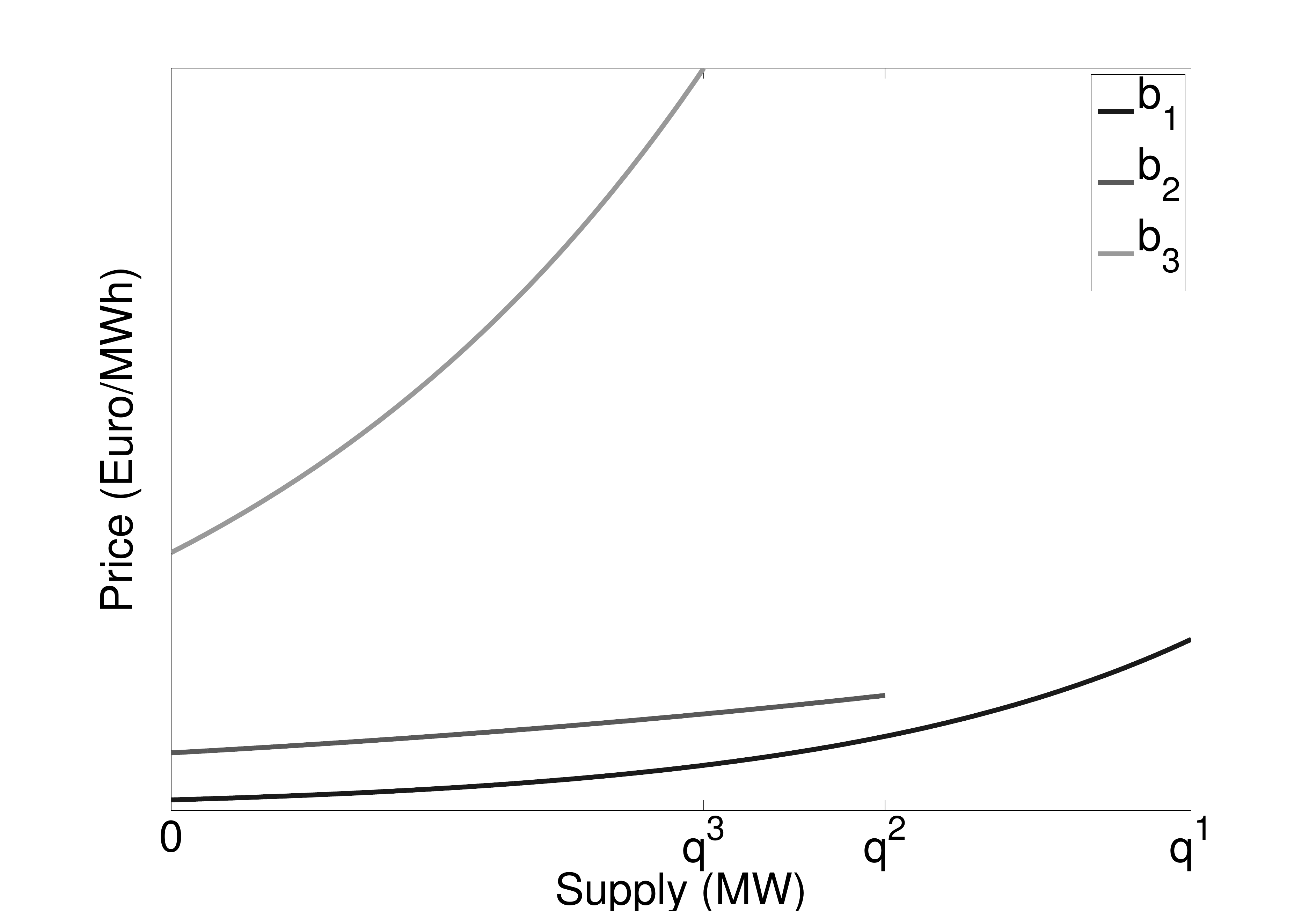}}
  \subfloat[Market bid stack $b$.]{\label{fig:sec2_market_bs}\includegraphics[width=0.5\textwidth]{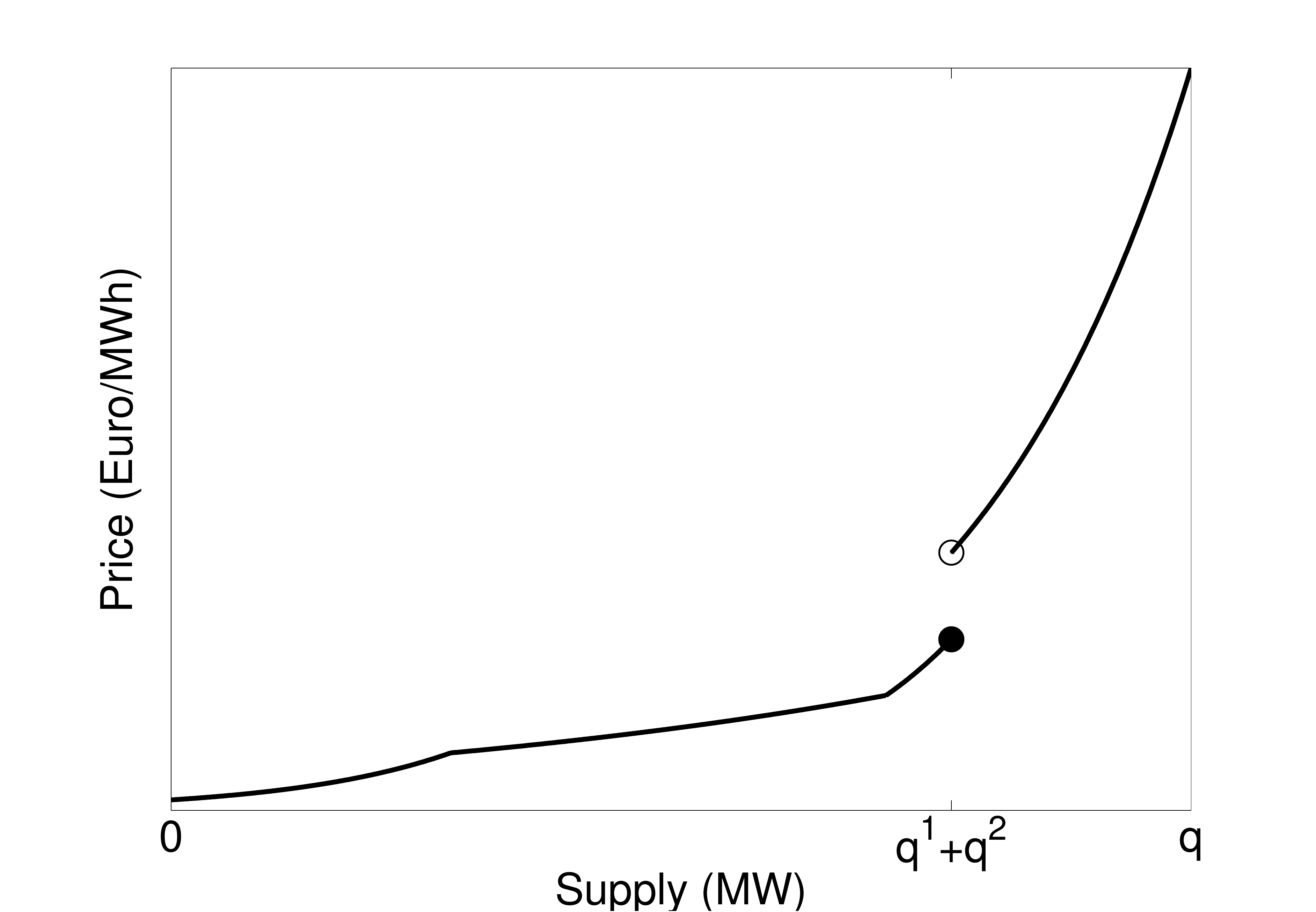}}
  \caption{Example of fuel bid curves and resulting market bid stack for $I:=\{1,2,3\},\ q:=\bar{\xi}$}
 \label{fig:sec2_bs}
\end{figure}

It is clear from equation \eqref{eq:el_price_exp_n} that the number of possible expressions for the electricity price is fully determined by the different configurations the sets $M $ and $C $ can take. In fact, fluctuations in demand and fuel prices can lead to 
\begin{equation}\label{NFuelPermutations}
 \sum_{i=1}^n \left(\begin{array}{c} n \\ i \end{array}\right) \left[ \sum_{j=0}^{n-i} \left(\begin{array}{c} n-i \\ j \end{array}\right) \right]
\end{equation}
distinct cases for \eqref{eq:el_price_exp_n}. Nonetheless, the market bid stack is always a piece-wise exponential function of demand (see Figure \ref{fig:sec2_bs}) with constantly evolving shape as fuel prices move. This captures in a very tractable way the influence of fuel prices on the merit order and resulting power price dynamics.

\subsection{The Case of Two Fuels}\label{str:exp_two_fuel}
For the remainder of the paper, we restrict our attention to the case of a two-fuel market, consisting of coal and natural gas generators. Our results can in principle be extended to the general case of $n>2$ fuels. However, the level of complexity of the formulas increases rapidly, as evidenced by the number of possible expressions given in \eqref{NFuelPermutations}. We also choose to omit the analysis of the one fuel case, which leads to far simpler expressions throughout, but cannot lead to merit order changes. From now on,  we set $I:=\{c,g\}$ and carry over all notation introduced in \textsection \ref{str:structural_approach} and \textsection \ref{str:exp_stack} so far.

From \eqref{NFuelPermutations} we know that there are five possible expressions for the electricity spot price. We list them in Table \ref{tbl:el_price_2fuel}.  Note that fixing $D_t$ reduces this list to some subset of three, each of which --- depending on the state of $\mathbf{S}_t$ --- can set the electricity price. (A similar reduction to three expressions occurs by fixing $\mathbf{S}_t$.) We exploit this property to write formula \eqref{eq:el_price_exp_n} in a form more amenable to calculations, identifying all cases explicitly.
\begin{table}[tbp]
 \begin{center}
  \begin{tabular}{cccc}
    \toprule
    \multirow{2}{*}{$P_t$, for $t\in[0,T]$} & \multirow{2}{*}{Criterion} & \multicolumn{2}{c}{Composition of} \\
     && $M $ & $C $\\
    \midrule
    $S^c_t\exp\left(k_c+m_cD_t\right)$ 				     & $b_c\left(D_t,S^c_t\right)\leq \underline{b}_g\left(S^g_t\right)$ & $\{c\}$ & $\{\varnothing\}$\\
    $S^g_t\exp\left(k_g+m_gD_t\right)$ 				 & $b_g\left(D_t,S^g_t\right)\leq \underline{b}_c\left(S^c_t\right)$ & $\{g\}$ & $\{\varnothing\}$ \\
    $S^c_t\exp\left(k_c+m_c\left(D_t-\bar{\xi}^g\right)\right)$ & $b_c\left(D_t-\bar{\xi}^g,S^c_t\right) > \bar{b}_g\left(S^g_t\right)$ & $\{c\}$ & $\{g\}$ \\
    $S^g_t\exp\left(k_g+m_g\left(D_t-\bar{\xi}^c\right)\right)$ & $b_g\left(D_t-\bar{\xi}^c,S^g_t\right) > \bar{b}_c\left(S^c_t\right)$ & $\{g\}$ & $\{c\}$ \\
    $\left(S^c_t\right)^{\alpha_c}\left(S^g_t\right)^{\alpha_g}\exp\left(\beta + \gamma D_t\right)$ & otherwise & $\{c,g\}$ & $\{\varnothing\}$ \\
   \bottomrule
  \end{tabular}
 \end{center}
 \caption{Distinct cases for the electricity price \eqref{eq:el_price_exp_n} in the two fuel case}  \label{tbl:el_price_2fuel}
\end{table}
To simplify the presentation in the text below, we define
\begin{equation*}
 b_{cg}\left(\xi,\mathbf{s}\right):=\left(s^c\right)^{\alpha_c}\left(s^g\right)^{\alpha_g}\exp\left(\beta + \gamma \xi \right), \quad \text{for } \left(\xi,\mathbf{s}\right)\in [0,\bar{\xi}]\times\R_+^2,
\end{equation*}
where $\alpha_c$, $\alpha_g$, $\beta$ and $\gamma$ are defined in Corollary \ref{cor:mp_el_exp_bs} and simplify for two fuels to
\begin{equation*}
\alpha_c=\frac{m_g}{m_c+m_g},\quad\alpha_g=1-\alpha_c=\frac{m_c}{m_c+m_g},\quad\beta=\frac{k_c m_g+k_gm_c}{m_c+m_g},\quad\gamma=\frac{m_c m_g}{m_c+m_g}.
\end{equation*}
 Further, we set $i_{-}:=\argmin \left\{\bar{\xi}^c,\bar{\xi}^g\right\}$ and $ i_{+}:=\argmax\left\{\bar{\xi}^c,\bar{\xi}^g\right\}$.

\begin{corollary}\label{cor:el_price_two_fuels}
With $I:=\{c,g\}$, for $t\in[0,T]$, the electricity spot price is given by
\begin{equation*}
 P_t = b_{\text{low}}\left(D_t,\mathbf{S}_t\right)\mathbb{I}_{\left[0,\bar{\xi}^{i_-}\right]}(D_t) + b_{\text{mid}}\left(D_t,\mathbf{S}_t\right)\mathbb{I}_{\left(\bar{\xi}^{i_-},\bar{\xi}^{i_+}\right]}(D_t) + b_{\text{high}}\left(D_t,\mathbf{S}_t\right)\mathbb{I}_{\left(\bar{\xi}^{i_+},\bar{\xi}\right]}(D_t),
\end{equation*}
 where, for $(\xi,\mathbf{s})\in [0,\bar{\xi}]\times\R_+^2$,
\begin{multline*}
b_{\text{low}}\left(\xi,\mathbf{s}\right) := b_c\left(\xi,s^c\right)\mathbb{I}_{\left\{b_c\left(\xi,s^c\right)< \underline{b}_g\left(s^g\right)\right\}} +  b_g\left(\xi,s^g\right)\mathbb{I}_{\left\{b_g\left(\xi,s^g\right)<\underline{b}_c\left(s^c\right)\right\}}\\
+ b_{cg}\left(\xi,\mathbf{s}\right)\mathbb{I}_{\left\{b_c\left(\xi,s^c\right) \geq \underline{b}_g\left(s^g\right), b_g\left(\xi,s^g\right) \geq \underline{b}_c\left(s^c\right)\right\}},
\end{multline*}
\begin{multline*}
b_{\text{mid}}\left(\xi,\mathbf{s}\right) := b_{i_+}\left(\xi,s^{i_+}\right)\mathbb{I}_{\left\{b_{i_+}\left(\xi,s^{i_+}\right)< \underline{b}_{i_-}\left(s^{i_-}\right)\right\}} + b_{i_+}\left(\xi-\bar{\xi}^{i_-},s^{i_+}\right)\mathbb{I}_{\left\{b_{i_+}\left(\xi-\bar{\xi}^{i_+},s^{i_+}\right)>\bar{b}_{i_-}\left(s^{i_-}\right)\right\}}\\
+b_{cg}\left(\xi,\mathbf{s}\right)\mathbb{I}_{\left\{b_{i_+}\left(\xi,s^{i_+}\right) \geq \underline{b}_{i_-}\left(s^{i_-}\right), b_{i_+}\left(\xi-\bar{\xi}^{i_-},s^{i_+}\right) \leq \bar{b}_{i_-}\left(s^{i_-}\right)\right\}},
\end{multline*}
\begin{multline*}
b_{\text{high}}\left(\xi,\mathbf{s}\right) := b_c\left(\xi-\bar{\xi}^g,s^c\right)\mathbb{I}_{\left\{b_c\left(\xi-\bar{\xi}^g,s^c\right)>\bar{b}_g\left(s^g\right)\right\}}+b_g\left(\xi-\bar{\xi}^c,s^g\right)\mathbb{I}_{\left\{b_g\left(\xi-\bar{\xi}^c,s^g\right)>\bar{b}_c\left(s^c\right)\right\}}\\
+b_{cg}\left(\xi,\mathbf{s}\right)\mathbb{I}_{\left\{b_c\left(\xi-\bar{\xi}^g,s^c\right) \leq \bar{b}_g\left(s^g\right),b_g\left(\xi-\bar{\xi}^c,s^g\right) \leq \bar{b}_c\left(s^c\right)\right\}}.
\end{multline*}
\end{corollary}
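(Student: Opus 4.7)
The plan is to reduce the statement to Corollary~\ref{cor:mp_el_exp_bs} together with Table~\ref{tbl:el_price_2fuel}: the former tells us that, once the sets $M$ and $C$ of marginal and fully-used fuels are identified, the spot price is given by a single explicit exponential expression; the latter enumerates the five configurations of $(M,C)$ that can occur when $I=\{c,g\}$. What remains is to argue that, for each value of $(D_t,\mathbf{S}_t)$, exactly one configuration is realised, and that the indicators that appear in $b_{\mathrm{low}}$, $b_{\mathrm{mid}}$ and $b_{\mathrm{high}}$ correctly select it.

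First I would partition $[0,\bar{\xi}]$ into the three intervals $[0,\bar{\xi}^{i_-}]$, $(\bar{\xi}^{i_-},\bar{\xi}^{i_+}]$ and $(\bar{\xi}^{i_+},\bar{\xi}]$ and identify, in each of them, which of the five rows of Table~\ref{tbl:el_price_2fuel} are feasible on capacity grounds. For $D_t\le\bar{\xi}^{i_-}$ no fuel can be strictly exhausted, so $C=\varnothing$ and only the three configurations $M\in\{\{c\},\{g\},\{c,g\}\}$ are possible, which is precisely the three-term structure of $b_{\mathrm{low}}$. For $D_t\in(\bar{\xi}^{i_-},\bar{\xi}^{i_+}]$, the larger-capacity fuel $i_+$ must be producing positive power, so a single-fuel regime can only involve $i_+$; the feasible configurations are $(M,C)=(\{i_+\},\varnothing)$, $(\{i_+\},\{i_-\})$ and $(\{c,g\},\varnothing)$, matching the three summands in $b_{\mathrm{mid}}$. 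For $D_t>\bar{\xi}^{i_+}$ no fuel suffices alone, which forces the non-marginal fuel to be fully used; this leaves the configurations $(\{c\},\{g\})$, $(\{g\},\{c\})$ and $(\{c,g\},\varnothing)$, the three summands in $b_{\mathrm{high}}$.

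Next I would translate the criterion column of Table~\ref{tbl:el_price_2fuel} into the indicators in the statement. The single-fuel and fully-capped rows transcribe verbatim, while the ``otherwise'' row must be shown to coincide, within each demand interval, with the conjunction of the negations of the other two criteria. Because each $b_i$ is strictly increasing in its first argument, these negations have the clean form displayed in the indicator of the $b_{cg}$ summand, e.g.\ $b_c(\xi,s^c)\ge\underline{b}_g(s^g)$ together with $b_g(\xi,s^g)\ge\underline{b}_c(s^c)$ in the low region. Summing the three pieces over the three demand intervals then reproduces the corollary.

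I expect the boundary bookkeeping to be the one genuinely delicate step: at $D_t=\bar{\xi}^{i_-}$ or $\bar{\xi}^{i_+}$, and at parameter values where one of the Table~\ref{tbl:el_price_2fuel} criteria holds with equality, two configurations coincide and the indicators can in principle overlap. The resolution is to invoke the left-continuous convention already used in Corollary~\ref{cor:mp_el_exp_bs}, which unambiguously assigns the boundary price and ensures that the three pieces $b_{\mathrm{low}}$, $b_{\mathrm{mid}}$, $b_{\mathrm{high}}$ agree where their domains meet. Once this convention is consistently applied, the case analysis above yields the claimed identity.
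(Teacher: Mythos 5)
Your proposal is correct and follows essentially the same route as the paper, whose proof is a single sentence stating that the three expressions are obtained from \eqref{eq:el_price_exp_n} by fixing $D_t$ in the three demand intervals and enumerating the scenarios for $M$ and $C$. Your write-up simply fills in the details of that case analysis (including the boundary/left-continuity bookkeeping, which the paper leaves implicit), so there is no substantive difference in approach.
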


\begin{proof}
 The expressions for $b_{\text{low}}$, $b_{\text{mid}}$, $b_{\text{high}}$ are obtained from \eqref{eq:el_price_exp_n} by fixing $D_t$ in the intervals $(0,\bar{\xi}^{i_-}]$, $(\bar{\xi}^{i_-},\bar{\xi}^{i_+}]$, $(\bar{\xi}^{i_+},\bar{\xi}]$ respectively and considering the different scenarios for $M $ and $C $.
\end{proof}

\begin{figure}[htbp]
  \centering
  \subfloat[Surface plots of $b_{\text{low}}$, $b_{\text{mid}}$ and $b_{\text{high}}$]{\label{fig:sec2_bHigh}\includegraphics[width=0.5\textwidth]{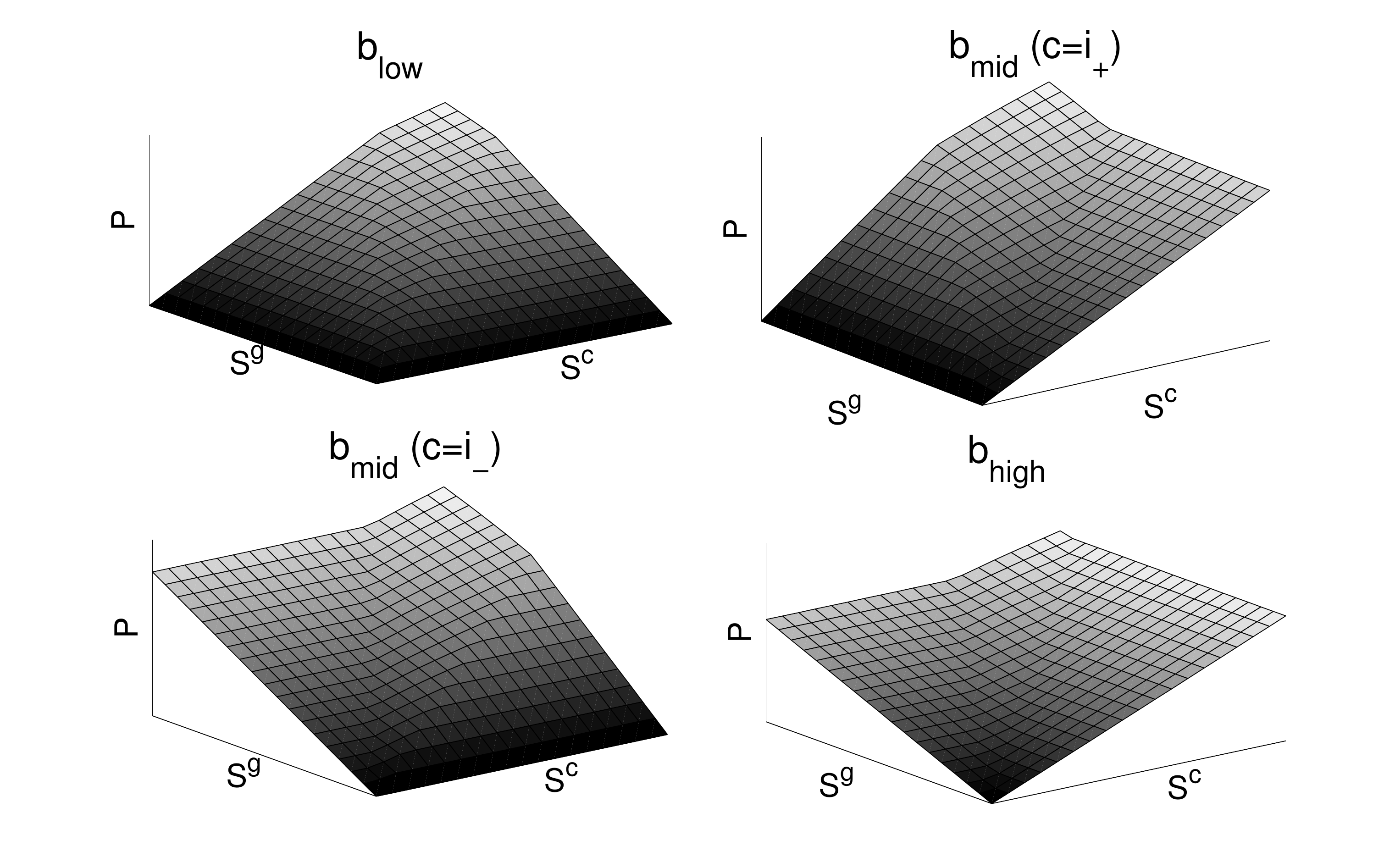}}
  \subfloat[$b_{\text{low}}$, $b_{\text{mid}}$, $b_{\text{high}}$ for fixed $S^c$ ($\bar{\xi}^c>\bar{\xi}^g$).]{\label{regionsplot}\includegraphics[width=0.5\textwidth]{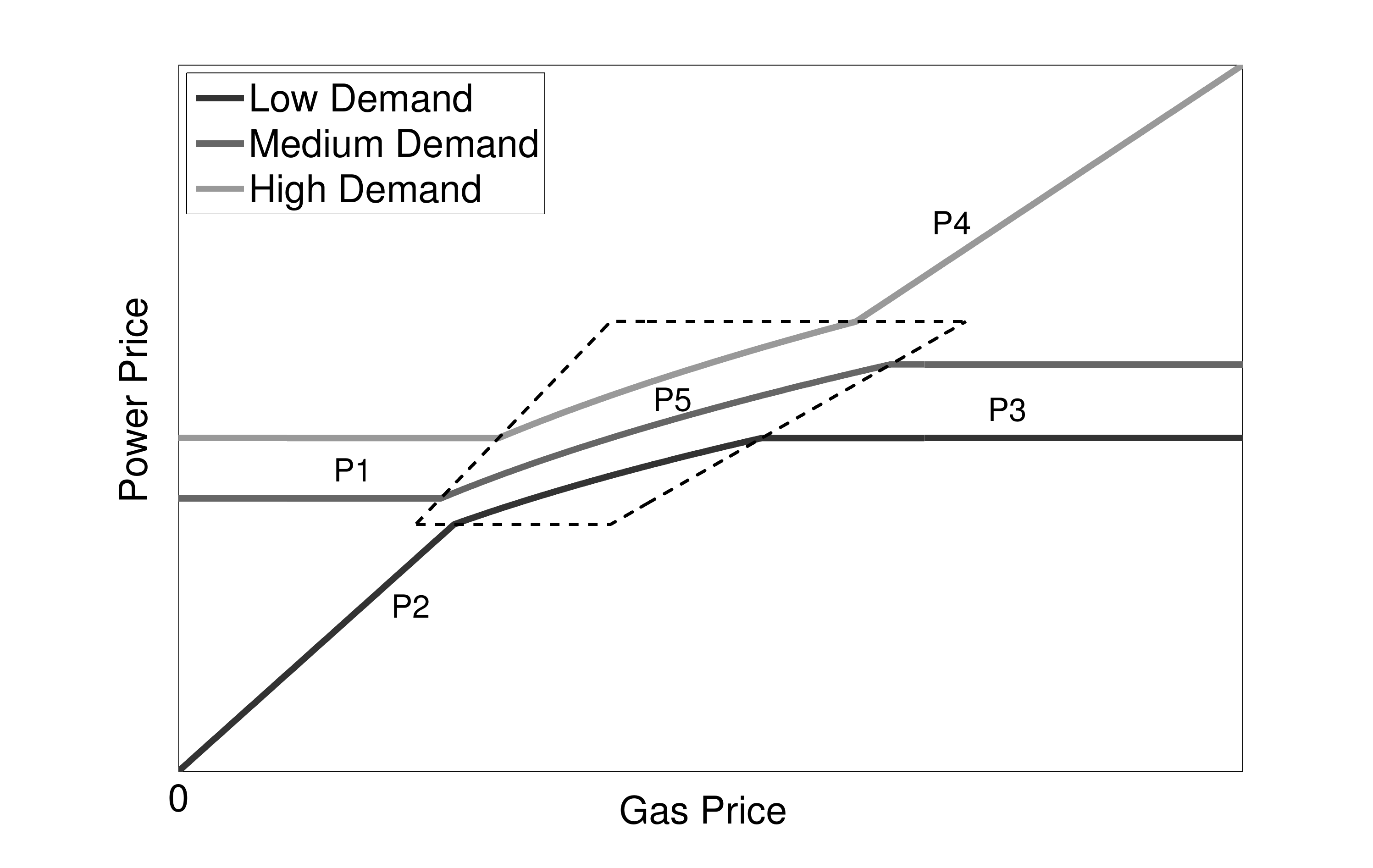}}
\caption{Illustration of the dependence of power spot price on fuel prices and demand}
 \label{fig:sec2_bLowMidHigh}
\end{figure}

Figure \ref{fig:sec2_bLowMidHigh} illustrates the dependence of the power price on the input factors, namely the prices of coal and gas and the demand for electricity. For the purpose of these plots we assumed throughout that $\bar{\xi}^c>\bar{\xi}^g$. The surfaces in Figure \ref{fig:sec2_bHigh} represent the functions $b_{\text{low}}$, $b_{\text{mid}}$ and $b_{\text{high}}$ for demand fixed in the corresponding intervals. In Figure \ref{regionsplot} we aggregate the information contained in Figure \ref{fig:sec2_bHigh} for a fixed coal price; i.e. we compare the dependency of the electricity price on the gas price for each of the three relevant demand levels. In all three cases electricity is non-decreasing in fuel price and is constant against $S^i_t$ if fuel $i$ is not at the margin (i.e., $i\notin M $).  Furthermore, in each case $P_t$ is linear in the sole marginal fuel for sufficiently low or high gas price, and non-linear in both fuels in the quadrilateral in the middle. This characterises the region of coal-gas overlap, where both technologies jointly set the price.  Finally, note that labels P1 to P5 indicate regions corresponding to rows one to five of Table \ref{tbl:el_price_2fuel}.

\subsection{Extension to Capture Spikes and Negative Prices}\label{str:spikes}
In this section we suggest a simple extension of the bid stack model in order to more accurately capture the spot price density in markets that are prone to dramatic price spikes during peak hours or sudden negative prices at off-peak times. Importantly, this modification does not impact the availability of closed-form solutions for forwards or spread options, which we introduce in \textsection \ref{str:forwards} and \textsection \ref{str:spreads}.

Let $(X_t)$ be a stochastic process adpated to the filtration $\mathcal{F}_t^0$, as is $D_t$. Further, we assume that the relationship between the two processes satisfies
\begin{equation*}
 \{X_t \leq 0\} = \{\omega\in\Omega : D_t = 0\} \quad \text{and} \quad \{X_t \geq 0\} = \left\{D_t = \bar{\xi}\right\}.
\end{equation*}
The difference between $(D_t)$ and $(X_t)$ is that $(D_t)$ is restricted to take values in $[0,\bar{\xi}]$ only, whereas the process $(X_t)$ can potentially take values on the entire real line.

In the event that demand hits zero or $\bar{\xi}$, we say that the market is in a \textit{negative price regime} (for which negative prices are possible, but not guaranteed) or a \textit{spike regime} and we redefine the electricity price at these points to be given respectively by
\begin{align*}
 b_n\left(x,\mathbf{s}\right)&:= b(0,\mathbf{s}) - \exp(-m_n x) + 1, &&\text{for }  (x,\mathbf{s})\in (-\infty,0] \times \R_+^2,\\
 \text{and}\quad b_s\left(x,\mathbf{s}\right)&:=  b\left(\bar{\xi},\mathbf{s}\right) + \exp\left(m_s\left(x-\bar{\xi}\right)\right) - 1, &&\text{for }  (x,\mathbf{s})\in \left[\bar{\xi},\infty\right) \times \R_+^2.
\end{align*}
Under this extension, the power price expression from Corollary \ref{cor:el_price_two_fuels} is therefore replaced by
\begin{multline*}
 \hat{P}_t := b_n\left(X_t,\mathbf{S}_t\right)\mathbb{I}_{\{0\}}(D_t) + b_{\text{low}}\left(D_t,\mathbf{S}_t\right)\mathbb{I}_{\left(0,\bar{\xi}^i\right]}(D_t)\\
 + b_{\text{mid}}\left(D_t,\mathbf{S}_t\right)\mathbb{I}_{\left(\bar{\xi}^i,\bar{\xi}^j\right]}(D_t) + b_{\text{high}}\left(D_t,\mathbf{S}_t\right)\mathbb{I}_{\left(\bar{\xi}^j,\bar{\xi}\right)}(D_t) + b_s\left(X_t,\mathbf{S}_t\right) \mathbb{I}_{\left\{\bar{\xi}\right\}}(D_t),
\end{multline*}
where $b_{\text{low}}$, $b_{\text{mid}}$, $b_{\text{high}}$ are defined in Corollary \ref{cor:el_price_two_fuels}.

Note that the constants $m_n, m_s>0$ determine how volatile prices are in these two regimes. In such cases, the price of electricity may now be interpreted as being set by a thin tail of miscellaneous bids, which correspond to no particular technology. Therefore, the difference between the electricity price implied by the bid stack and that defined by the negative price or spike regime is independent of fuel prices.

\begin{remark}
It is possible to generate realistic spikes even in the base model (without the inclusion of the spike regime), simply by choosing one of the exponential fuel bid curves to be very steep (large $m_i$). However, this would come at the expense of realistically capturing changes in the merit order, as it artificially stretches the bids associated with that technology.
\end{remark}

Figure \ref{fig:sec2_ppsim} displays the electricity price through time as generated by the stack model for three different choices of parameters, for the same scenario. In Figure \ref{fig:sec2_ppsim2} we show a typical price path in the case that $m_c$, $m_g$, $m_s$ and $m_n$ are very small. This corresponds to a step function bid stack and has been suggested by Aid \emph{et al} \cite{rAid2009}. Clearly, the prices do not exhibit enough variation to match observed time series. The solid line in Figure \ref{fig:sec2_ppsim1} corresponds to more realistic values of $m_c$ and $m_g$; the dashed line illustrates the modification of this path due to the choice of larger values for $m_s$ and $m_n$. Both paths capture the stylized facts of electricity price time series reasonably well. For the purpose of this simulation the prices of coal and gas have been modeled as exponential Ornstein-Uhlenbeck (OU) processes, and demand as an OU process with seasonality truncated at zero and $\bar{\xi}$ (with $(X_t)$ its untruncated version). However, the choice of model for these factors is secondary at this stage, as we are emphasizing the consequences of our choice for the bid stack itself.

\begin{figure}[tbp]
  \centering
  \subfloat[$P_t$ and $\hat{P}_t$ with $m_c,m_g>0$.]{\label{fig:sec2_ppsim1}\includegraphics[width=0.5\textwidth]{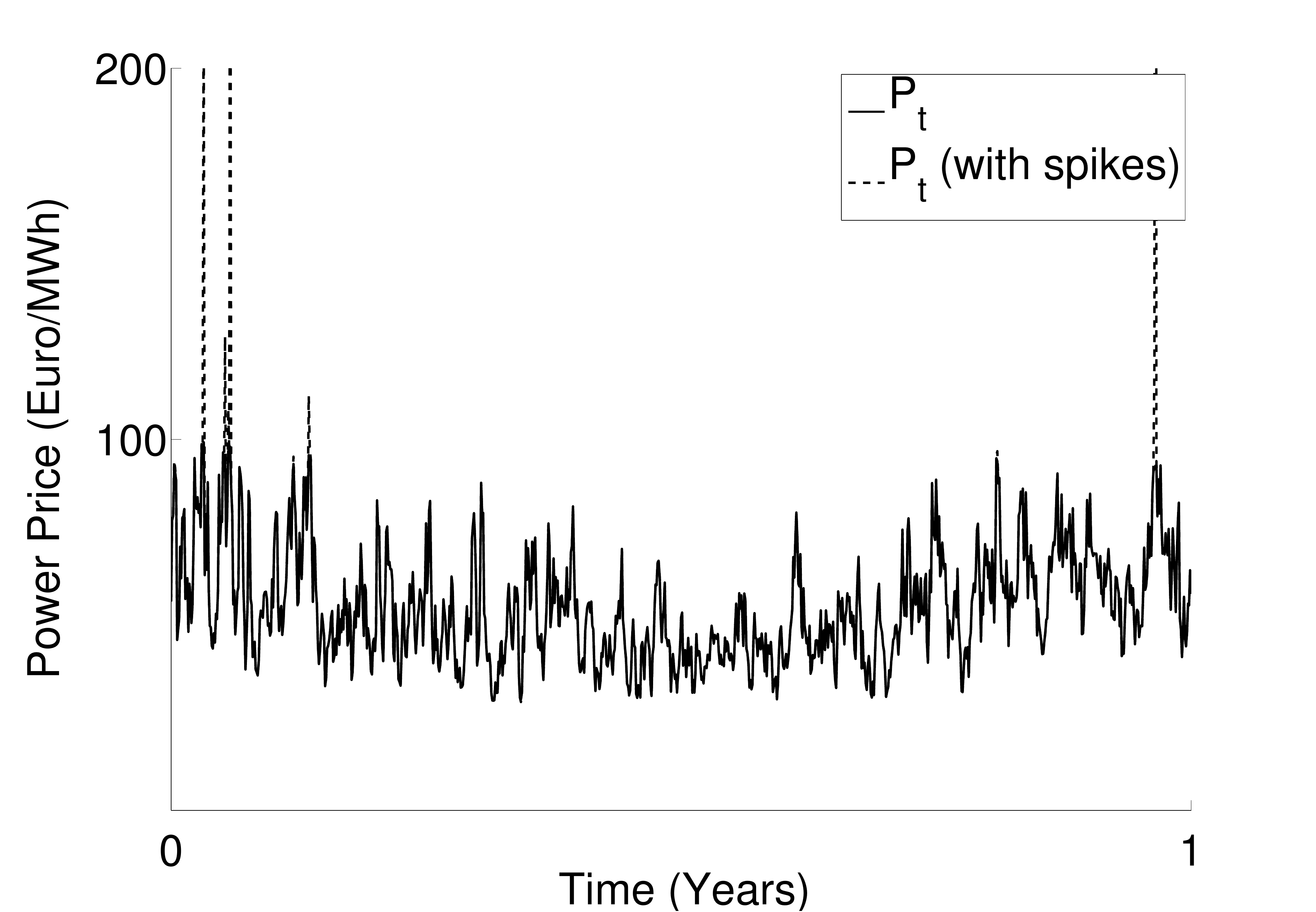}}
  \subfloat[$P_t$ with $m_c,m_g$ small.]{\label{fig:sec2_ppsim2}\includegraphics[width=0.5\textwidth]{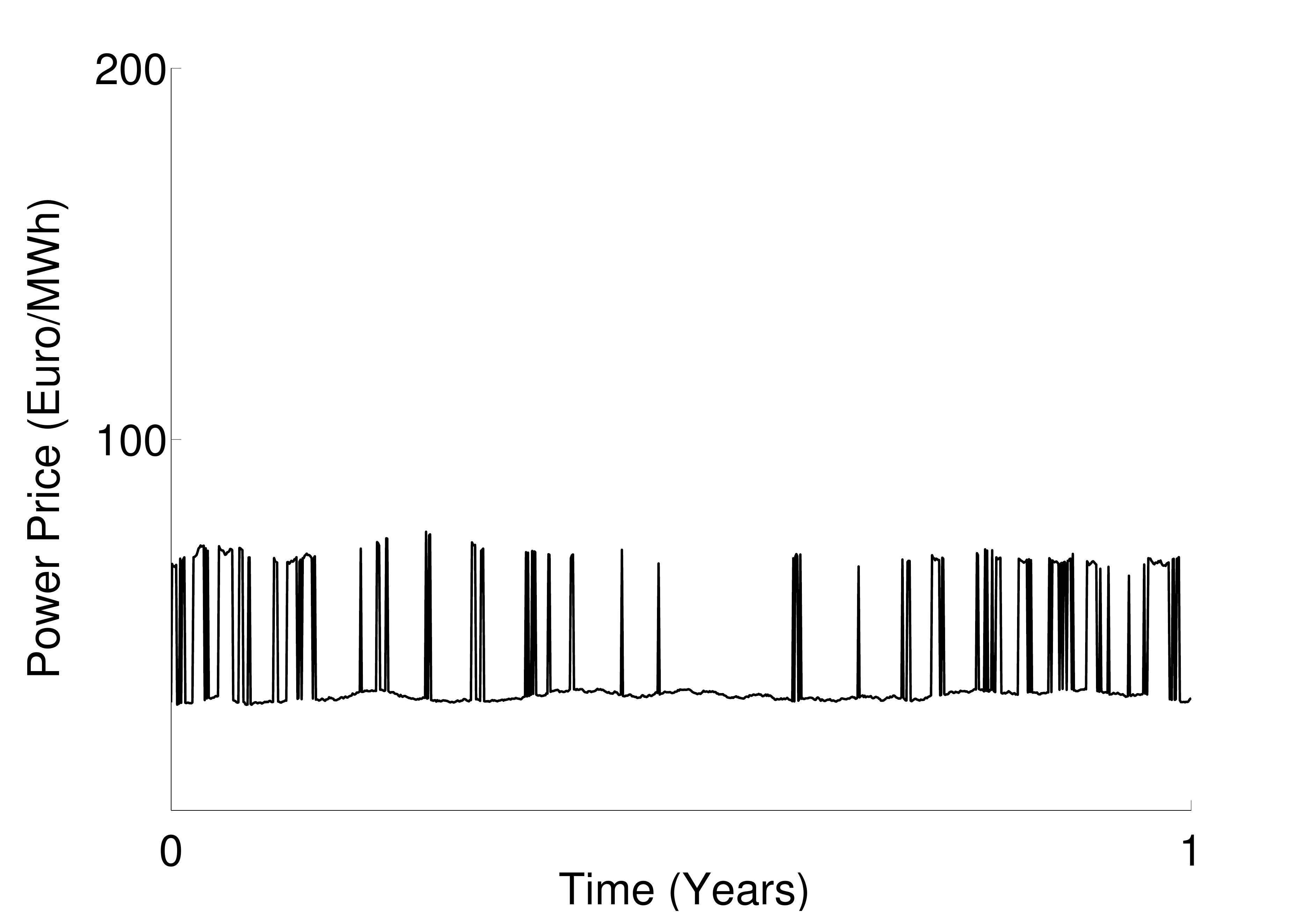}}
  \caption{Simulation of the power price for typical parameters}
 \label{fig:sec2_ppsim}
\end{figure}

\section{Forward Contracts}
\label{str:forwards}
We now turn to the analysis of forward contracts in our structural framework. For the sake of simplicity, we ignore delivery periods and suggest that $T$ be considered as a representative date in a typical monthly delivery period (see \cite{fBenth2008} for more on handling delivery periods). For the purpose of the present discussion, a \textit{forward contract} with maturity $T$ is defined by the payoff
\begin{equation*}
 P_{T} - F_t^p,
\end{equation*}
where $F_t^p$ is the delivery price agreed at the initial date $t$, and paid by the holder of the long position at $T$. Simple arbitrage arguments (cf. \cite{mMusiela2009}) imply that
\begin{equation*}
 F_t^p = \E^\mathbb{Q}\left[P_T|\mathcal{F}_t\right].
\end{equation*}
The result of Corollary \ref{cor:el_price_two_fuels} shows that the payoff of the forward is a function of demand and fuels, so that the electricity forward contract becomes a derivative on fuel prices and demand.
  
\subsection{Closed Form Forward Prices}\label{str:forwards_formulae}
For the explicit calculation of forward prices the following property of Gaussian densities will be useful (see \cite{rGeske1979} for an application of this result to pricing compound options). Let $\varphi_1$ denote the density of the standard univariate Gaussian distribution, and $\Phi_1(\cdot)$ and $\Phi_2(\cdot,\cdot;\rho)$ the cumulative distribution functions (cdfs) of the univariate and bivariate (correlation $\rho$) standard Gaussian distributions respectively.
\begin{lemma}\label{lem:uni_bi_gaussian}
The following relationship holds between $\varphi_1$, $\Phi_1$ and $\Phi_2$:
\begin{multline}\label{eq:uni_bi_gaussian}
\int_{-\infty}^{a}\exp\left(l_1+q_1x\right)\varphi_1(x)\Phi_{1}(l_2+q_2x)\ \Id x
= \exp\left(l_1+\frac{q_1^2}{2}\right)\Phi_2\left(a-q_1,\frac{l_2+q_1q_2}{\sqrt{1+q_2^2}};\frac{-q_2}{\sqrt{1+q_2^2}}\right),
\end{multline}
for all $l_1,l_2,q_1,q_2\in\R$ and $a\in\R\cup\{\infty\}$.
\end{lemma}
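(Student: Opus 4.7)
The plan is to rewrite the left-hand side as a joint probability for two independent standard Gaussian variables and then diagonalize to expose a bivariate normal cdf with the correct correlation. The appearance of $\exp(l_1 + q_1^2/2)$ on the right strongly suggests a completion-of-squares step, and the shift $a \mapsto a - q_1$ suggests a translation in the integration variable. Together these fix the form of the first two steps; the rest is a linear change of variables.

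First I would combine $\exp(l_1 + q_1 x)\varphi_1(x)$ into a single Gaussian by completing the square: this equals $\exp(l_1 + q_1^2/2)\varphi_1(x - q_1)$. Pulling the constant out and substituting $y = x - q_1$ reduces the integral to
\begin{equation*}
\exp\!\left(l_1 + \tfrac{q_1^2}{2}\right)\int_{-\infty}^{a - q_1} \varphi_1(y)\,\Phi_1\!\left(l_2 + q_1 q_2 + q_2 y\right)\,dy.
\end{equation*}
At this point the exponential prefactor on the right-hand side of \eqref{eq:uni_bi_gaussian} is already accounted for, and it remains to identify the integral with the bivariate cdf.

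Next I would introduce independent standard normals $Y, Z$ and write $\Phi_1(l_2 + q_1 q_2 + q_2 y) = \mathbb{P}(Z \leq l_2 + q_1 q_2 + q_2 y)$, so that the integral against $\varphi_1(y)\,dy$ becomes the joint probability
\begin{equation*}
\mathbb{P}\!\left(Y \leq a - q_1,\; Z - q_2 Y \leq l_2 + q_1 q_2\right).
\end{equation*}
Since $Z - q_2 Y$ is Gaussian with variance $1 + q_2^2$, I would set $U := Y$ and $V := (Z - q_2 Y)/\sqrt{1 + q_2^2}$, which are jointly Gaussian standard normals with correlation $\mathrm{Cov}(U,V) = -q_2/\sqrt{1 + q_2^2}$. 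Rewriting the second inequality as $V \leq (l_2 + q_1 q_2)/\sqrt{1 + q_2^2}$ and appealing to the definition of $\Phi_2$ yields exactly the right-hand side of \eqref{eq:uni_bi_gaussian}.

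No step is genuinely hard; the only place to be careful is tracking signs in the covariance so that the correlation parameter of $\Phi_2$ comes out as $-q_2/\sqrt{1 + q_2^2}$ rather than its opposite. The case $a = \infty$ is automatically covered since the joint probability reduces to a univariate Gaussian cdf, matching $\Phi_2(\infty, \cdot\,; \rho) = \Phi_1(\cdot)$.
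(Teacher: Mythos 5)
Your proof is correct and follows essentially the same route as the paper: complete the square to absorb $\exp(l_1+q_1 x)$ into the Gaussian density, then apply the linear transformation $(x,y)\mapsto\bigl(x-q_1,\,(y-q_2(x-q_1))/\sqrt{1+q_2^2}\bigr)$ — your probabilistic construction of $U$ and $V$ and their correlation $-q_2/\sqrt{1+q_2^2}$ is exactly the paper's change of variables with Jacobian $\sqrt{1+q_2^2}$, merely phrased via covariances of jointly Gaussian variables rather than an explicit Jacobian computation. All signs and the $a=\infty$ case check out.
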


\begin{proof}
In equation \eqref{eq:uni_bi_gaussian} combine the explicit exponential term with the one contained in $\varphi_1$ and complete the square. Then, define the change of variable $(x,y)\to(z,w)$ by
 \begin{equation*}
  x = z+q_1,\quad y = w\sqrt{1+q_2^2} + q_2(x-q_1).
 \end{equation*}
The determinant of the Jacobian matrix $J$ associated with this transformation is $|J| = \sqrt{1+q_2^2}$. Performing the change of variable leads to the right hand side of \eqref{eq:uni_bi_gaussian}.
\end{proof}

For the main result in this section we denote by $F^i_t$, $i\in I$, the delivery price of a forward contract on fuel $i$ with maturity $T$ and write $\mathbf{F}_t:=(F^c_t,F^c_t)$.

\begin{proposition}\label{prop:forward}
Given $I=\{c,g\}$, if under $\mathbb{Q}$, the random variables $\log(S^c_T)$ and $\log(S^g_T)$ are jointly Gaussian with means $\mu_c$ and $\mu_g$, variances $\sigma_c^2$ and $\sigma_g^2$ and correlation $\rho$, and if the demand $D_T$ at maturity is independent of $\mathcal{F}^W_T$, then for $t\in[0,T]$, the delivery price of a forward contract on electricity is given by:
 \begin{equation}\label{eq:fwd_price}
 F_t^p = \int_0^{\bar{\xi}^{i_-}} f_{\text{low}}\left(D,\mathbf{F}_t\right)\phi_d(D)\ \Id D
 + \int_{\bar{\xi}^{i_-}}^{\bar{\xi}^{i_+}} f_{\text{mid}}\left(D,\mathbf{F}_t\right)\phi_d(D)\ \Id D + \int_{\bar{\xi}^{i_+}}^{\bar{\xi}} f_{\text{high}}\left(D,\mathbf{F}_t\right)\phi_d(D)\ \Id D,
 \end{equation}
 where $\phi_d$ denotes the density of the random variable $D_T$ and, for $(\xi,\mathbf{x})\in [0,\bar{\xi}]\times \R_+^2$:
\begin{multline*}
f_{\text{low}}\left(\xi,\mathbf{x}\right) = \sum_{i\in I} b_i\left(\xi,x^i\right) \Phi_1\left(R_i(\xi,0)/\sigma\right)\\
+b_{cg}(\xi,\mathbf{x})\exp\left(-\alpha_c\alpha_g\sigma^2/2\right)\left[1-\sum_{i\in I} \Phi_1\left(R_i(\xi,0)/\sigma+\alpha_j\sigma\right)\right],
\end{multline*}
\begin{multline*}
f_{\text{mid}}\left(\xi,\mathbf{x}\right) = b_{i_{+}}\left(\xi-\bar{\xi}^{i_-},x^{i_+}\right)\Phi_1\left(-R_{i_+}\left(\xi-\bar{\xi}^{i_-},\bar{\xi}^{i_-}\right)/\sigma\right) + b_{i_+}\left(\xi,x^{i_+}\right) \Phi_1\left(R_{i_+}(\xi,0)/\sigma\right)\\
+b_{cg}(\xi,\mathbf{x})\exp\left(-\alpha_c\alpha_g\sigma^2/2\right)\left[ \Phi_1\left(R_{i_+}\left(\xi-\bar{\xi}^{i_-},\bar{\xi}^{i_-}\right)/\sigma+\alpha_{i_-}\sigma\right)-\Phi_1\left(R_{i_+}\left(\xi,0\right)/\sigma+\alpha_{i_-}\sigma\right)\right],
\end{multline*}
\begin{multline*}
f_{\text{high}}\left(\xi,\mathbf{x}\right) = \sum_{i\in I} b_i\left(\xi-\bar{\xi}^j,x^i\right)\Phi_1\left(-R_i\left(\xi-\bar{\xi}^j,\bar{\xi}^j\right)/\sigma\right)\\
+b_{cg}(\xi,\mathbf{x})\exp\left(-\alpha_c\alpha_g\sigma^2/2\right)\left[-1+\sum_{i\in I}\Phi_1\left(R_i\left(\xi-\bar{\xi}^j,\bar{\xi}^j\right)/\sigma+\alpha_j\sigma\right)\right],
\end{multline*}
where $j=I\setminus \{i\}$, the constants $\alpha_c,\alpha_g,\beta,\gamma$ are as defined in Corollary \ref{cor:el_price_two_fuels}, and 
\begin{align*}
\sigma^2 &:= \sigma_c^2 - 2\rho\sigma_c\sigma_g+\sigma_g^2,\\
R_i\left(\xi_i,\xi_j\right) &:= k_j+m_j\xi_j-k_i-m_i \xi_i + \log\left(F^j_t\right)-\log\left(F^i_t\right) - \frac12\sigma^2.
\end{align*}
\end{proposition}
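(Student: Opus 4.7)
The strategy is to compute $F_t^p = \E^{\mathbb{Q}}[P_T \mid \mathcal{F}_t]$ by first conditioning on $D_T$ and then exploiting the joint log-Gaussianity of $(S^c_T, S^g_T)$. Since demand is independent of $\mathcal{F}^W_T$ under $\mathbb{Q}$, the tower property gives
\begin{equation*}
F_t^p = \int_0^{\bar{\xi}} \E^{\mathbb{Q}}\!\bigl[P_T \,\big|\, D_T = D,\, \mathcal{F}_t\bigr]\, \phi_d(D)\, \Id D.
\end{equation*}
Substituting the explicit form of $P_T$ from Corollary \ref{cor:el_price_two_fuels} splits the integrand into three pieces supported on the three demand intervals, and $f_{\text{low}}$, $f_{\text{mid}}$, $f_{\text{high}}$ are then by definition the $\mathcal{F}_t$-conditional expectations of $b_{\text{low}}(D,\mathbf{S}_T)$, $b_{\text{mid}}(D,\mathbf{S}_T)$, $b_{\text{high}}(D,\mathbf{S}_T)$ respectively.

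The key observation is that every indicator appearing in Corollary \ref{cor:el_price_two_fuels} reduces to an inequality on the single scalar variable $Y := \log S^g_T - \log S^c_T$, which is $\mathcal{F}_t$-conditionally Gaussian with variance $\sigma^2$. For example, $\{b_c(D,S^c_T) < \underline{b}_g(S^g_T)\}$ becomes $\{Y > k_c + m_c D - k_g\}$, and the event in which both fuels are marginal is a two-sided sandwich on $Y$. Each summand is therefore an expectation of the form $\E^{\mathbb{Q}}[S^i_T \cdot e^{c}\, \mathbb{I}_{\{Y \in A\}}]$ or $\E^{\mathbb{Q}}[(S^c_T)^{\alpha_c}(S^g_T)^{\alpha_g} e^{c}\, \mathbb{I}_{\{Y \in A\}}]$ for constants $c$ and intervals $A$ depending on $D$.

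The core computation in either case is an Esscher-type change of measure. Taking $S^i_T/F^i_t$, respectively $(S^c_T)^{\alpha_c}(S^g_T)^{\alpha_g}\bigl/\bigl[(F^c_t)^{\alpha_c}(F^g_t)^{\alpha_g}e^{-\alpha_c\alpha_g\sigma^2/2}\bigr]$, as a Radon--Nikodym derivative leaves $Y$ Gaussian with the same variance $\sigma^2$ but shifts its mean by the covariance of $Y$ with the log-exponent used in the measure change. The martingale property $\E^{\mathbb{Q}}[S^i_T \mid \mathcal{F}_t] = F^i_t$ pins the mean of $\log S^i_T$ at $\log F^i_t - \sigma_i^2/2$, and a direct calculation of the covariance shifts shows that the standardized thresholds are precisely $R_i(\cdot,\cdot)/\sigma$ for the $S^i$-measure change and $R_i(\cdot,\cdot)/\sigma + \alpha_j \sigma$ for the $b_{cg}$-measure change. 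At this point the post-change probability $(Y \in A)$ reads off as a difference of $\Phi_1$-values via the univariate Gaussian identity $\E[e^{qZ}\mathbb{I}_{\{Z \leq a\}}] = e^{q^2/2}\Phi_1(a-q)$, which is the $q_2 = 0$ specialization of Lemma \ref{lem:uni_bi_gaussian}. Summing the three relevant terms in each demand interval yields the stated expressions for $f_{\text{low}}$, $f_{\text{mid}}$, $f_{\text{high}}$.

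The main obstacle is bookkeeping rather than any analytical subtlety: one must systematically translate each indicator region from Corollary \ref{cor:el_price_two_fuels} into its equivalent condition on $Y$ (being careful with signs), identify the correct Esscher shift for each summand, and match the standardized arguments to the compact form $R_i(\cdot,\cdot)/\sigma$. The middle-demand case deserves particular attention because fuel $i_-$ is either entirely idle or at full capacity but never partially dispatched, which breaks the symmetry between the two fuels that is present in $f_{\text{low}}$ and $f_{\text{high}}$.
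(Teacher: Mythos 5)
Your proposal is correct, and while it shares the paper's outer skeleton (iterated conditioning on demand, splitting into the three intervals $[0,\bar{\xi}^{i_-}]$, $[\bar{\xi}^{i_-},\bar{\xi}^{i_+}]$, $[\bar{\xi}^{i_+},\bar{\xi}]$, and treating each indicator term of Corollary \ref{cor:el_price_two_fuels} separately), your computation of the inner expectation takes a genuinely different and arguably cleaner route. The paper writes the fuel prices in terms of two independent standard Gaussians, completes the square in $z_1$, and is left with an integral of the form $\int e^{l_1+q_1z_2}\varphi_1(z_2)\Phi_1(l_2+q_2z_2)\,\Id z_2$, to which it applies Lemma \ref{lem:uni_bi_gaussian} with $a=\infty$ (so that the resulting $\Phi_2$ collapses to $\Phi_1$). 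You instead observe that every indicator depends only on the scalar $Y=\log S^g_T-\log S^c_T$, with conditional variance $\sigma^2=\sigma_c^2-2\rho\sigma_c\sigma_g+\sigma_g^2$, and compute each term by an Esscher tilt with density $S^i_T/F^i_t$ or the normalized $b_{cg}$ factor; the tilt shifts the mean of $Y$ by its covariance with the tilting exponent, and a direct check (which I verified for the $b_c$ term of $f_{\text{low}}$: the tilted mean of $Y$ is $\log F^g_t-\log F^c_t-\sigma^2/2$, giving threshold $R_c(\xi,0)/\sigma$, and the normalization $\E[(S^c_T)^{\alpha_c}(S^g_T)^{\alpha_g}\,|\,\mathcal{F}_t]=b_{cg}$-prefactor times $e^{-\alpha_c\alpha_g\sigma^2/2}$ follows from $\alpha_c+\alpha_g=1$) reproduces the stated arguments $R_i/\sigma$ and $R_i/\sigma+\alpha_j\sigma$. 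What your route buys is a conceptual explanation of \emph{why} only univariate cdfs of the single combination $R_i/\sigma$ appear in $f_{\text{low}}$, $f_{\text{mid}}$, $f_{\text{high}}$, and it bypasses Lemma \ref{lem:uni_bi_gaussian} entirely at this stage; the cost is that the lemma is still needed later (with finite $a$) for Corollary \ref{prop:forwardGaussianD}, so the paper's uniform treatment amortizes better across both results. Your remaining work is pure bookkeeping, as you say, and the asymmetry you flag in the middle-demand case is exactly the right thing to watch.
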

\begin{proof}
By iterated conditioning, for $t\in[0,T]$, the price of the electricity forward $F_t^p$ is given by
\begin{equation}\label{eq:el_fwd_cond_exp}
F_t^p:= \E^{\mathbb{Q}}\left[\left.P_T\right|\mathcal{F}_t\right]=\E^{\mathbb{Q}}\left.\left[\left.\E^{\mathbb{Q}}\left[b(D_T,\mathbf{S}_T)\right|\mathcal{F}^0_T \vee \mathcal{F}^W_t\right]\right|\mathcal{F}_t\right].
\end{equation}
The outer expectation can be written as the sum of three integrals corresponding to the cases $D_T\in[0,\bar{\xi}^{i_-}]$, $D_T\in[\bar{\xi}^{i_-},\bar{\xi}^{i_+}]$ and $D_T\in[\bar{\xi}^{i_+},0]$ respectively. We consider the first case and derive the $f_{\text{low}}$ term. The other cases corresponding to $f_{\text{mid}}$ and $f_{\text{high}}$ are proven similarly.

From Corollary \ref{cor:el_price_two_fuels} we know that $b=b_{\text{low}}$ for $D_T\in[0,\bar{\xi}^{i_-}]$. This expression for $P_T$ is easily written in terms of independent standard Gaussian variables $\mathbf{Z}:=(Z_1,Z_2)$ by using the identity
\begin{equation*}
\left( \begin{array}{c}
  \log \left(S^c_T\right)\\
  \log \left(S^g_T\right)
 \end{array}
\right)
=
\left( \begin{array}{c}
  \mu_c\\
  \mu_g
\end{array}
 \right)
+
\left( \begin{array}{cc}
  \sigma_c^2 & \rho\sigma_c\sigma_g \\
  \rho\sigma_c\sigma_g & \sigma_g^2
 \end{array}
 \right)
\left( \begin{array}{c}
 Z_1\\
 Z_2
 \end{array}
 \right).
\end{equation*}
Defining $\hat{d}:=\E^\mathbb{Q}[D_T|\mathcal{F}^0_T]$, the inner expectation can now be written in integral form as
\begin{equation*}
 \E\left[\tilde{b}_{\text{low}}\left(\hat{d},\mathbf{Z}\right)\right] = I_c + I_g + I_{cg},
\end{equation*}
where $\tilde{b}_{\text{low}}(\xi,\mathbf{Z}):=b_{\text{low}}(\xi,\mathbf{S})$ and the expectation is computed with respect to the law of $\mathbf{Z}$. For example, after completing the square in $z_1$,
\begin{equation*}
I_c=\int_{-\infty}^{\infty}\exp\left(l_1+q_1z_2\right)\phi_1(z_2)\Phi_{1}(l_2+q_2z_2)\ \Id z_2,
\end{equation*}
with
\begin{align*}
 l_1&:=\mu_c + k_c + m_c\hat{d}+\frac{\sigma_c^4}{2}, & l_2:&=-\sigma_c^2 - \frac{\mu_c+k_c+m_c\hat{d}+\mu_g}{\sigma_c(\sigma_c-\rho\sigma_g)},\\
 q_1&:=\rho\sigma_c\sigma_g, & q_2&:=\frac{\sigma_g(\sigma_g-\rho\sigma_c)}{\sigma_c(\sigma_c-\rho\sigma_g)}.
\end{align*}
Lemma \ref{lem:uni_bi_gaussian} now applies with $a=\infty$. $I_g$ and $I_{cg}$ are computed similarly.  In all terms, we substitute for $\mu_i$ using the following standard result. For $i\in I$,
\begin{equation}\label{eq:fuel_forward}
 F^i_t = \E^\mathbb{Q}\left[\left. S^i_T\right|\mathcal{F}_t\right] = \exp\left(\mu_i+\frac{1}{2}\sigma_i^2\right), \quad \text{ for } t\in[0,T].
\end{equation}
Substituting the resulting expression for the inner expectation into the outer expectation in \eqref{eq:el_fwd_cond_exp} yields the first term in the proposition.
\end{proof}
\begin{remark}
The assumption of lognormal fuel prices in Proposition \ref{prop:forward} is a very common and natural choice for modeling energy (non-power) prices.  Geometric Brownian Motion (GBM) with constant convenience yield, the classical exponential OU model of Schwartz \cite{eSchwartz1997}, and the two-factor version of Schwartz and Smith \cite{eSchwartz2000} all satisfy the lognormality assumption.
\end{remark}
The above result does not depend upon any assumption on the distribution of the demand at maturity, and as a result, it can easily be computed numerically for any distribution. In markets where reasonably reliable load forecasts exist, one may consider demand to be a deterministic function, in which case the integrals in \eqref{eq:fwd_price} are not needed and the forward price becomes explicit. For cases when load forecasts are not reliable, we introduce another convenient special case below, where demand at maturity has a Gaussian distribution truncated at zero and $\bar{\xi}$. 

To simplify and shorten the notation we introduce the following shorthand notation: 
 \begin{equation*}
\Phi_2^{2\times 1}\left(\left[\begin{array}{c}x_1\\x_2\end{array}\right],y;\rho\right):= \Phi_2(x_1,y;\rho)-  \Phi_2(x_2,y;\rho).
\end{equation*}

\begin{corollary}\label{prop:forwardGaussianD}
In addition to the assumptions in Proposition \ref{prop:forward} let demand at maturity satisfy
\begin{equation*}
 D_T = \max \left(0,\min\left(\bar{\xi},X_T\right)\right),
\end{equation*}
where $X_T\sim N(\mu_d,\sigma_d^2)$ is independent of $\mathbf{S}_T$ under $\mathbb{Q}$. Then for $t\in[0,T]$, the delivery price of a forward contract is given explicitly by
\begin{multline*}
F_t^p=\sum_{i\in I} \exp\left(\frac{m_i^2\sigma_d^2}{2}\right)\left\{b_i\left(\mu_d,F_t^i\right)\Phi_2^{2 \times 1}\left(\left[\begin{array}{c}\frac{\bar{\xi}^i-\mu_d}{\sigma_d}-m_i\sigma_d\\\frac{-\mu_d}{\sigma_d}-m_i\sigma_d\end{array}\right],\frac{R_i(\mu_d,0)-m_i^2\sigma_d^2}{\sigma_{i,d}};\frac{m_i\sigma_d}{\sigma_{i,d}}\right)\right.\\
+\left.b_i\left(\mu_d-\bar{\xi}^j,F^i_t\right)\Phi_2^{2 \times 1}\left(\left[\begin{array}{c}\frac{\bar{\xi}-\mu_d}{\sigma_d}-m_i\sigma_d\\\frac{\bar{\xi}^j-\mu_d}{\sigma_d}-m_i\sigma_d\end{array}\right],\frac{-R_i\left(\mu_d-\bar{\xi}^j,\bar{\xi}^j\right)+m_i^2\sigma_d^2}{\sigma_{i,d}};\frac{-m_i\sigma_d}{\sigma_{i,d}}\right)\right\}\\
+\sum_{i\in I}\delta_i\exp\left(\eta\right)b_{cg}(\mu_d,\mathbf{F}_t) \left\{-\Phi_2^{2 \times 1}\left(\left[\begin{array}{c}\frac{\bar{\xi}^i-\mu_d}{\sigma_d}-\gamma\sigma_d\\\frac{-\mu_d}{\sigma_d}-\gamma\sigma_d\end{array}\right],\frac{R_i(\mu_d,0)+\alpha_j\sigma^2-\gamma m_i\sigma_d^2}{\delta_i\sigma_{i,d}};\frac{m_i\sigma_d}{\delta_i\sigma_{i,d}}\right)\right.\\
+\left.\Phi_2^{2 \times 1}\left(\left[\begin{array}{c}\frac{\bar{\xi}-\mu_d}{\sigma_d}-\gamma\sigma_d\\\frac{\bar{\xi}^j-\mu_d}{\sigma_d}-\gamma\sigma_d\end{array}\right],\frac{R_i\left(\mu_d-\bar{\xi}^j,\bar{\xi}^j\right)+\alpha_j\sigma^2-\gamma m_i\sigma_d^2}{\delta_i\sigma_{i,d}};\frac{m_i\sigma_d}{\delta_i\sigma_{i,d}}\right)\right\}\\
+\Phi_1\left(\frac{-\mu_d}{\sigma_d}\right)\sum_{i\in I} b_i\left(0,F^i_t\right) \Phi_1\left(\frac{R_i(0,0)}{\sigma}\right) +
\Phi_1\left(\frac{\mu_d-\bar{\xi}}{\sigma_d}\right)\sum_{i\in I} b_i\left(\bar{\xi}^i,F^i_t\right)\Phi_1\left(\frac{-R_i\left(\bar{\xi}^i,\bar{\xi}^j\right)}{\sigma}\right),
\end{multline*}
where $j=I\setminus \{i\}$, $\delta_i=(-1)^{\mathbb{I}_{\{i=i_+\}}}	$ and
\begin{equation*}
\sigma_{i,d}^2:=m_i^2\sigma_d^2+\sigma^2 \quad \text{and} \quad \eta:=\frac{\gamma^2\sigma_d^2-\alpha_c\alpha_g\sigma^2}{2}.
\end{equation*}
\end{corollary}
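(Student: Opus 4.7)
The plan is to reduce the corollary to Proposition~1 plus a repeated application of a mild extension of Lemma~1. Under the truncated-Gaussian assumption the law of $D_T$ decomposes into an absolutely continuous part with density $\varphi_1((\cdot-\mu_d)/\sigma_d)/\sigma_d$ on $(0,\bar{\xi})$, a point mass $\Phi_1(-\mu_d/\sigma_d)$ at $0$, and a point mass $\Phi_1((\mu_d-\bar{\xi})/\sigma_d)$ at $\bar{\xi}$. Substituting this into the representation of $F^p_t$ from Proposition~1 splits the forward price into three Gaussian integrals of $f_{\text{low}}$, $f_{\text{mid}}$, $f_{\text{high}}$ over $(0,\bar{\xi}^{i_-})$, $(\bar{\xi}^{i_-},\bar{\xi}^{i_+})$, $(\bar{\xi}^{i_+},\bar{\xi})$, plus the two atom contributions, which I would isolate first and match to the last line of the claimed formula.

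Next I would evaluate the three Gaussian integrals. Each of $f_{\text{low}}$, $f_{\text{mid}}$, $f_{\text{high}}$ is a finite sum of terms of one of two shapes: an exponential $b_i(\xi-c,F^i_t)=F^i_t\exp(k_i+m_i(\xi-c))$ multiplied by $\Phi_1(\pm R_i(\xi-c,c')/\sigma)$, or the analogous pair with $b_{cg}(\xi,\mathbf{F}_t)=(F^c_t)^{\alpha_c}(F^g_t)^{\alpha_g}\exp(\beta+\gamma\xi)$ replacing $b_i$ and the $\Phi_1$-argument shifted by $\alpha_j\sigma$. In every term the exponent is affine in $\xi$ and the $\Phi_1$-argument is affine in $\xi$ through $R_i$, so the change of variable $\xi=\mu_d+\sigma_d z$ brings each piece into exactly the integrand of Lemma~1 with $q_1\in\{m_i\sigma_d,\gamma\sigma_d\}$ and $q_2$ proportional to $m_i\sigma_d/\sigma$; the only new feature is that the integration range is now a finite interval $(a_1,a_2)\subset\R$ rather than $(-\infty,a)$.

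A straightforward rerun of the proof of Lemma~1 over $(a_1,a_2)$ replaces $\Phi_2(a-q_1,\,\cdot\,;\cdot)$ by $\Phi_2(a_2-q_1,\,\cdot\,;\cdot)-\Phi_2(a_1-q_1,\,\cdot\,;\cdot)$, which is the $\Phi_2^{2\times 1}$ shorthand introduced just before the statement. Completing the square in each exponent then produces the prefactors $\exp(m_i^2\sigma_d^2/2)$ and $\exp(\eta)$ from the $b_i$ and $b_{cg}$ pieces respectively, the shifts $m_i\sigma_d$ or $\gamma\sigma_d$ of the first arguments, the shifts of $R_i$ by $-m_i^2\sigma_d^2$ or $\alpha_j\sigma^2-\gamma m_i\sigma_d^2$ in the second arguments, and the normalisation $\sigma_{i,d}=\sqrt{m_i^2\sigma_d^2+\sigma^2}$ emerging as $\sqrt{1+q_2^2}\cdot\sigma$. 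The sign factor $\delta_i=(-1)^{\mathbb{I}_{\{i=i_+\}}}$ simply records whether a given $b_{cg}$-piece originated from $f_{\text{low}}$/$f_{\text{high}}$ (both fuels marginal, positive sign) or from $f_{\text{mid}}$ (a single fuel marginal on one side, contributing the sign flip).

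The main obstacle is purely algebraic bookkeeping: of order a dozen sub-terms are generated (three demand regions, each a sum of up to three sub-terms, each producing two $\Phi_2$ boundary contributions), and matching them one-to-one with the summands of the statement requires careful tracking of signs inside $R_i$, of which fuel plays the role of $j=I\setminus\{i\}$ in each case, and of the orientation of each $\Phi_1$/$\Phi_2$ argument. A minor auxiliary check is that the boundary pieces of $f_{\text{low}}(0,\mathbf{F}_t)$ and $f_{\text{high}}(\bar{\xi},\mathbf{F}_t)$ involving $b_{cg}$ combine correctly with the endpoint $\Phi_2^{2\times 1}$ contributions coming from the upper/lower limits of the three regional integrals, so that only the $b_i$ pieces survive in the two atom summands of the final formula.
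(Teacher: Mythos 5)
Your proposal is correct and follows essentially the same route as the paper: apply Lemma \ref{lem:uni_bi_gaussian} with finite integration limits to each term of $f_{\text{low}}$, $f_{\text{mid}}$, $f_{\text{high}}$ from Proposition \ref{prop:forward} (yielding the differences of bivariate Gaussian cdfs encoded by $\Phi_2^{2\times 1}$), treat the two atoms of the truncated demand distribution separately, and simplify. Your identification of $q_1\in\{m_i\sigma_d,\gamma\sigma_d\}$, $q_2\propto m_i\sigma_d/\sigma$, the resulting prefactors and $\sigma_{i,d}$, and the vanishing of the $b_{cg}$ pieces at the two endpoints all check out, so the remaining work is exactly the algebraic bookkeeping you describe.
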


\begin{proof}
We use Lemma \ref{lem:uni_bi_gaussian} with $a<\infty$.  After integrating over demand, each of the terms in $f_{\text{low}}$, $f_{\text{mid}}$ and $f_{\text{high}}$ turns into the difference between two bivariate Gaussian distribution functions. Simplifying the resulting terms leads to the result.
\end{proof}

Although the expression in Corollary \ref{prop:forwardGaussianD} may appear quite involved, each of the terms can be readily identified with one of the five cases listed in Table \ref{tbl:el_price_2fuel}, along with four terms (the last line of $F_t^p$) corresponding to the endpoints of the stack.  Furthermore, it is noteworthy that as compared to Corollary \ref{cor:el_price_two_fuels}, the fuel \emph{forward} prices now replace the fuel \emph{spot} prices in the bid stack curves $b_i$, while $\mu_d$ replaces demand.  The Gaussian cdfs essentially weight these terms according to the probability of the various bid stack permutations. Thus $F_t^p$ can become asymptotically linear in $F_t^c$ or $F_t^g$ when the probability of a single fuel being marginal goes to one.  Finally, we note that a very similar closed-form expression is available for higher moments of $P_T$ and given in Appendix \ref{ap:moments}.  Convenient expressions can also be found for covariances with fuels and for the Greeks (sensitivities with respect to underlying factors or parameters), but are not included.

\begin{remark}\label{rmk:forwards_spikes}
Under the extended model introduced in \textsection \ref{str:spikes} to capture spikes, forward prices are given by the same expression as in Proposition \ref{prop:forwardGaussianD} plus the following simple terms
\begin{multline*}
\exp\left(m_s(\mu_d-\bar{\xi})+\frac12m_s^2\sigma_d^2\right) \Phi_1\left(\frac{\mu_d-\bar{\xi}}{\sigma_d}+m_s\sigma_d\right) -\Phi_1\left(\frac{\mu_d-\bar{\xi}}{\sigma_d}\right)\\
- \exp\left(-m_n\mu_d+\frac12m_n^2\sigma_d^2\right) \Phi_1\left(m_n\sigma_d-\frac{\mu_d}{\sigma_d}\right)+\Phi_1\left(\frac{-\mu_d}{\sigma_d}\right)
\end{multline*}
\end{remark}

\begin{remark}
In most electricity markets, available capacity is often uncertain, due for example to the risk of generator outages.  Since $\bar{\xi}^i$ enters linearly in the exponential function in \eqref{eq:el_price_exp_n} like demand, an extension to stochastic capacity levels should be feasible, though rather involved.  However, if capacity shocks are similar for both fuel types, this additional randomness could more easily be accounted for by adjusting demand parameters $\mu_d$ and $\sigma_d$.  More generally, we note that these parameters could in practice be chosen to calibrate the model to observed power forward (or option) prices, thus using the random variable $D_T$ as a proxy for demand, capacity, and all other non fuel-related risk, along with corresponding risk premia.
\end{remark}

\subsection{Correlation Between Electricity and Fuel Forwards}\label{str:correlations}  
In the American PJM market, coal and gas are the fuel types most likely to be at the margin, with coal historically below gas in the merit order.  Therefore PJM provides a suitable case study for analyzing the dependence structure suggested by our model. In Figure \ref{fig:PJMforwards} we observe the historical co-movement of forward (futures) prices for PJM electricity (both peak and off-peak), Henry Hub natural gas (scaled up by a factor of ten) and Central Appalachian coal. We pick maturities December 2009 and 2011, and plot futures prices over the two years just prior to maturity. Figure \ref{fig:Dec09} covers the period 2007-09, characterized by a peak during the summer of 2008, when most commodities set new record highs.  Gas, coal and power all moved fairly similarly during this two-year period, although the correlation between power and gas forward prices is most striking. 

\begin{figure}[htbp]
  \centering
  \subfloat[Dec 2009 forward price dynamics.]{\label{fig:Dec09}\includegraphics[width=.5\textwidth]{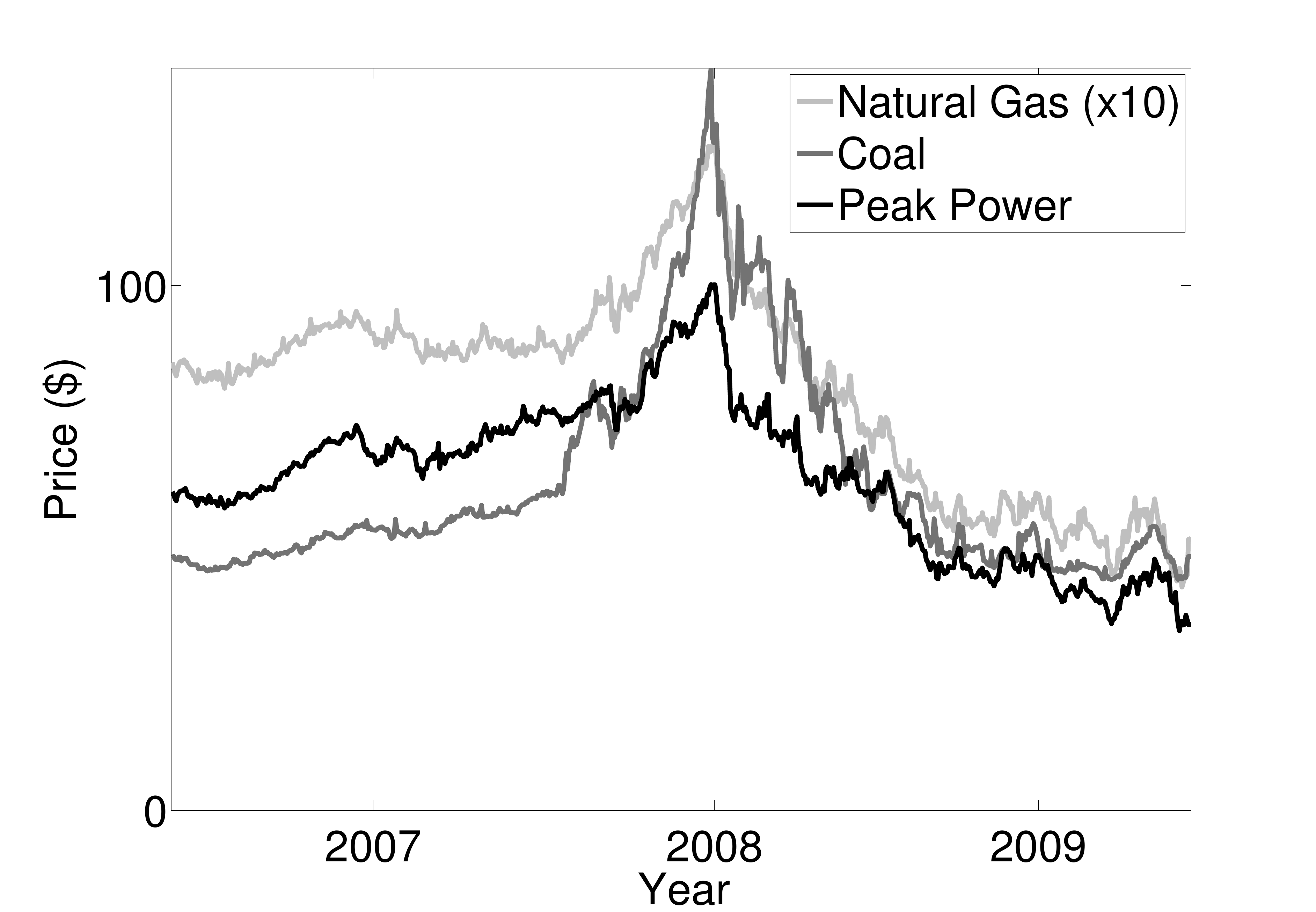}}
  \subfloat[Dec 2011 forward price dynamics.]{\label{fig:Dec11}\includegraphics[width=.5\textwidth]{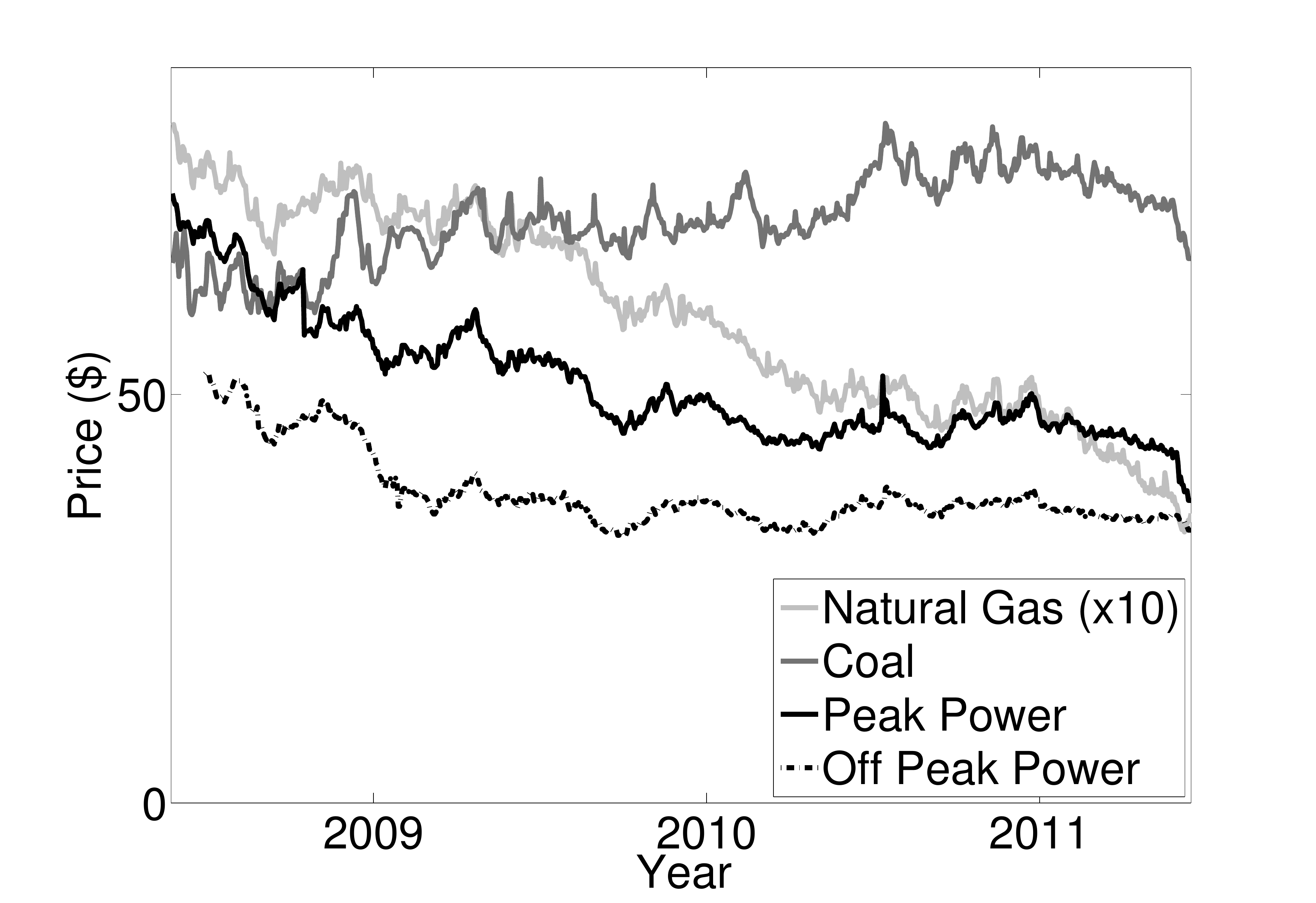}}
 \caption{Comparison of power, gas and coal futures prices for two delivery dates}
 \label{fig:PJMforwards}
\end{figure}
Figure \ref{fig:Dec11} depicts the period 2009-2011, during which, due primarily to shale gas discoveries, gas prices declined steadily through 2010 and 2011, while coal prices held steady and even increased a little.  As a result, this period is more revealing, as it corresponds to a time of gradual change in the merit order.  Our bid stack model implies that the level of power prices should have been impacted both by the strengthening coal price and the falling gas price, leading to a relatively flat power price trajectory.  This is precisely what Figure \ref{fig:Dec11} reveals, with very stable forward power prices during 2010-2011.  The close correlation with gas is still visible, but power prices did not fall as much as gas, as they were supported by the price of coal.  Finally, we can also see that the spread between peak and off-peak forwards for Dec 2011 delivery has narrowed significantly, as we would also expect when there is more overlap between coal and gas bids in the stack.  This subtle change in price dynamics is crucial for many companies exposed to multi-commodity risk, and is one which is very difficult to capture in a typical reduced-form approach, or indeed in a stack model without a flexible merit order and overlapping fuel types.

\section{Spread Options}\label{str:spreads}
This section deals with the pricing of spread options in the structural setting presented above. We are concerned with spread options whose payoff is defined to be the positive part of the difference between the market spot price of electricity and the cost of the amount of fuel needed by a particular power plant to generate one MWh. If coal (gas) is the fuel featured in the payoff then the option is known as a \textit{dark (spark) spread}. Denoting by $h_c,h_g > 0$ the heat rate of coal or gas, dark and spark spread options with maturity $T$ have payoffs
\begin{equation}\label{eq:spread_payoffs}
 \left(P_T - h_cS^c_T\right)^+ \quad \text{and} \quad  \left(P_T - h_gS^g_T\right)^+,
\end{equation}
respectively. We only consider the dark spread but point out that all results in this section apply to spark spreads if one interchanges $c$ and $g$. Further, since spread options are typically traded to hedge physical assets (generating units) the heat rates that feature in the option payoff are usually in line with the efficiency of power plants in the market. Based on the range of market heat rates implied by our stack model, we require\footnote{Explicit formulae for cases of $h_c$ outside of this range are also available, but not included here.} 
\begin{equation}\label{eq:spread_heat_rate_restriction}
 \exp\left(k_c\right) \leq h_c \leq \exp\left(k_c+m_c\bar{\xi}^c\right).
\end{equation}

Then, as usually, the value $(V_t)$ of a dark spread is given by the conditional expectation under the pricing measure of the discounted payoff; i.e.
\begin{equation*}
 V_t = \exp\left(-r(T-t)\right)\E^\mathbb{Q}\left[\left.\left(P_T - h_cS^c_T\right)^{+}\right|\mathcal{F}_t\right],
\end{equation*}
which thanks to Corollary \ref{cor:el_price_two_fuels} is understood to be a derivative written on demand and fuels.
\begin{remark}
While spread option contracts are often written on forwards, we consider spread options on spot prices, as required for our goal of power plant valuation. In addition, as we are interested in closed-form expressions, we limit our attention to the payoffs with strike zero, corresponding to a plant for which fixed operating costs are negligible or relatively small. Including a positive strike, one requires approximation techniques to price a spread option explicitly (such as perturbation of the strike zero case), analogously to approaches proposed for when both commodities are lognormal (cf. \cite{rCarmona2003}).
\end{remark}

\subsection{Closed Form Spread Option Prices}\label{str:spread_price}
The results derived in this section mirror those in \textsection \ref{str:forwards_formulae} derived for the forward contract. Firstly, conditioning on demand, we obtain an explicit formula for the price of the spread. Secondly, we extend this result to give a closed form formula in the case of truncated Gaussian demand.

We keep our earlier notation for the dominant and subordinate technology $i_+$ and $i_-$, and define
\begin{equation*}
\xi^h:=\frac{\log h_c - k_c}{m_c},
\end{equation*}
where $0\leq \xi^h\leq \bar{\xi}^c$. By its definition, $\xi^h$ represents the amount of electricity that can be supplied in the market from coal generators whose heat rate is smaller than or equal to $h_c$.

\begin{proposition}\label{prop:spread}
Given $I=\{c,g\}$, if, under $\mathbb{Q}$, the random variables $\log(S^c_T)$ and $\log(S^g_T)$ are jointly Gaussian distributed with means $\mu_c$ and $\mu_g$, variances $\sigma_c^2$ and $\sigma_g^2$ and correlation $\rho$, then, for $t\in[0,T]$, the price of a dark spread option with maturity $T$ is given by
\begin{multline}\label{eq:V_integrals}
 V_t =\exp\left(-r(T-t)\right)\left\{\int_0^{\min\left(\bar{\xi}^g,\xi^h\right)} v_{\text{low},2}\left(D,\mathbf{F}_t\right) \phi_d(D)\ \Id D\right.\\
 + \int_{\min\left(\bar{\xi}^g,\xi^h\right)}^{\bar{\xi}^{i_-}} v_{\text{low},1}\left(D,\mathbf{F}_t\right) \phi_d(D)\ \Id D
 +\int_{\bar{\xi}^{i_-}}^{\max\left(\bar{\xi}^g,\xi^h\right)} v_{\text{mid},3}\left(D,\mathbf{F}_t\right) \phi_d(D)\ \Id D\\
 +\int_{\max\left(\bar{\xi}^g,\xi^h\right)}^{\min\left(\bar{\xi}^g+\xi^h,\bar{\xi}^c\right)} v_{\text{mid},2,i_+}\left(D,\mathbf{F}_t\right) \phi_d(D)\ \Id D
 +\int_{\min\left(\bar{\xi}^g+\xi^h,\bar{\xi}^c\right)}^{\bar{\xi}^{i_+}} v_{\text{mid},1}\left(D,\mathbf{F}_t\right) \phi_d(D)\ \Id D\\
 + \int_{\bar{\xi}^{i_+}}^{\max\left(\bar{\xi}^c,\bar{\xi}^g+\xi^h\right)} v_{\text{high},2}\left(D,\mathbf{F}_t\right) \phi_d(D)\ \Id D
 + \left. \int_{\max\left(\bar{\xi}^c,\bar{\xi}^g+\xi^h\right)}^{\bar{\xi}} v_{\text{high},1}\left(D,\mathbf{F}_t\right) \phi_d(D)\ \Id D\right\},
 \end{multline}
 where the integrands are given in Appendix \ref{ap:spread_formula} and discussed in some detail.
\end{proposition}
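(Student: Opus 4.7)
The plan is to mirror the proof of Proposition \ref{prop:forward}: apply iterated conditioning
\begin{equation*}
V_t = e^{-r(T-t)}\, \E^{\mathbb{Q}}\!\left[ \E^{\mathbb{Q}}\!\left[ (P_T - h_c S^c_T)^+ \,\big|\, \mathcal{F}^0_T \vee \mathcal{F}^W_t \right] \,\Big|\, \mathcal{F}_t \right],
\end{equation*}
and represent $(\log S^c_T, \log S^g_T)$ as an affine image of two independent standard Gaussians $(Z_1,Z_2)$, exactly as in the forward proof. The outer expectation then collapses to a sum of integrals against the density $\phi_d$ of $D_T$, and the work is to evaluate, for each fixed demand level $D_T=D$, the inner expectation against the bivariate lognormal law of $\mathbf{S}_T$, and to identify the demand intervals on which the analytic form of the inner answer is constant.

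For the inner expectation, I would partition the $(s^c,s^g)$-plane according to two simultaneous criteria: (a) which of the five scenarios of Table \ref{tbl:el_price_2fuel} determines $P_T$ (equivalently, which of $b_{\text{low}}, b_{\text{mid}}, b_{\text{high}}$ applies via Corollary \ref{cor:el_price_two_fuels}), and (b) whether $P_T \geq h_c s^c$, which activates the indicator in $(P_T - h_c S^c_T)^+$. When coal alone is marginal, (b) reduces to a purely deterministic condition on demand, namely $D_T \geq \xi^h$ (coal marginal, gas not saturated) or $D_T \geq \bar{\xi}^g + \xi^h$ (coal marginal, gas fully used), because $P_T/S^c_T$ has no dependence on $S^g_T$. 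Combined with $\bar{\xi}^{i_-}, \bar{\xi}^{i_+}, \bar{\xi}$, and using the constraint \eqref{eq:spread_heat_rate_restriction} that places $\xi^h \in [0,\bar{\xi}^c]$, these two thresholds produce exactly the seven demand intervals listed in \eqref{eq:V_integrals}. When coal is not the sole marginal fuel, the condition $P_T \geq h_c s^c$ becomes a log-linear inequality in $(s^c,s^g)$ (a half-plane in $(Z_1,Z_2)$), so that each contribution reduces to an integral of the form covered by Lemma \ref{lem:uni_bi_gaussian} with a finite cutoff $a$. Applying the lemma and substituting $\mu_i = \log F^i_t - \tfrac{1}{2}\sigma_i^2$ via \eqref{eq:fuel_forward} rewrites each piece in terms of the fuel forwards $\mathbf{F}_t$ and univariate/bivariate Gaussian cdfs.

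The main obstacle is purely combinatorial: on each of the seven demand intervals one must identify the admissible subset of the five bid-stack scenarios, orient the half-plane $\{P_T \geq h_c s^c\}$ correctly relative to the half-planes defining $b_{\text{low}}, b_{\text{mid}}, b_{\text{high}}$, and then assemble the resulting integrals into the expressions $v_{\text{low},k}$, $v_{\text{mid},k}$, $v_{\text{high},k}$ stated in Appendix \ref{ap:spread_formula}. The subscripts $1,2,3$, together with the $i_+$ tag on $v_{\text{mid},2,i_+}$, track precisely the number of contributing scenarios in that interval and the asymmetry between the two fuels caused by the heat-rate cutoff $\xi^h$ lying entirely on the coal side. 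The mathematical content is the same as in Proposition \ref{prop:forward}; only the bookkeeping is substantially heavier because the positive-part indicator interacts with every one of the five bid-stack scenarios.
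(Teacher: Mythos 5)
Your proposal follows essentially the same route as the paper's proof: iterated conditioning on $\mathcal{F}^0_T\vee\mathcal{F}^W_t$, decomposition of the outer expectation over demand intervals refined by the thresholds $\xi^h$ and $\bar{\xi}^g+\xi^h$ (whose origin you correctly identify as the degeneration of the moneyness condition to a deterministic demand condition when coal alone is marginal), and evaluation of the inner expectations via Lemma \ref{lem:uni_bi_gaussian} exactly as in Proposition \ref{prop:forward}. Your explanation of why the seven integration limits arise is in fact slightly more explicit than the paper's own justification, and the argument is correct.
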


\begin{proof}
As in the proof of Proposition \ref{prop:forward}, by iterated conditioning, for $t\in[0,T]$, the price of the dark spread $V_t$ is given by
\begin{align*}
 V_t :&= \exp\left(-r(T-t)\right)\E^\mathbb{Q}\left[\left.\left(P_T - h_cS^c_T\right)^{+}\right|\mathcal{F}_t\right]\\
    &= \exp\left(-r(T-t)\right)\E^\mathbb{Q}\left[\E^\mathbb{Q}\left[\left.\left(b(D_T,\mathbf{S}_T)-h_cS^c_T\right)^{+}\right|\mathcal{F}^0_T\vee\mathcal{F}^W_t\right]\mathcal{F}_t\right].
\end{align*}
Again we write the outer expectation as the sum of integrals corresponding to the different forms the payoff can take, since the functional form of $b$ is different for $D_T$ lying in the intervals $[0,\bar{\xi}^{i_-}]$, $[\bar{\xi}^{i_-},\bar{\xi}^{i_+}]$, and $[\bar{\xi}^{i_+},\bar{\xi}]$. In addition the functional form of the payoff now depends on whether $D_T\leq \xi^h$ or $D_T\geq \xi^h$ and on the magnitude of $\xi^h$ relative to $\bar{\xi}^c$ and $\bar{\xi}^g$. Therefore, the first case is subdivided into the intervals $[0,\min(\bar{\xi}^g,\xi^h)]$, and $[\min(\bar{\xi}^g,\xi^h),\bar{\xi}^{i_-}]$; the second case is subdivided into $[\bar{\xi}^{i_-},\max(\bar{\xi}^g,\xi^h)]$, $[\max(\bar{\xi}^g,\xi^h),\min(\bar{\xi}^g+\xi^h,\bar{\xi}^c)]$, and $[\min(\bar{\xi}^g+\xi^h,\bar{\xi}^c),\bar{\xi}^{i_+}]$; the third case is subdivided into $[\bar{\xi}^{i_+},\max(\bar{\xi}^c,\bar{\xi}^g+\xi^h)]$, and $[\max(\bar{\xi}^c,\bar{\xi}^g+\xi^h),\bar{\xi}]$.

The integrands $v$ in \eqref{eq:V_integrals} are obtained by calculating the inner expectation for each demand interval listed above, in a similar fashion as in Proposition \ref{prop:forward}. 
\end{proof}

Note that \eqref{eq:V_integrals} requires seven terms in order to cover all possible values of $h_c$ within the range given by \eqref{eq:spread_heat_rate_restriction}, as well as the two cases $c=i_+$ and $c=i_-$.  However, only five of the seven terms appear at once, with only the second or third appearing (depending on $h_c\lessgtr\exp(k_c+m_c\bar{\xi}^g)$) and only the fifth or sixth (depending on $h_c\lessgtr\exp(k_c+m_c(\bar{\xi}^c-\bar{\xi}^g))$).  These conditions can equivalently be written as $\xi^h\lessgtr\bar{\xi}^g$ and $\xi^h\lessgtr\bar{\xi}^c-\bar{\xi}^g$.  Notice that if $c=i_-$, we deduce that $\xi^h<\bar{\xi}^g$ and $\xi^h>\bar{\xi}^c-\bar{\xi}^g$ irrespective of $h_c$, while for $c=i_+$ several cases are possible.

Similar to the analysis of the forward contract earlier, if demand is assumed to be deterministic, then the spread option price is given explicitly by choosing the appropriate integrand from Proposition 
\ref{prop:forward}.  To now obtain a convenient closed-form result for unknown demand, we extend our earlier notational tool for combining Gaussian distribution functions.  For any integer $n$, let 
\begin{equation*}
\Phi_2^{2\times n}\left(\left[\begin{array}{cccc}x_{11}&x_{12}&\cdots &x_{1n}\\x_{21}&x_{22}&\cdots &x_{2n}\end{array}\right],y;\rho\right)= 
\sum_{i=1}^n \left[\Phi_2(x_{1i},y;\rho)-\Phi_2(x_{2i},y;\rho)\right].
\end{equation*}
In addition, we introduce the following notation to capture all the relevant limits of integration. Define the vector $\mathbf{a}:=(a_1,\ldots,a_8)$ by
\begin{equation*}
 \mathbf{a}: =
 \frac1{\sigma_d}\left[
 \left(\begin{array}{cccccccc} 0,  & \bar{\xi}^g\wedge\xi^h, & \bar{\xi}^{i_-}, & \bar{\xi}^g\vee\xi^h, & \bar{\xi}^c\wedge(\bar{\xi}^g+\xi^h), & \bar{\xi}^{i_+}, & \bar{\xi}^c\vee (\bar{\xi}^g+\xi^h), & \bar{\xi}\end{array}\right)- \mu_d \right].
 \end{equation*}
Notice that the components of $\mathbf{a}$ are in increasing order and correspond to the limits of integration in equation \eqref{eq:V_integrals}. In the case that $c=i_+$, all of these values are needed, while the case $c=i_-$ is somewhat simpler because $a_3=a_4$ and $a_5=a_6$ (since by \eqref{eq:spread_heat_rate_restriction}, $\xi^h<\bar{\xi}^c$). However, the result below is valid in both cases as various terms simply drop out in the latter case.
\begin{corollary}\label{prop:spreadGaussianD}
In addition to the assumptions in Proposition \ref{prop:spread} let demand at maturity satisfy
\begin{equation*}
 D_T = \max \left(0,\min\left(\bar{\xi},X_T\right)\right),
\end{equation*}
where $X_T\sim N(\mu_d,\sigma_d^2)$ is independent of $\mathbf{S}_T$. Then for $t\in[0,T]$, the price of a dark spread is given explicitly by
\begin{multline*}
V_t=\exp\left(-r(T-t)\right)\left\{b_c\left(\mu_d,F_t^c\right)\exp\left(\frac{m_c^2\sigma_d^2}{2}\right)\Phi_2^{2 \times 2}\left(\left[\begin{array}{cc}\bar{\xi}^c &a_3\\ a_4 &a_2\end{array}\right],\frac{R_c(\mu_d,0)-m_c^2\sigma_d^2}{\sigma_{c,d}};\frac{m_c\sigma_d}{\sigma_{c,d}}\right)\right.\\
+b_c\left(\mu_d-\bar{\xi}^g,F_t^c\right)\exp\left(\frac{m_c^2\sigma_d^2}{2}\right)\Phi_2^{2 \times 2}\left(\left[\begin{array}{cc}a_8 &a_6 \\ a_7  &a_5\end{array}\right],\frac{-R_c\left(\mu_d-\bar{\xi}^g,\bar{\xi}^g\right)+m_c^2\sigma_d^2}{\sigma_{c,d}};\frac{-m_c\sigma_d}{\sigma_{c,d}}\right)\\
+ b_g\left(\mu_d-\bar{\xi}^c,F_t^g\right)\exp\left(\frac{m_g^2\sigma_d^2}{2}\right)\Phi_2^{2 \times 1}\left(\left[\begin{array}{c}a_8 \\ \bar{\xi}^c\end{array}\right],\frac{-R_g\left(\mu_d-\bar{\xi}^c,\bar{\xi}^c\right)+m_g^2\sigma_d^2}{\sigma_{g,d}};\frac{-m_g\sigma_d}{\sigma_{g,d}}\right)\\
- h_cF_t^c\Phi_2^{2 \times 3}\left(\left[\begin{array}{ccc}a_7 & a_5 &a_3 \\ a_6  &a_4  &a_2 \end{array}\right],\frac{\tilde{R}_c\left(\left(\log h_c-\beta-\gamma \mu_d\right)/\alpha_g\right)}{\sigma_{g,\gamma}};\frac{-\gamma\sigma_d}{\alpha_g\sigma_{g,\gamma}}\right)\\
+b_{cg}\left(\mu_d,\mathbf{F}_t\right)\exp\left(\eta\right)  \left\{ \Phi_2^{2 \times 2}\left(\left[\begin{array}{cc}\bar{\xi}^c &a_3 \\ a_4  &a_2\end{array}\right],\frac{-R_c(\mu_d,0)-\alpha_g\sigma^2+\gamma m_c\sigma_d^2}{\sigma_{c,d}};\frac{-m_c\sigma_d}{\sigma_{c,d}}\right)\right.\\
-\Phi_2^{2 \times 2}\left(\left[\begin{array}{cc}a_8 &a_6 \\ a_7  &a_5\end{array}\right],\frac{-R_c\left(\mu_d-\bar{\xi}^g,\bar{\xi}^g\right)-\alpha_g\sigma^2+\gamma m_c\sigma_d^2}{\sigma_{c,d}};\frac{-m_c\sigma_d}{\sigma_{c,d}}\right)\\
+\Phi_2^{2 \times 1}\left(\left[\begin{array}{c}a_8 \\ \bar{\xi}^c\end{array}\right],\frac{R_g\left(\mu_d-\bar{\xi}^c,\bar{\xi}^c\right)+\alpha_c\sigma^2-\gamma m_g\sigma_d^2}{\sigma_{g,d}};\frac{m_g\sigma_d}{\sigma_{g,d}}\right)\\
-\left.\Phi_2^{2 \times 3}\left(\left[\begin{array}{ccc}a_7 & a_5 &a_3 \\ a_6  &a_4  &a_2\end{array}\right],\frac{-\tilde{R}_c\left((\log H-\beta-\gamma\ \mu_d)/\alpha_g\right)-\alpha_g\sigma^2-\gamma^2\sigma_d^2/\alpha_g}{\sigma_{g,\gamma}};\frac{\gamma\sigma_d}{\alpha_g\sigma_{g,\gamma}}\right)\right\}\\
+\left.\Phi_1\left(-a_8\right)\sum_{i\in I} b_i\left(\bar{\xi}^i,F^i_t\right)\Phi_1\left(\frac{-R_i\left(\bar{\xi}^i,\bar{\xi}^j\right)}{\sigma}\right)-h_cF_t^c\left(1-\Phi_1\left(a_7\right)+\Phi_1(a_6)-\Phi_1(a_5)\right)\right\},
\end{multline*}
 where
 \begin{equation*}
\tilde{R_i}(z) := z + \log\left(F^j_t\right)-\log\left(F^i_t\right) - \frac12\sigma^2 \quad \text{and} \quad \sigma_{i,\gamma}^2:=\gamma^2\sigma_d^2/\alpha_i^2+\sigma^2.
 \end{equation*}
\end{corollary}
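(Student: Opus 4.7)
The plan is to start from the seven-interval integral representation of $V_t$ given in Proposition \ref{prop:spread} and to evaluate each piece explicitly when $D_T$ is the truncated Gaussian specified in the statement. Each integrand $v_\cdot(D,\mathbf{F}_t)$ given in Appendix \ref{ap:spread_formula} is, by construction, a finite sum of terms of the generic form $\exp(l_1+q_1 D)\,\Phi_1(l_2+q_2 D)$, together with a strike-type term proportional to $h_c F^c_t$ times a $\Phi_1$-factor. Against the Gaussian density of $X_T$, each such term is exactly of the shape to which Lemma \ref{lem:uni_bi_gaussian} applies, now with \emph{finite} upper limit $a$ (rather than $a=\infty$ as in the proof of Proposition \ref{prop:forward}). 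Applying the lemma on an interval $[D_1,D_2]$ produces a difference $\Phi_2(a_2-q_1,\cdot;\rho)-\Phi_2(a_1-q_1,\cdot;\rho)$, which is precisely what the notation $\Phi_2^{2\times 1}$ was introduced to abbreviate.

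First, I would rewrite each of the seven integrands so that the dependence on $D$ appears only inside a single exponential and a single Gaussian CDF, pulling out all $\mathcal{F}_t$-measurable factors; this is where the substitution $\mu_i\to\log F^i_t-\sigma_i^2/2$ from \eqref{eq:fuel_forward} is used, together with the elementary identity $\alpha_c^2\sigma_c^2+2\alpha_c\alpha_g\rho\sigma_c\sigma_g+\alpha_g^2\sigma_g^2 = \sigma^2/(\text{coefficient})$, yielding the constant $\eta=(\gamma^2\sigma_d^2-\alpha_c\alpha_g\sigma^2)/2$ that appears in every $b_{cg}$ contribution. Second, I would apply Lemma \ref{lem:uni_bi_gaussian} to each term on each subinterval, with the understanding that the relevant $q_1$ is $m_i\sigma_d$ for the single-fuel pieces, $\gamma\sigma_d$ for the joint pieces, and $0$ for the $-h_c S^c_T$ piece. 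This produces, for every integrand, a collection of $\Phi_2$-differences with second argument built from the various $R_i$ and $\tilde R_c$ functions and correlation $\pm m_i\sigma_d/\sigma_{i,d}$ or $\pm\gamma\sigma_d/(\alpha_g\sigma_{i,\gamma})$.

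Third, the key bookkeeping step is to group the resulting $\Phi_2$-differences across the seven adjacent intervals. Because the integrands on consecutive intervals share several of the same exponential coefficients (the fuel bid curve $b_c$ or $b_{cg}$ changes its functional form only at the interval endpoints $a_k$), consecutive $\Phi_2$-differences telescope into $\Phi_2^{2\times n}$-blocks whose rows of limits are exactly the vector $\mathbf{a}$ defined in the text; the cases $c=i_+$ and $c=i_-$ are handled uniformly thanks to the convention that redundant columns collapse when $a_3=a_4$ or $a_5=a_6$. Finally, I would add the contributions from the two atoms of $D_T$ at $0$ and $\bar\xi$ (of mass $\Phi_1(-\mu_d/\sigma_d)$ and $\Phi_1((\mu_d-\bar\xi)/\sigma_d)$); for the dark spread the atom at $0$ contributes zero because the option is out of the money there, while the atom at $\bar\xi$ produces the last sum in the formula, together with the $-h_c F^c_t$ correction written as $\Phi_1(a_7)$, $\Phi_1(a_6)$, $\Phi_1(a_5)$.

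The main obstacle is purely organisational rather than analytic: one has to verify that the $3\times 2+2+3+1+2=\ldots$ individual $\Phi_2$-terms produced by Lemma \ref{lem:uni_bi_gaussian} aggregate exactly into the four displayed $\Phi_2^{2\times n}$ blocks (with the matching $n=1,2,3$), and that the strike contributions $-h_c F^c_t\Phi_2(\cdot,\cdot;\cdot)$ align at the three endpoints $a_2\le a_4\le a_6$ where the in-the-money region is cut. Once the correspondence between the seven integration intervals of Proposition \ref{prop:spread} and the columns of each $\Phi_2^{2\times n}$ is written down in a table, the simplification of the quadratic exponents into $\exp(m_c^2\sigma_d^2/2)$, $\exp(m_g^2\sigma_d^2/2)$, and $\exp(\eta)$ is automatic, and the stated closed-form follows.
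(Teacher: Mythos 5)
Your proposal is correct and follows essentially the same route as the paper, whose own proof is a two-line remark observing that every integrand in Proposition \ref{prop:spread} has demand appearing linearly inside the exponential and the Gaussian cdf, so that Lemma \ref{lem:uni_bi_gaussian} with finite limits applies term by term and the results simplify; your write-up supplies the bookkeeping (grouping adjacent intervals into the $\Phi_2^{2\times n}$ blocks, substituting fuel forwards for $\mu_i$, and handling the atoms of $D_T$ at $0$ and $\bar{\xi}$) that the paper leaves implicit. The only blemish is the garbled statement of the variance identity for the $b_{cg}$ term, which should read that $\alpha_c^2\sigma_c^2+2\alpha_c\alpha_g\rho\sigma_c\sigma_g+\alpha_g^2\sigma_g^2-\alpha_c\sigma_c^2-\alpha_g\sigma_g^2=-\alpha_c\alpha_g\sigma^2$, combining with the demand integration to give $\eta$; this is a presentational slip, not a gap.
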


\begin{proof}
All terms in \eqref{eq:V_integrals} have the same form as those in Proposition \ref{prop:forward} for forwards: demand appears linearly inside each Gaussian distribution function and in the exponential function multiplying it. Hence, applying Lemma \ref{lem:uni_bi_gaussian} and simplifying lead to the result in the corollary.
\end{proof}

\begin{remark}
Under the extended model introduced in \textsection \ref{str:spikes} to capture spikes, spread prices are given by the same expression as in Proposition \ref{prop:spreadGaussianD} plus the following simple terms\footnote{We require only two of the four extra terms in Remark \ref{rmk:forwards_spikes} due to our assumption on $h_c$ in \eqref{eq:spread_heat_rate_restriction}, which guarantees that for the spike regime, the option is always in the money, while for the negative price regime, it never is.  Hence, if we were to consider put spread options instead of calls, the other two terms would be needed instead.}
\begin{equation*}
\exp\left(m_s(\mu_d-\bar{\xi})+\frac12m_s^2\sigma_d^2\right) \Phi_1\left(\frac{\mu_d-\bar{\xi}}{\sigma_d}+m_s\sigma_d\right) -\Phi_1\left(\frac{\mu_d-\bar{\xi}}{\sigma_d}\right).
\end{equation*}
\end{remark}

\section{Numerical Analysis of Spread Option Prices}
\label{str:pp_valuation}

In this section, we investigate the implications of the two-fuel exponential stack model of \textsection \ref{str:exp_two_fuel} on spread option prices and power plant valuation, as compared to two common alternative approaches.  We analyse prices for various parameter choices and option characteristics, as well as fuel forward curve scenarios.  

Recall that the closed-form spread option prices given by Corollary \ref{prop:spreadGaussianD} required no specification of a stochastic model for fuel prices, but instead imposed only a lognormality condition at maturity $T$.  However, for the purpose of comparing prices across maturities and across modeling approaches, we select a simple example of fuel price dynamics consistent with Corollary \ref{prop:spreadGaussianD}.  We assume coal $(S^c_t)$ and gas $(S^g_t)$ follow correlated exponential OU processes under the measure $\mathbb{Q}$. i.e., for $i \in I$ and $t\in[0,T]$,
\begin{eqnarray}
\label{expOU}
\Id(\log S^i_t) &=& \kappa_i \left(\lambda_i-(\log S^i_t)\right)\Id t + \nu_i \Id W^i_t, \quad S^i_0=s^i
\end{eqnarray}
where $\Id\left<W^c,W^g\right>_t = \varrho \Id t$.  As gas and coal are treated identically in the bid stack model, we consider the symmetric case in which all coal and gas parameters are equal,\footnote{While it is of course not realistic for gas and coal to have identical \emph{prices} (typically differ by a magnitude of about ten and trade in different units), it is plausible that after adjusting for different heat rates, the coal and gas \emph{bids} could indeed coincide and have similar volatility, make our symmetric case less hypothetical.} including both the parameters in \eqref{expOU} and in the exponential fuel bid curves, defined in \eqref{eq:exp_bs}.  All are listed in Table \ref{tbl:parameters}, with the exception of $\varrho$, which we vary throughout our analysis.  All prices are calculated for time $t=0$.  Note that for a given maturity $T$, the parameters $\mu_c$, $\mu_g$, $\sigma_c$, $\sigma_g$, and $\rho$ appearing in \textsection \ref{str:forwards} and \textsection \ref{str:spreads} are related to those in \eqref{expOU} by the following standard results (for $i\in I$):
\begin{eqnarray*}
\mu_i &=& s^i \exp\left(-\kappa_i T\right)+\lambda_i\left(1-\exp\left(-\kappa_i T\right)\right), \\
\sigma_i^2 &=& \frac{\nu_i^2}{2\kappa_i}\left(1-\exp\left(-2\kappa_i T\right)\right),\\
\rho\sigma_c\sigma_g &=&\frac{\varrho\nu_c\nu_g}{\kappa_c+\kappa_g}\left(1-\exp\left(-\kappa_c T-\kappa_g T\right)\right).
\end{eqnarray*}
As for fuel prices, recall that no particular process is required for electricity demand in our model.  Typically driven by temperature, demand is often modelled as rapidly mean-reverting to a seasonal level.  Hence, in our examples, we assume $(X_t)$, as introduced in Corollary \ref{prop:spreadGaussianD}, to be an independent OU process, with a high value for mean-reversion speed (e.g. 100 or more).  For our aim of pricing options with maturities of several months or even years, the values of these parameters are insignificant, as $D_T$ is always well approximated by its stationary distribution.  Hence, in Table \ref{tbl:parameters} we list only the values $\mu_d$ and $\sigma_d$.  We also assume the interest rate $r=0$ throughout.  This completes the base parameter set to be used throughout this section unless otherwise stated. 
\begin{table}[tbp]
 \begin{center}
  \begin{tabular}{ccc|cccc|cc|c}
    \toprule
    \multicolumn{3}{c}{Bid curves}  &\multicolumn{4}{c}{Fuel price processes}  & \multicolumn{2}{c}{Demand} & Rate\\
    \midrule
	$k_i$ & $m_i$ & $\bar{\xi}^i$  &  $\kappa_i$& $\nu_i$ & $\lambda_i$ & $s^i$  & $\mu_d$ &$\sigma_d$ & r\\
    2 & 1 & 0.5 & 1 & 0.5 & $\log(10)$ & $10$ & 0.5 & 0.2 & 0 \\
   \bottomrule
  \end{tabular}
 \end{center}
 \caption{Parameters used throughout \textsection \ref{str:pp_valuation} (for $i\in I$)}  \label{tbl:parameters}
\end{table}

Next, we introduce three scenarios designed to assess the role of observed fuel forward curves and corresponding implications for bid stack structure.  Fuel forward curves reveal crucial information about the probability of future merit order changes.  For example, if bids from coal and gas are currently at similar levels (as in our parameter set) but one fuel is in backwardation (forwards decreasing in $T$), while the other is in contango (increasing in $T$), then the future dynamics of power prices (under $\mathbb{Q}$) should reflect the high chance of the coal and gas bids separating.  We compare the following scenarios:
\begin{enumerate}[(I)]
\item No fuel forward inputs (forward prices implied from \eqref{expOU})
\item Gas in contango; coal in backwardation (linear with stepsize of 0.2 per month)
\item No fuel forwards, but gas bids above coal ($\lambda_c=\log(s^c)=\log(7)$, $\lambda_g=\log(s^g)=\log(13)$)
\end{enumerate}
Note that in Scenario II, fuel forward curves are inputs assumed to be observed from the market, instead of being generated by a model. The standard approach to resolving the inconsistency between the market and the model in \eqref{expOU} is to calibrate to fuel forwards for each $T$ via a shift in the mean level $\mu_i$ (or more formally via a time dependent long-term mean $\lambda_i$).  

\subsection{Spread Option Price Comparison}\label{str:spread_comp}

To test our model's prices for spark and dark spread options, we compare with two other typical approaches to spread option pricing: Margrabe's formula (cf. \cite{wMargrabe1978}) and a simple cointegration model (cf. \cite{gEmery2002}, \cite{cJong2009} for discussions of cointegration between electricity and fuels).   

\subsubsection{Margrabe's Formula}\label{str:margrabe}
Assume that under the measure $\mathbb{Q}$, the electricity price $P_T$ and fuel price $S^i_T$, $i\in I$, are jointly lognormal, with correlation $\rho_{p,i}$.  Writing $\mu_p$ and $\sigma_p^2$ for the mean and variance of $\log(P_T)$, then for $t\in[0,T]$, the price of a spread option with payoff \eqref{eq:spread_payoffs} is given by
\begin{equation*}\label{eq:Margrabe}
V^m_t=\exp\left(-r(T-t)\right) \left[ F^p_t\Phi_1\left(\frac{\log\left(F_t^p/h_iF_t^i\right)+\sigma_{p,i}^2/2}{\sigma_{p,i}}\right)- h_iF^i_t\Phi_1\left(\frac{\log\left(F_t^p/h_iF_t^i\right)-\sigma_{p,i}^2/2}{\sigma_{p,i}}\right)\right],
\end{equation*}
where $\sigma_{p,i}^2=\sigma_p^2-2\rho_{p,i}\sigma_p\sigma_i+\sigma_i^2$, and all other notation is as before.

\subsubsection{Cointegration Model}
Let $Y_T$ be an independent Gaussian random variable under $\mathbb{Q}$, with mean $\mu_y$ and variance $\sigma_y^2$.  Then for constant weights $w_c,w_g>0$ (the cointegrating vector), we define $P_T$ by
\begin{equation*}
 P_T:=w_cS^c_T + w_gS^g_T + Y_T.
\end{equation*}
No closed form results are available for spread options, so prices are determined by simulation.

\subsubsection{Comments on Comparison Methodology}\label{str:comments}

In order to achieve a sensible comparison between the stack model and either Margrabe or the cointegration model, the mean and variance of $P_T$ should be chosen appropriately, for each maturity.  For a single fixed $T$ this simply requires choosing parameters $\mu_p$ and $\sigma_p$ (in the case of Margrabe) or $\mu_y$ and $\sigma_y$ (in the case of the cointegration model) to exactly match the mean and the variance of $P_T$ produced by the stack model.  As we shall often compare models simultaneously across many maturities, we use a variation of this idea, finding the best fit of an OU process for $\log(P_t)$ or $(Y_t)$.  For Margrabe,
\begin{equation*}
\Id(\log P_t) = \kappa_p \left(\lambda_p-(\log P_t)\right)\Id t + \nu_p \Id W^p_t, \qquad  P_0=p_0, 
\end{equation*}
where $\Id\left<W^p,W^i\right>_t = \varrho_{p,i} \Id t$ for $i \in I$, while for the cointegration case $(Y_t)$ is an independent OU process.  In all cases we utilize the closed form expression for the variance of $P_T$ in the stack model, given in Appendix \ref{ap:moments}.  

Finally, to reflect the symmetry between coal and gas in our parameter set, in the cointegration model we set $w_c = w_g= 1/2\exp(k_c+m_c(\mu_d\bar{\xi}^c))$, such that the power price has equal dependence on each fuel, with a price level linked to the most likely marginal bid levels of coal and gas.  Note that since both underlying fuels appear in the cointegration model like in the stack model, a comparison across values of fuel price correlation $\varrho$ is quite natural.  On the other hand, the correlation parameter which we vary in the Margrabe approach is $\varrho_{p,i}$, which correlates electricity and fuel.  Hence the choices of $\varrho$ or $\varrho_{p,i}$ in the following plots are not perfect comparisons, but rather illustrate the role of correlation in generating a range of prices for each model.  Due to variations across maturities, there is no direct link between a chosen value $\varrho$ and an appropriate $\varrho_{p,i}$ short of estimating these from data.
 
\begin{figure}[htbp]
  \centering
  \subfloat[All models (Scenario I)]{\label{CompH1}\includegraphics[width=0.5\textwidth]{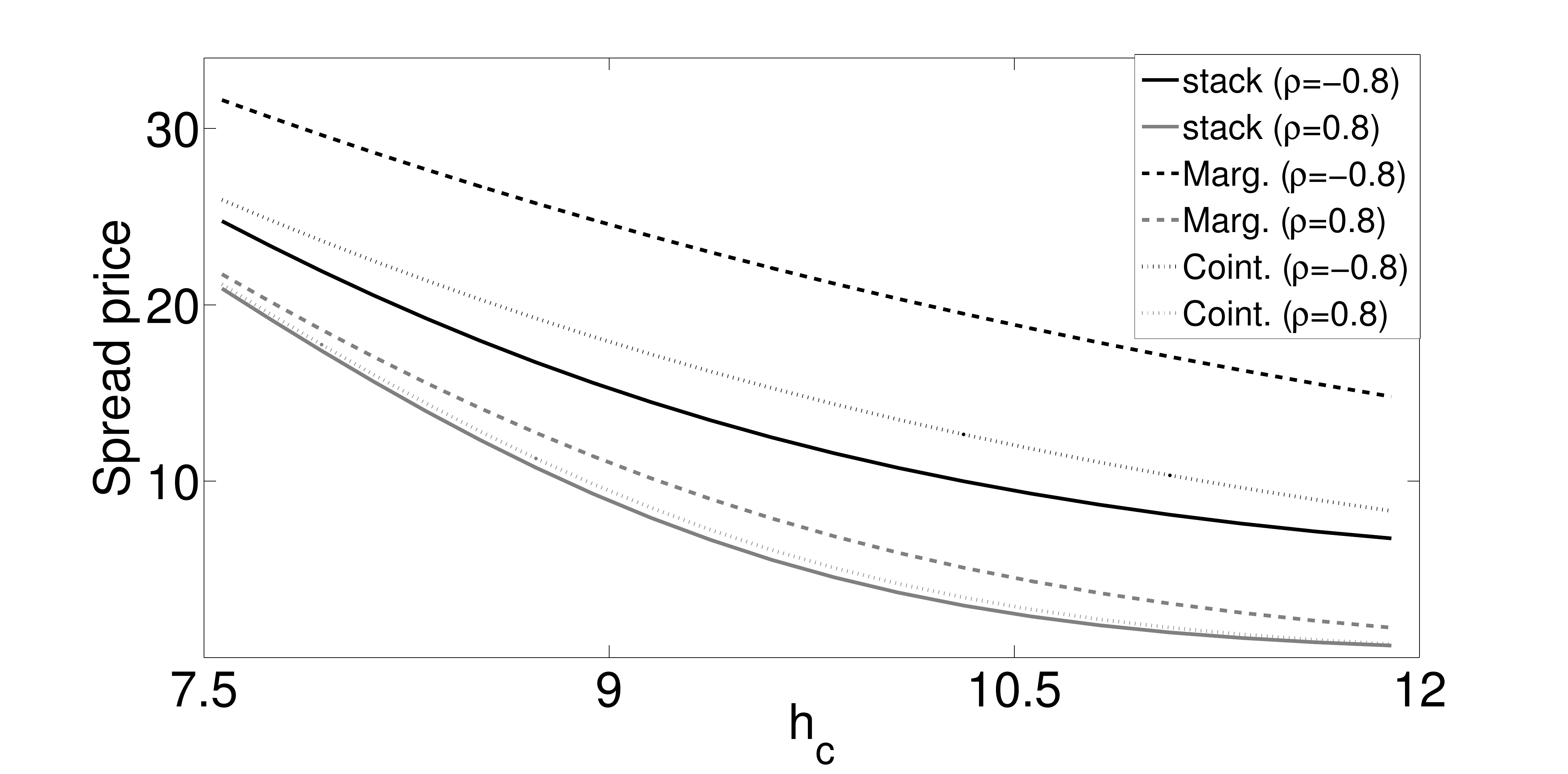}}
  \subfloat[Stack model ($\mu_d=0.3$,$\sigma_d=0.12$)]{\label{CompH2}\includegraphics[width=0.5\textwidth]{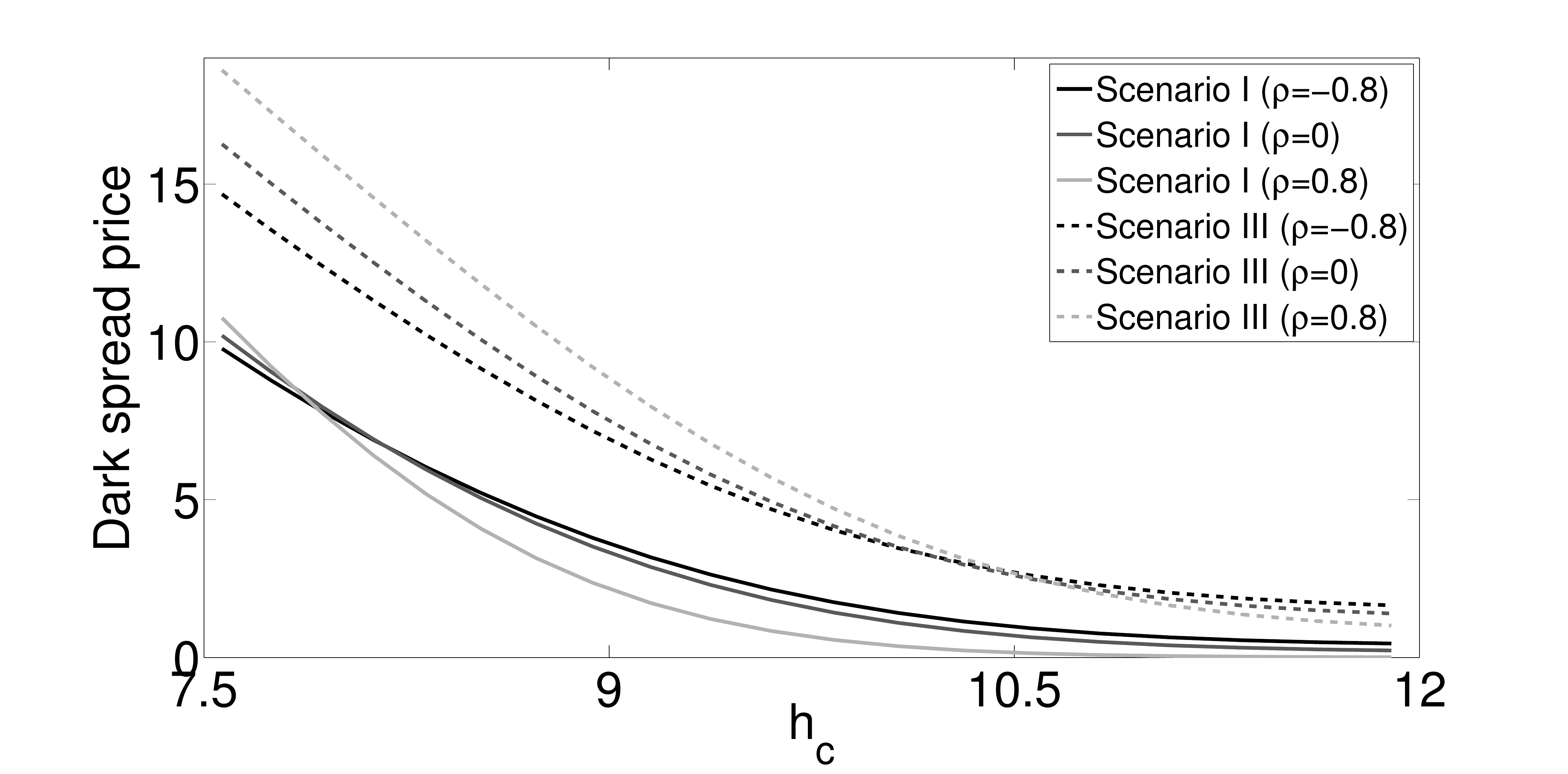}}
   \caption{Spread option prices against $h_i$ for different correlations and demand levels}
  \label{ComparisonH}
 \end{figure}

\subsection{Spread Option Parameter Analysis}\label{str:parameter_analysis}
\subsubsection{Spread prices versus heat rate $h_i$}
For a fixed maturity $T=1$, we plot dark spread option prices as a function of heat rate $h_c$ (over a range corresponding to \eqref{eq:spread_heat_rate_restriction}).  Figure \ref{CompH1} illustrates the Scenario I results for each of the three models considered, for two different correlation parameters, $\varrho=\pm 0.8$. In all three models, negative correlation logically raises the option price, as it increases the volatility of the spread.  The stack model generally predicts lower spread option prices than Margrabe, and a smaller gap between different correlation levels.  This is because the strong structural link keeps long-term levels of power and gas close together, thus narrowing the spread distribution relative to the weaker case of correlated Brownian Motions.  The gap between Margrabe and the stack model widens with negative correlation, while for extreme positive correlation (not plotted), Margrabe can underprice the stack model.  Another way of understanding this phenomenon is to notice that the bid stack automatically imposes a positive dependence structure between electricity and its underlying fuels, which can only be somewhat weakened or strengthened by varying fuel price correlation through $\varrho$.  The cointegration approach shares this characteristic, and therefore prices much closer to the stack model than Margrabe, but still somewhat higher for $\varrho=0.8$.    

Unlike for Margrabe, Figure \ref{CompH2} reveals interestingly that the dependence on correlation $\varrho$ does not hold strictly for the stack model, as exceptions can be found.  Considering a market with very low demand ($\mu_d=0.3,\sigma_d=0.12)$, we observe that the relationship with $\varrho$ is reversed for low values of $h_c$.  Since demand is so low, the price is typically set by the cheaper fuel and hence only in the money for coal bids generally below gas.  Hence for negative $\varrho$, we do not receive the typical high payoffs from high gas and low coal states, and get zero payoff when coal moves above gas in the stack.  Figure \ref{CompH2} also confirms that this reversal is most pronounced if coal bids are shifted to be below gas bids, as described by Scenario III. 

\begin{figure}[htbp]
     \centering
  \subfloat[Scenario I: Spark or dark spread]{\label{CompT1}\includegraphics[width=0.5\textwidth]{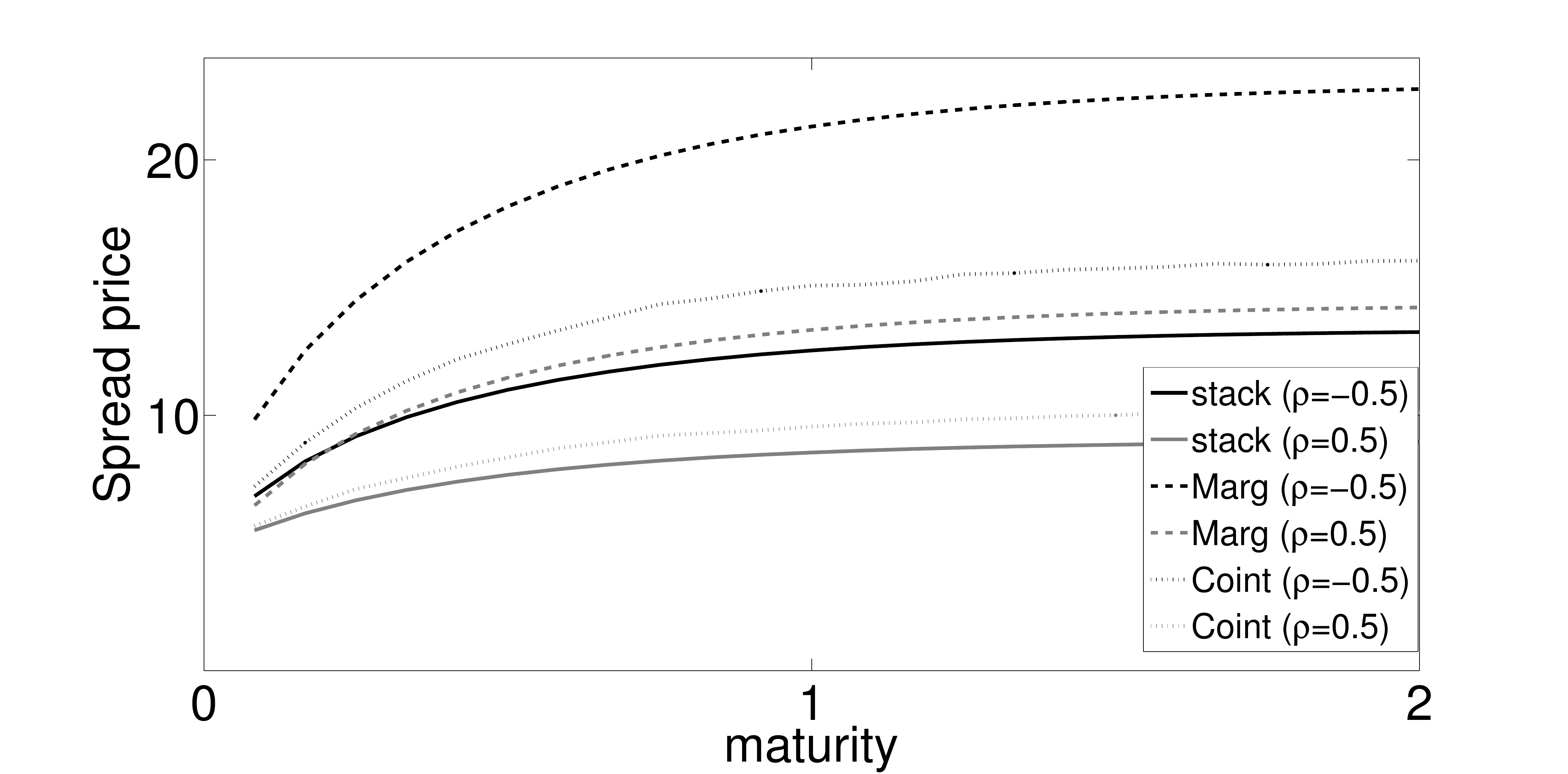}}
  \subfloat[Scenario II: Spark spread]{\label{CompT2}\includegraphics[width=0.5\textwidth]{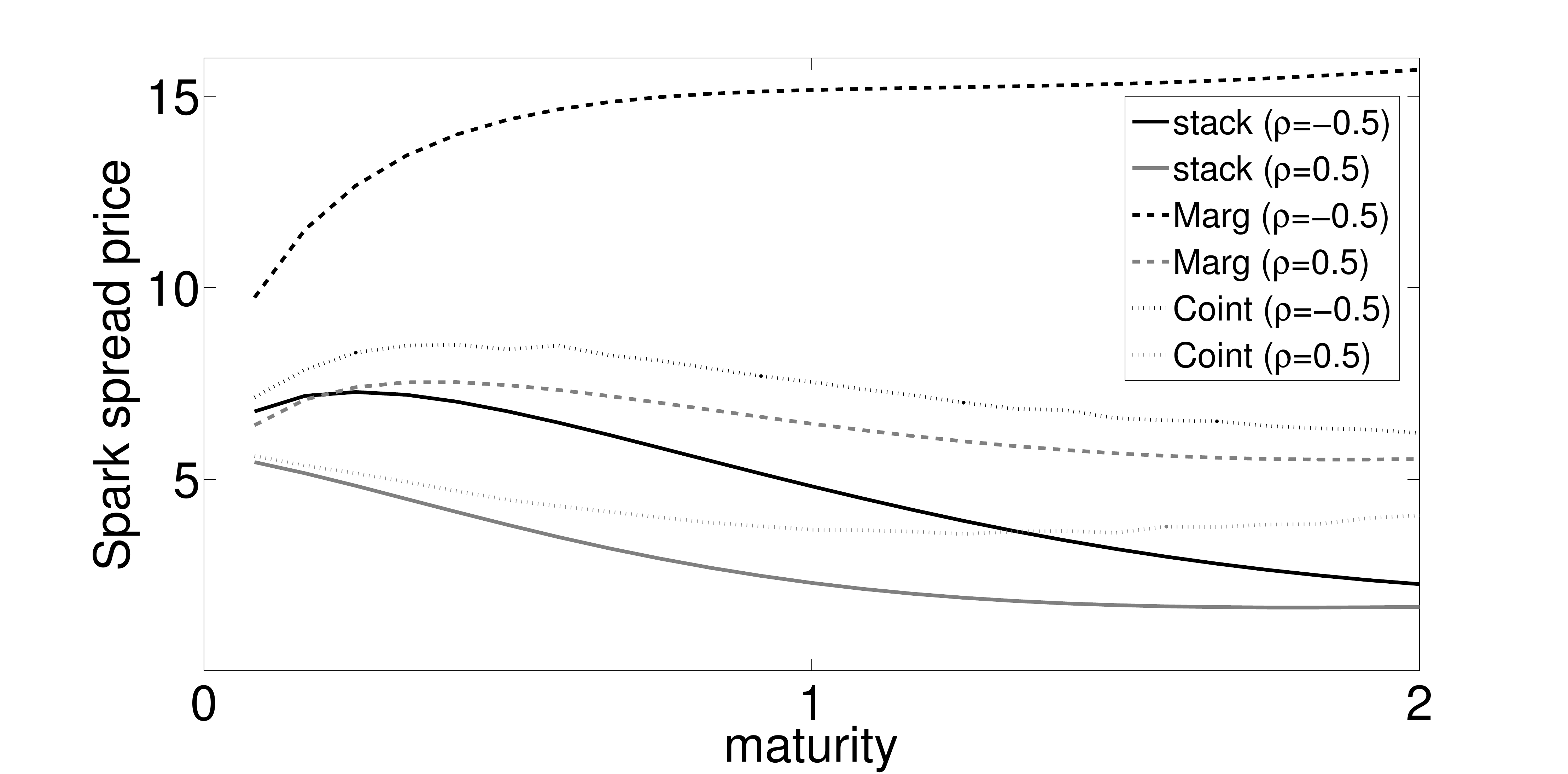}}\\
  \caption{Option prices against $T$ for different correlations and fuel forward scenarios}
\label{ComparisonT}
\end{figure}

\subsubsection{Spread prices versus maturity $T$}
We next investigate spread option prices against maturity $T$, and for this purpose fix $h_i=\exp(k_i+m_i\bar{\xi}^i/2)$ in option payoffs, matching the median heat rate in the market.  We again compare several correlation levels for all three models, and now include Scenario II to test the impact of fuel forward curves.  In Figure \ref{CompT1} (Scenario I), spread options are typically increasing in maturity as expected, and flatten out as the price processes approach their stationary distributions, with an ordering of the three models resembling Figure \ref{CompH1}.  More interestingly, in Scenario II (Figure \ref{CompT2}), longer term spark spread options drop significantly in value in stack model, are thus greatly overpriced by Margrabe, and significantly overpriced by the cointegration model as well.  As the gas forward curve is now in contango, while coal is in backwardation, coal will almost always be below gas in the future bid stack, especially for very long maturities.  Hence, a spark spread option has relatively little chance of being in the money, as this would require unusually high demand.  This is a good example of a dependency which cannot be captured by Margrabe or other reduced-form models, but is automatically captured by the merit order built into the stack model.  Moreover, fuel forward prices are direct inputs into our expressions for spread options, avoiding the need for an additional calibration step to first match observed fuel forwards, as is the case for the other approaches.

\begin{figure}[htbp]
  \centering
  \subfloat[Scenario II: Spark Spread Comparison]{\label{CompT1var}\includegraphics[width=0.5\textwidth]{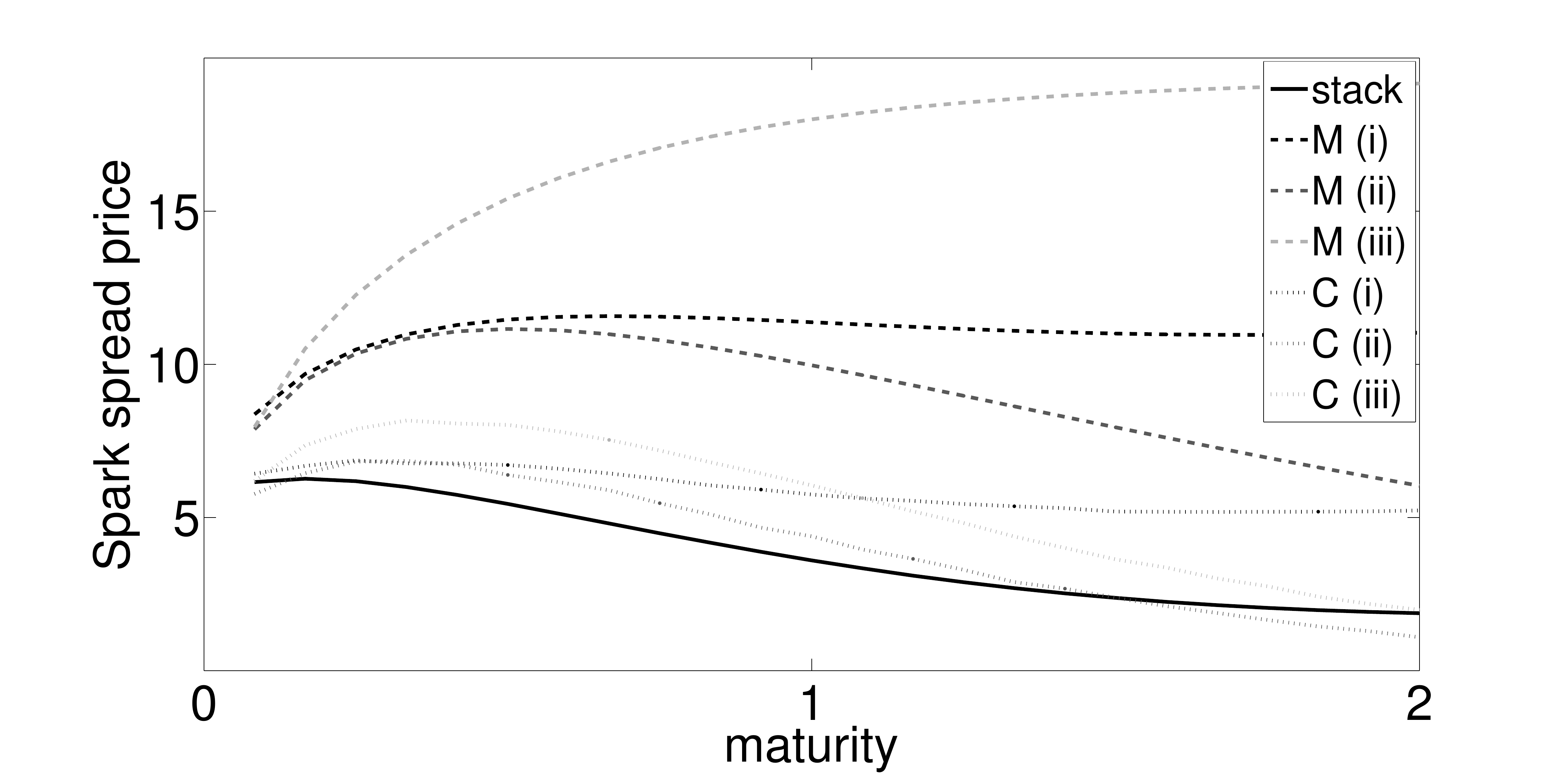}}
  \subfloat[Comparison of variances]{\label{CompT2var}\includegraphics[width=0.5\textwidth]{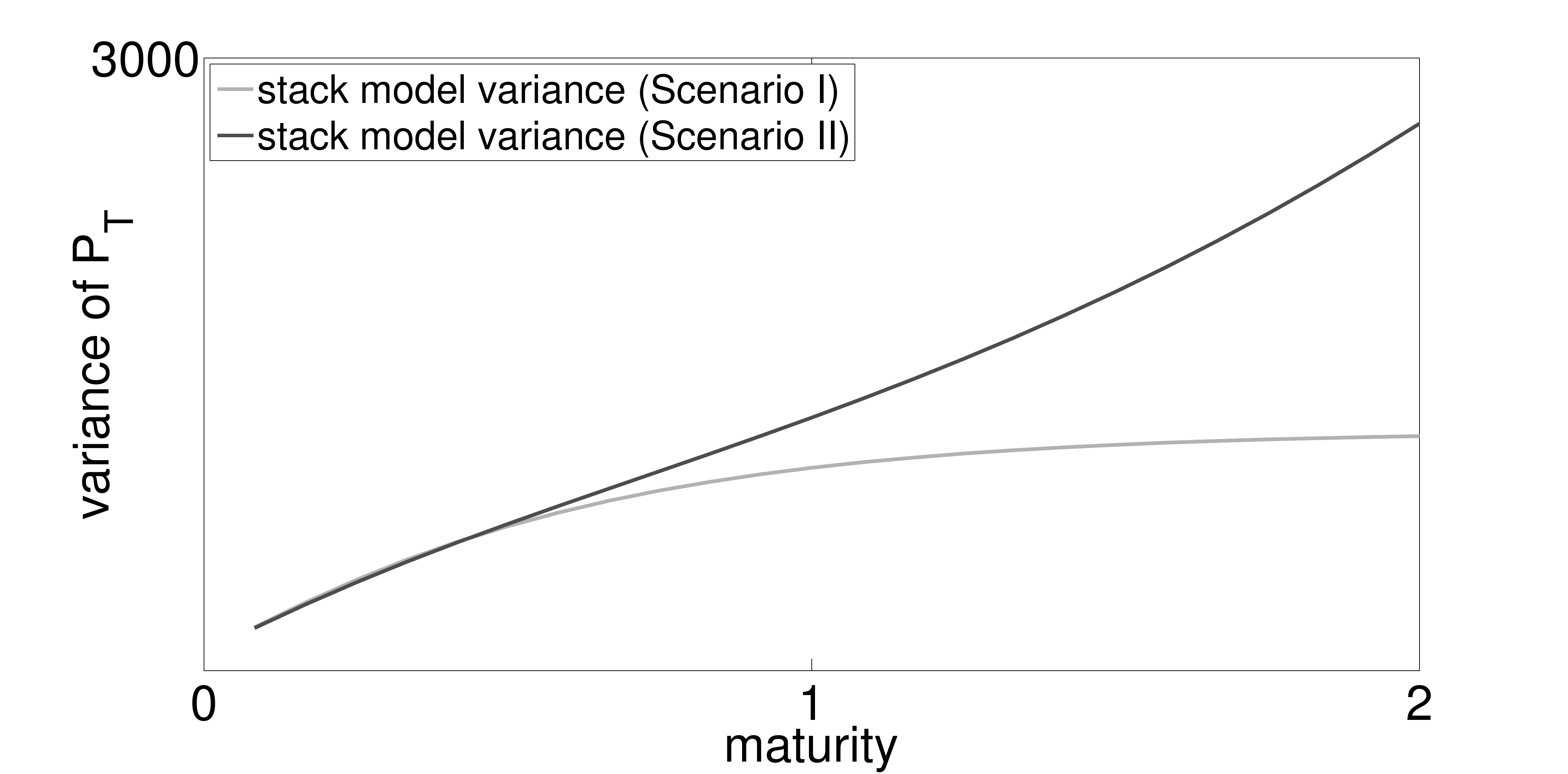}}
   \caption{Analysis of impact of matching mean and variance of $P_T$ for Scenario II}
 \label{ComparisonTvariances}
 \end{figure}
 
So far all plots have assumed that both the mean and variance of the power price distributions are matched in all three models, via the procedure described in Section \ref{str:comments}.  One might question whether this is realistic.  In practice, we only have history (and possibly observed forward curves) to calibrate each model, and thus should not be borrowing extra information about the future from the stack model's structure when calibrating the other approaches.  While matching the mean is reasonable as it is analogous to matching observed power forwards, matching the variance is less justifiable.  In Figure \ref{CompT1var}, we compare spark spread option prices for Margrabe and the cointegration model in Scenario II (as in Figure \ref{CompT2} but now $\rho=0$) for three different calibration assumptions:  full matching as earlier; matching means but not variances; matching neither means nor variances. Here `not matched' implies that means and/or variances are instead fitted to Scenario I levels (a proxy for history).  We note significant differences between all cases.  Failing to match the mean implies a greater overpricing of spark spreads in this case, while failing to match the variances acts in the opposite direction here, lowering the price since the forward looking variance (implied by the stack in Scenario II) is higher than the variance in Scenario I (see Figure \ref{CompT2var}).  While other scenarios could lead to different patterns, it is clear that significant price differences can occur due to the likely changes in the merit order.  In Margrabe, no information is transmitted from fuel forward prices to the distribution of $P_T$, while in the cointegration model limited information is transmitted, since the relative dependence on coal and gas is fixed initially by $w_c,w_g$, instead of dynamically adapting to fuel price movements (and demand).  In contrast, the stack model produces highly state-dependent power price volatility and correlations reflecting known information about the future market structure.  

\begin{figure}[htbp]
  \centering
  \subfloat[Implied correlation varying $\bar{\xi}^g$]{\label{ImpCorrXiG}\includegraphics[width=0.5\textwidth]{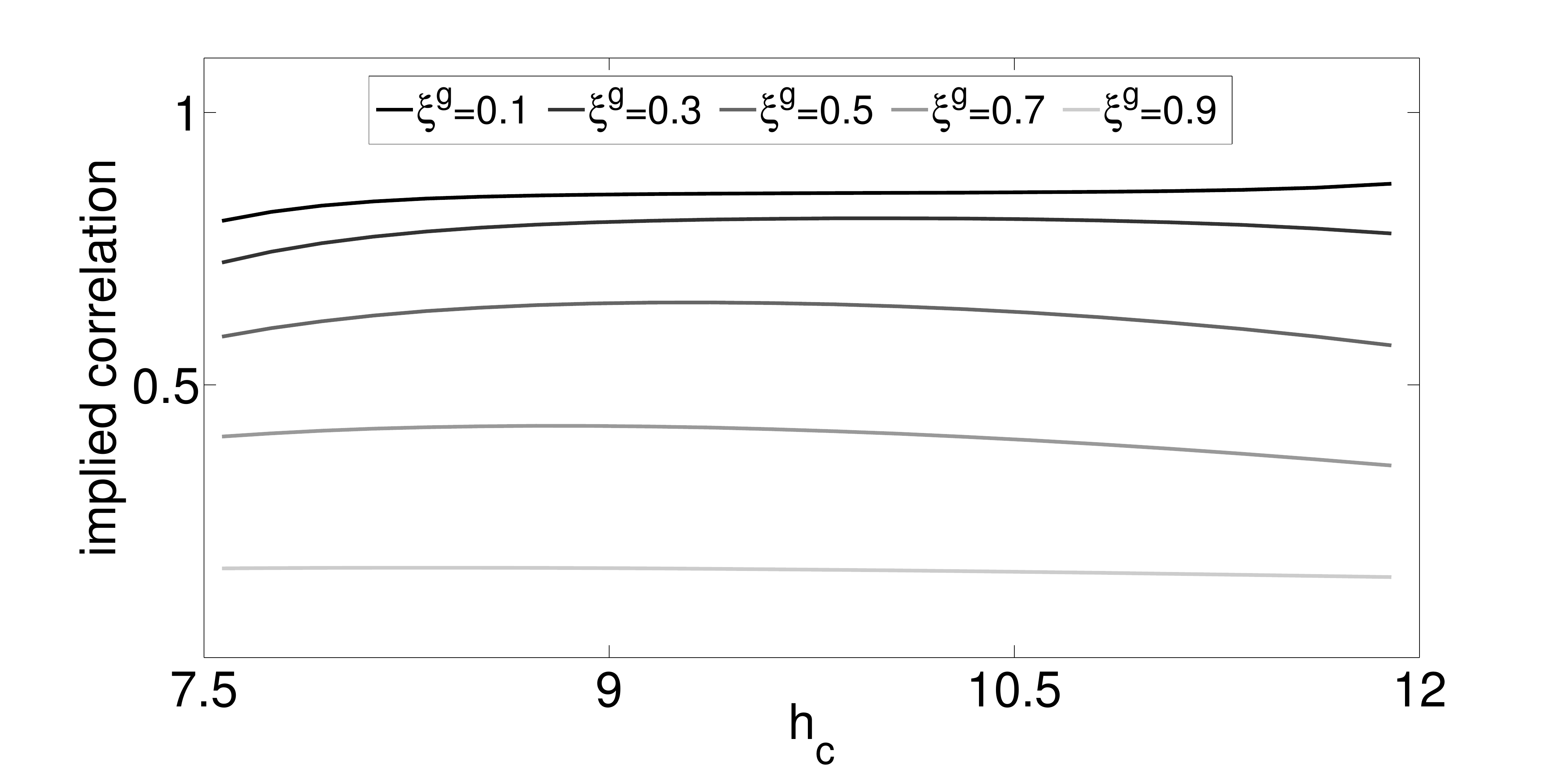}}
  \subfloat[Implied correlation varying $\mu_d$]{\label{ImpCorrMuD}\includegraphics[width=0.5\textwidth]{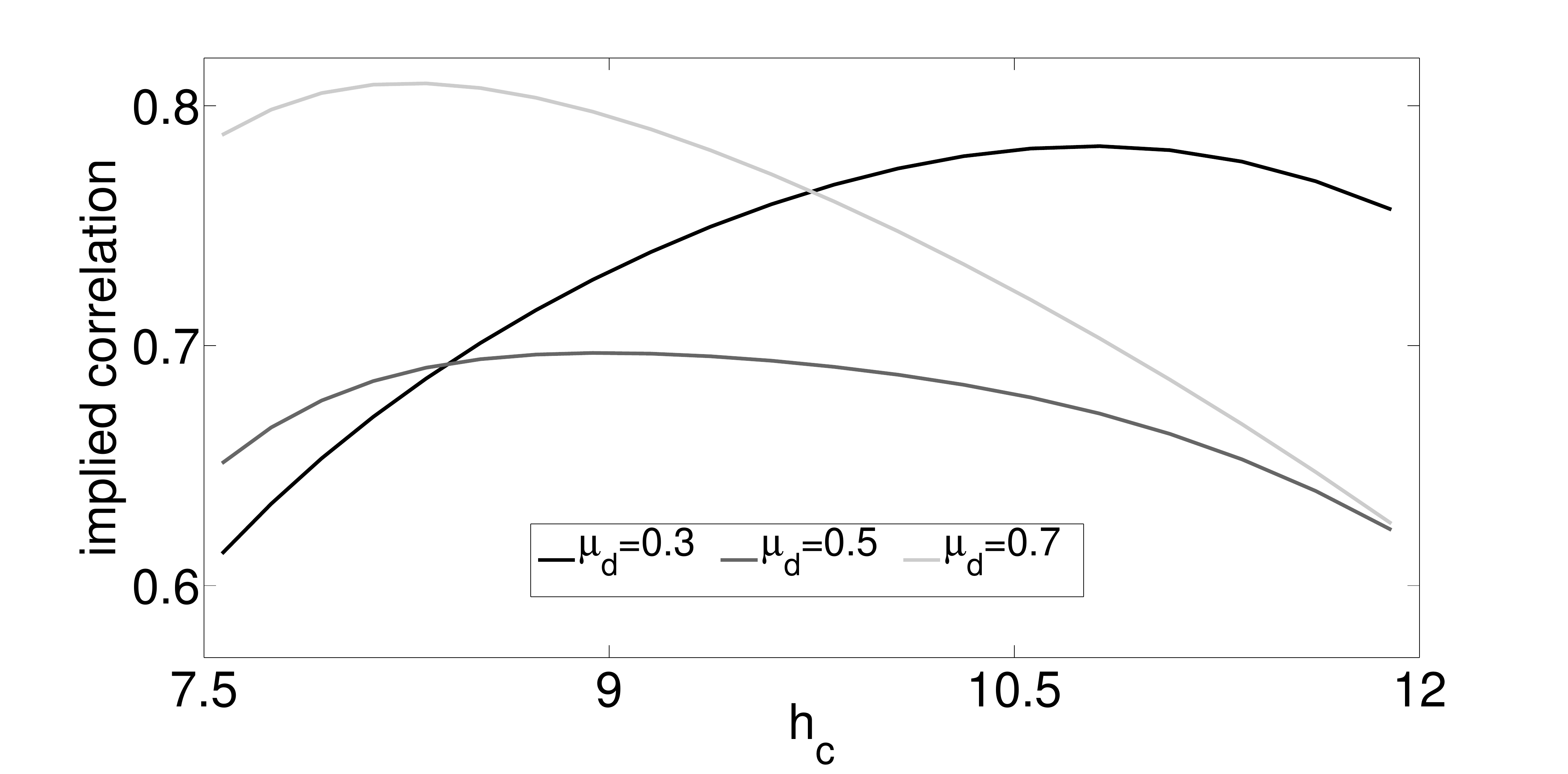}}\\
   \subfloat[$\varrho^{imp}$ surface (Scenario I)]{\label{ImpCorrSurf1}\includegraphics[width=0.5\textwidth, height=0.2\textheight]{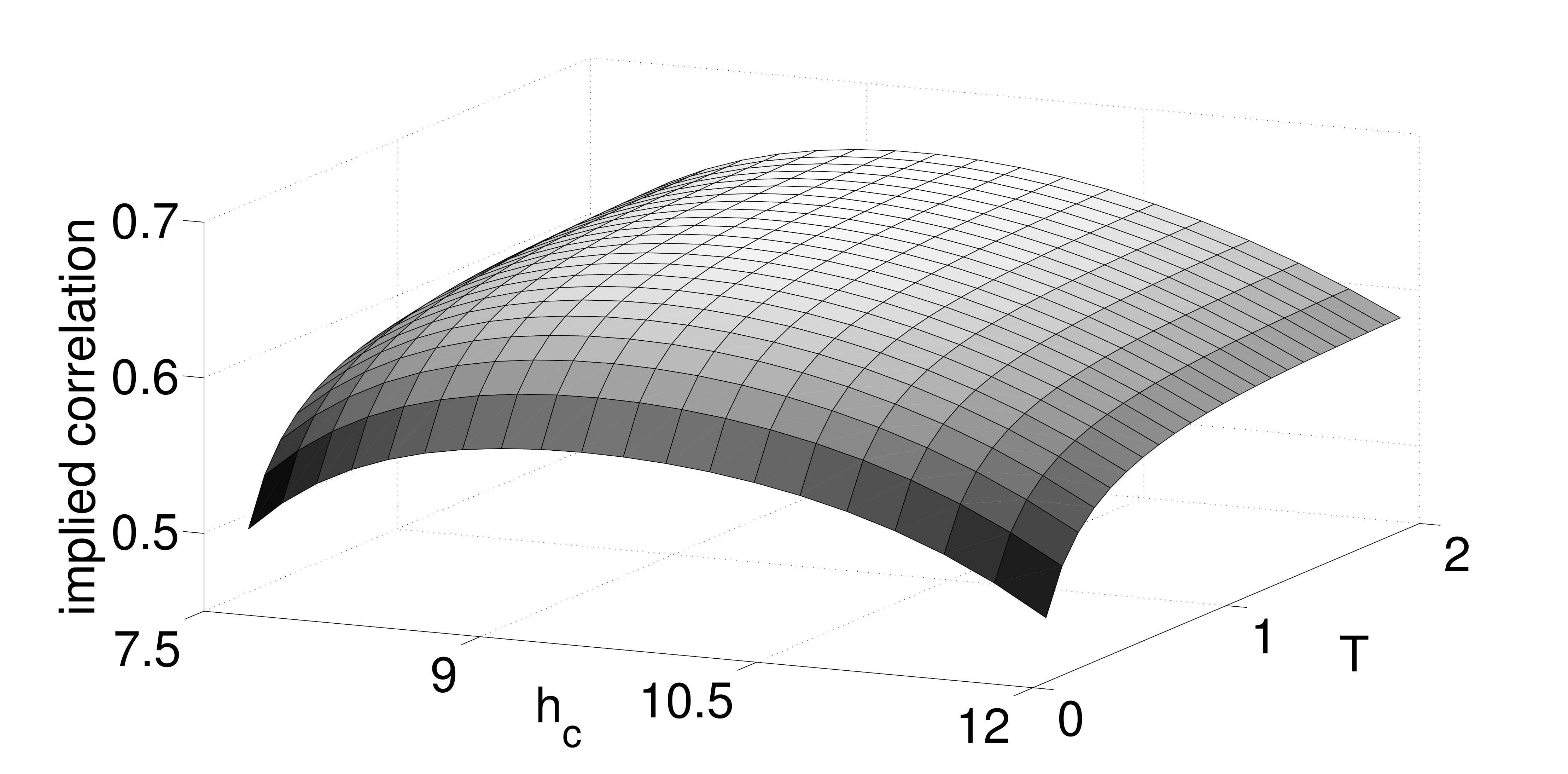}}
  \subfloat[$\varrho^{imp}$ surface (Scenario II)]{\label{ImpCorrSurf2}\includegraphics[width=0.5\textwidth, height=0.2\textheight]{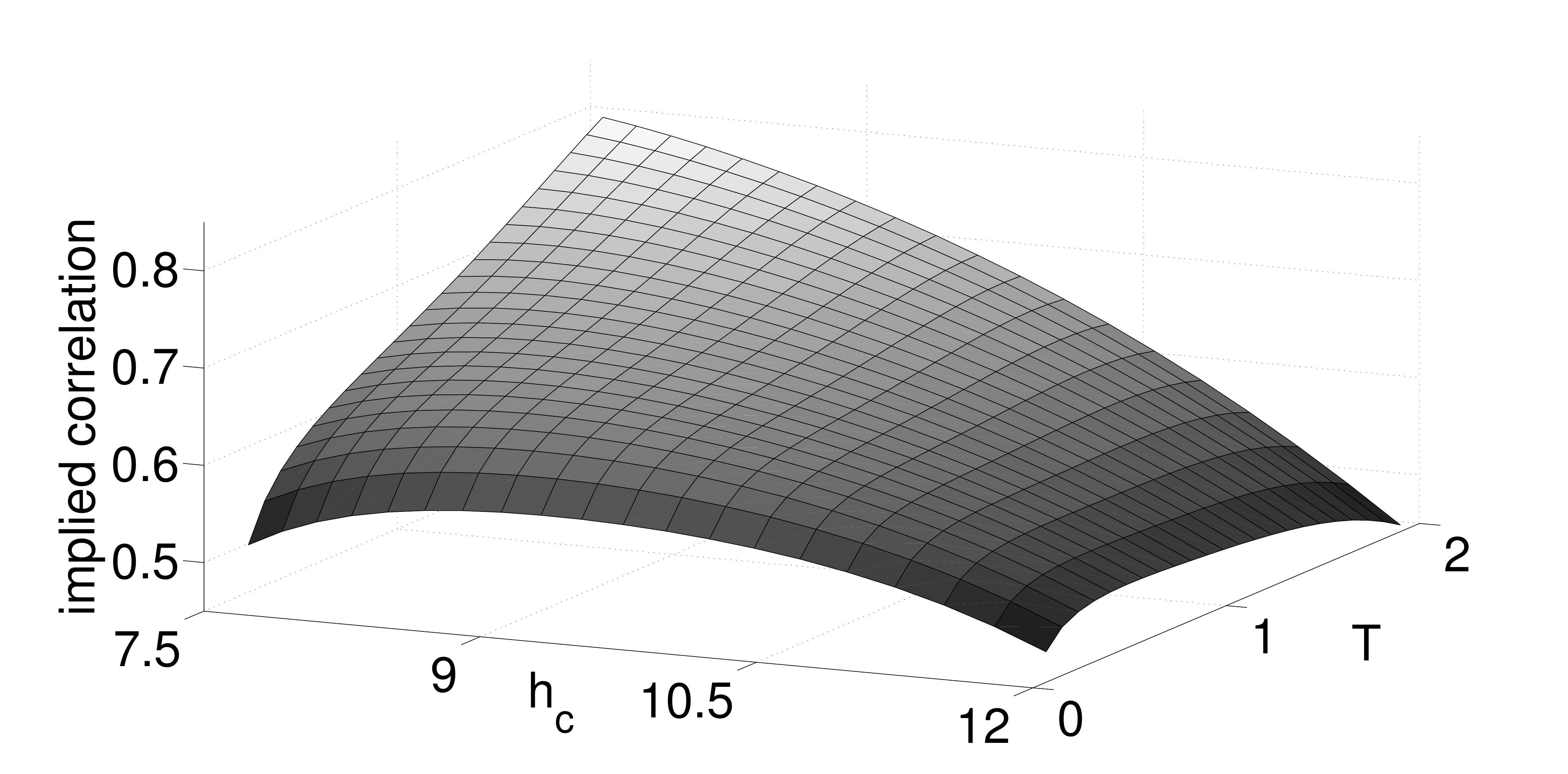}}
  \caption{Implied correlation analysis for various parameters and scenarios}
  \label{ImpCorrel}
 \end{figure}
 
\subsection{Implied Correlation Analysis}\label{str:imp_correl}

We next analyse `implied correlation' $\varrho^{imp}_{p,i}$, meaning the value of $\varrho_{p,i}$ for which Margrabe's formula reproduces the stack model price.  As Figures \ref{ComparisonH} and \ref{ComparisonT} suggest, for high (positive) values of $\varrho$ in the stack, it may be impossible for Margrabe to reproduce the price, for any $\varrho_{p,i}\in[-1,1]$.  In such cases, implied correlation does not exist.  However, $\varrho^{imp}_{p,i}$ typically exists for most values of $\varrho$, and can be understood as a convenient way of measuring (or quoting) the gap between Margrabe and the stack model price. 

In Figures \ref{ImpCorrXiG}-\ref{ImpCorrMuD}, we investigate implied correlation `smiles' (against $h_c$) for a dark spread option in Scenario I and with $\varrho=0$.  In Figure \ref{ImpCorrXiG} we first vary the relative capacities of coal and gas (with $\bar{\xi}=1$ throughout). In all cases Margrabe overprices the spread since $\varrho^{imp}_{p,c}>0$, but the difference is much larger when coal is the dominant technology. As we approach the case of a dark spread in a fully gas driven market ($\bar{\xi}^g$ near 1), Margrabe approaches the stack price ($\rho^{imp}$ near 0).  In Figure \ref{ImpCorrMuD} we assume $\bar{\xi}^c=\bar{\xi}^g=0.5$, but instead vary $\mu_d$ (with $\sigma_d$ now 0.12).  We see that the implied correlation has a slight downward (upward) skew if demand is high (low), and a fairly symmetric `frown' for $\mu_d=0.5$.  Figures \ref{ImpCorrSurf1}-\ref{ImpCorrSurf2} plot implied correlation as a function of both $h$ and $T$ for Scenarios I and II.  When given fuel forward curves as inputs (Scenario II), we can observe a distinctive tilt in the implied correlation surface for long maturities.

\subsection{Power Plant Valuation}\label{str:power_plants}

 We conclude this analysis with an investigation into the bid stack model's predictions for power plant valuation under our chosen scenarios. A generating unit of fuel type $i \in I$, with heat rate $h_i$, can be approximated as a sum of spread options on spot power (cf. \cite{aEydeland2003})).  Letting $\{T_j:j\in J\}$ represent all future hours of production over the plant's life, the plant value $(VP_t)$ for $t\in[0,T]$ is then
\begin{equation*}
VP_t = \sum_{j\in J}\exp\left(-r (T_j-t)\right)\E^{\mathbb{Q}}\left[\left.\left(P_{T_j} - h_i S^i_{T_j}\right)^+\right|\mathcal{F}_t\right]
\end{equation*} 
 While this approximation technique ignores complicated operational constraints, it is consistent with our approximation of the electricity price setting mechanism itself, since a plant bidding at cost every day receives exactly this string of payoffs in our model. 
 
\begin{figure}[htbp]
     \centering
  \subfloat[Scenario I]{\label{PowerPlant1}\includegraphics[width=0.5\textwidth]{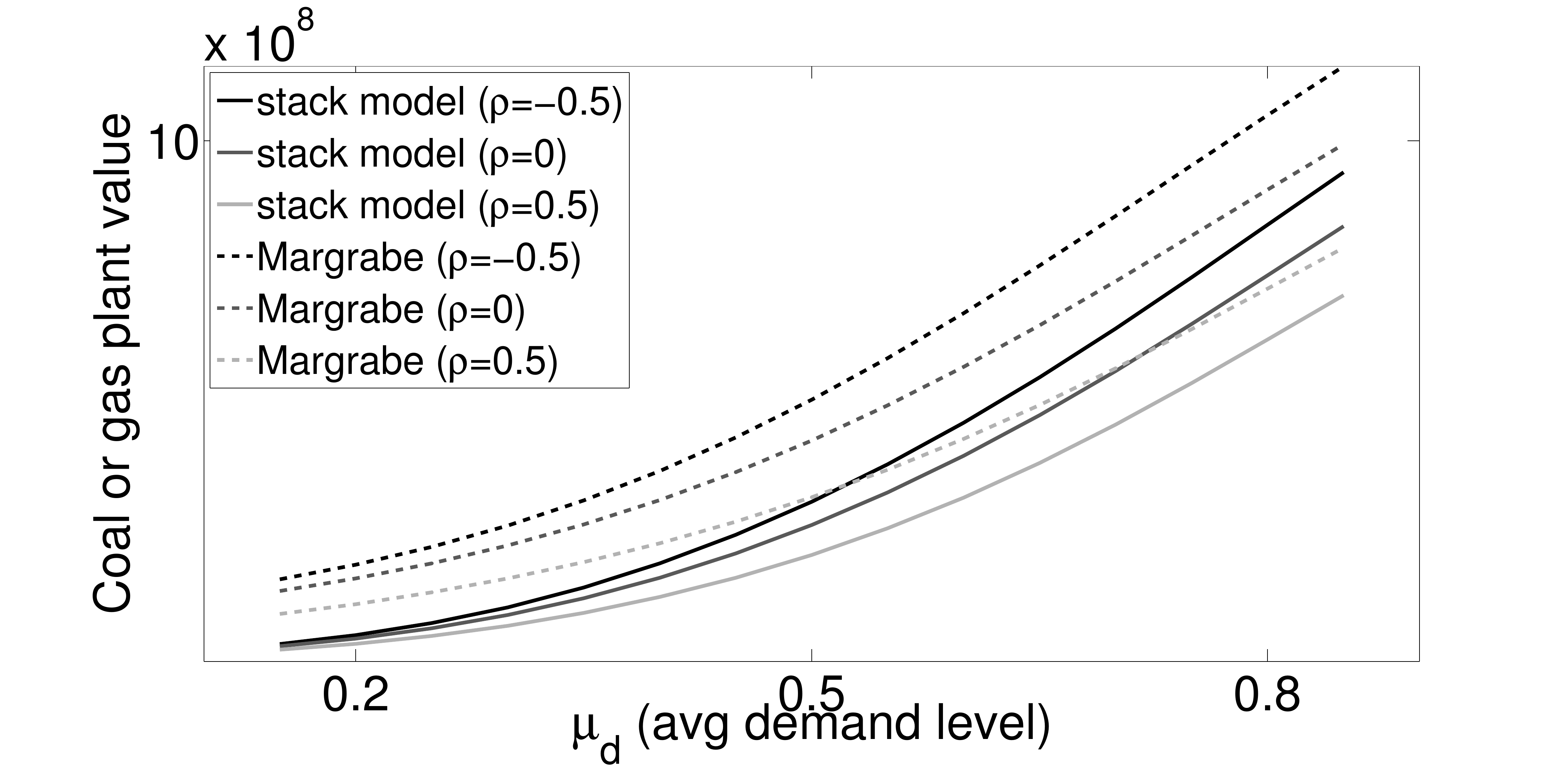}}
  \subfloat[Scenario II (Coal Plant)]{\label{PowerPlant2}\includegraphics[width=0.5\textwidth]{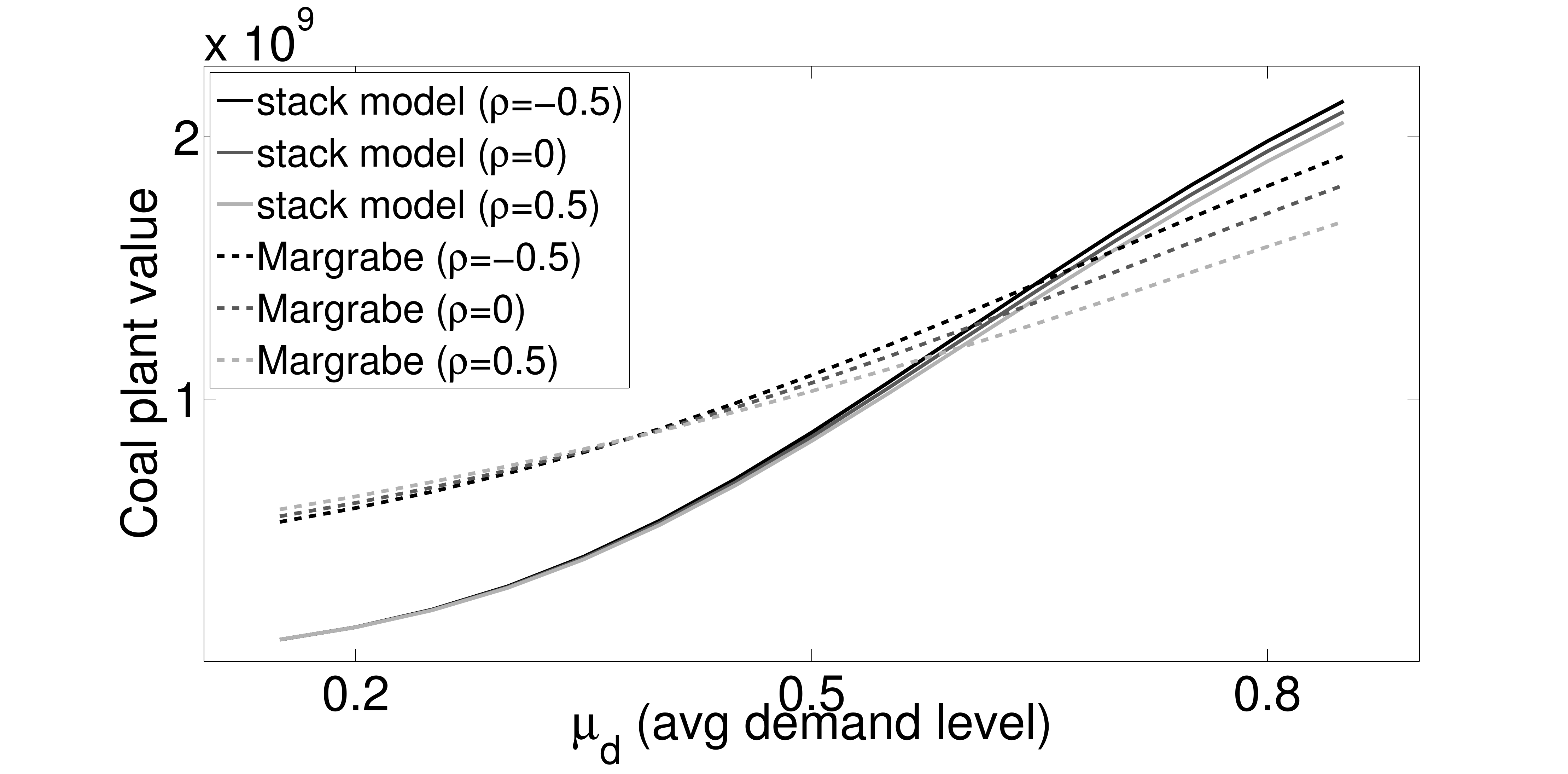}}\\
  \subfloat[Scenario III (Coal Plant)]{\label{PowerPlant3}\includegraphics[width=0.5\textwidth]{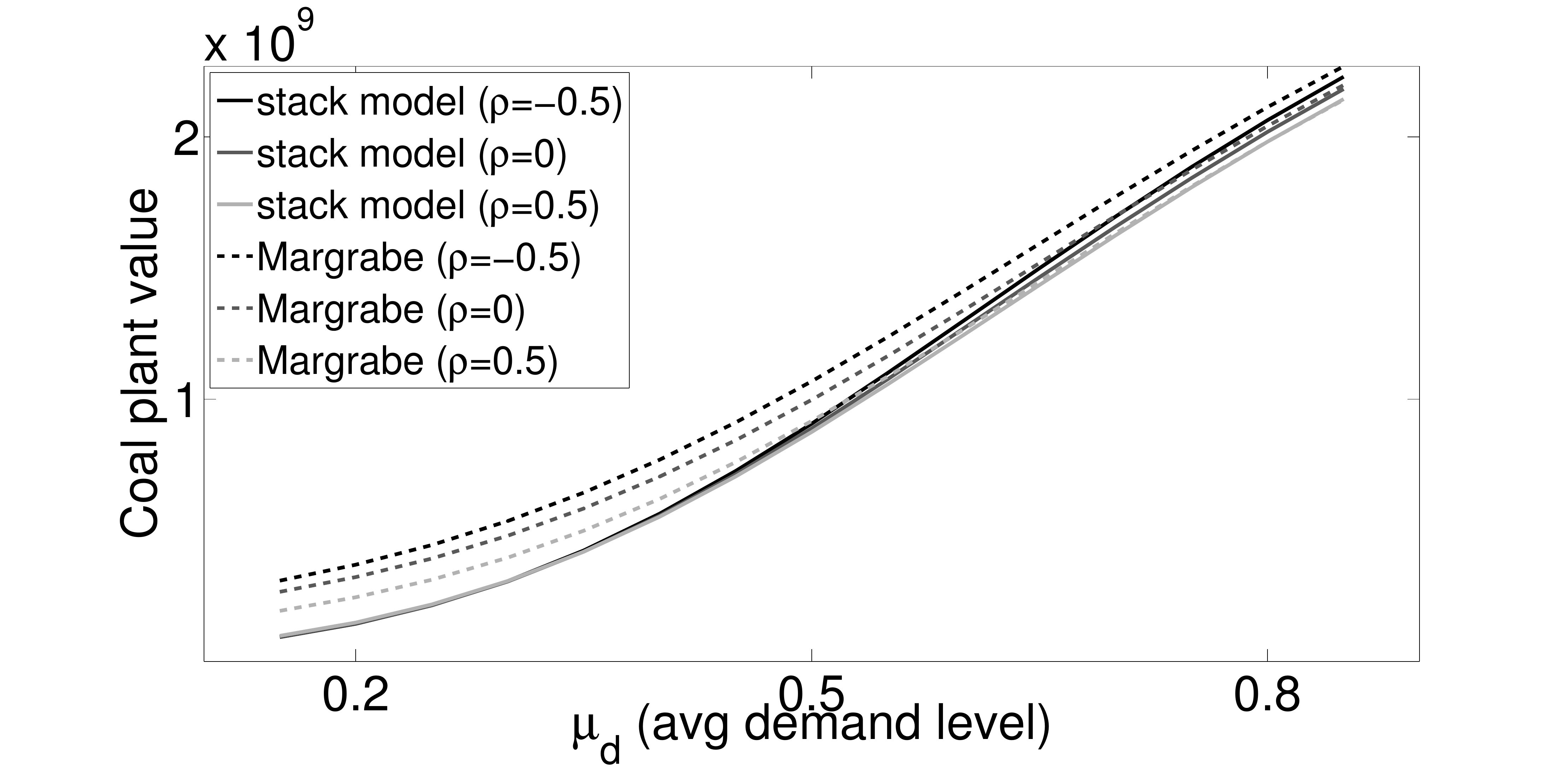}}
  \subfloat[Scenario III (Gas Plant)]{\label{PowerPlant4}\includegraphics[width=0.5\textwidth]{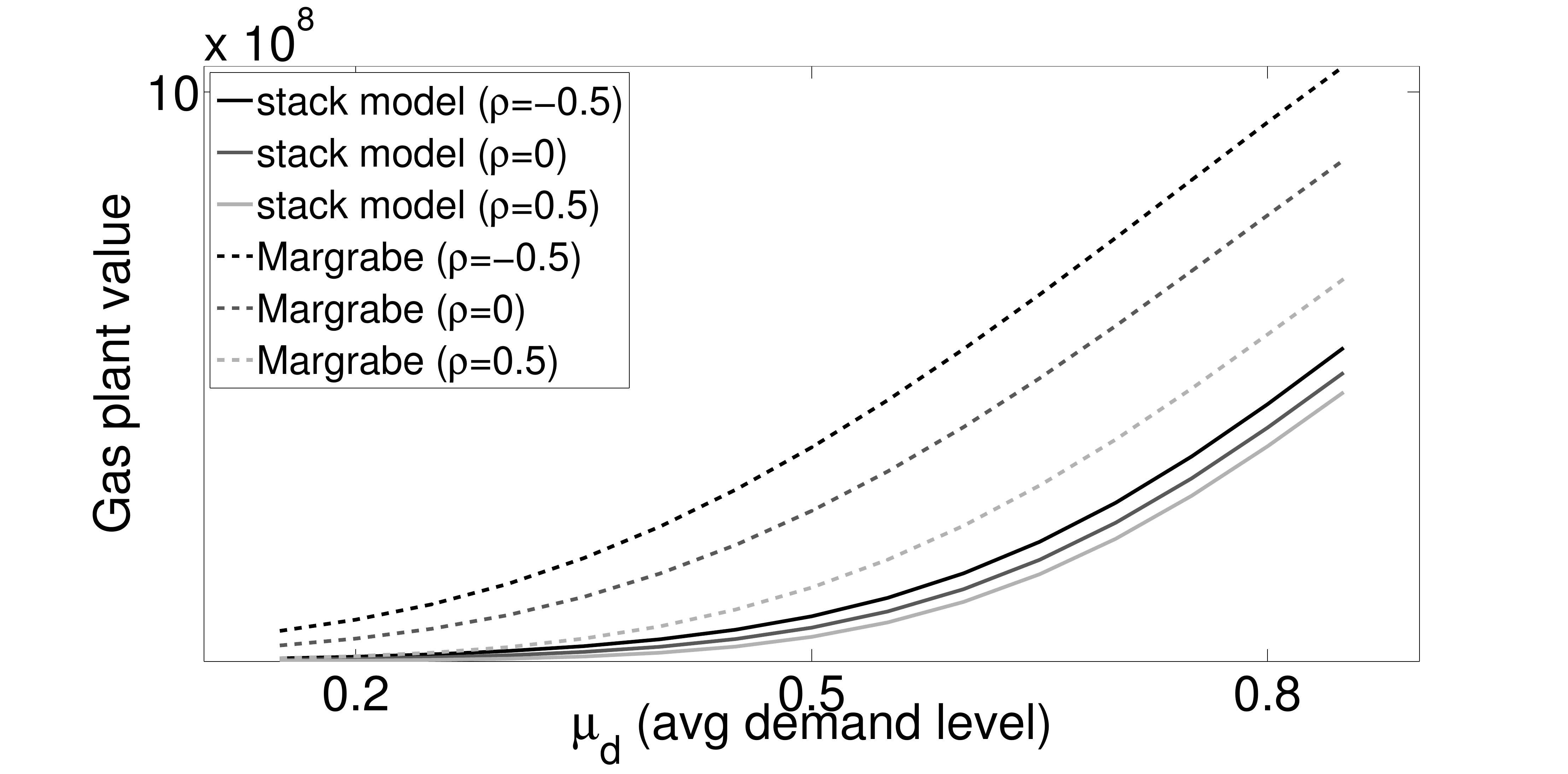}}
   \caption{Power Plant Value (3yrs, 1000MW) as a function of $\mu_d$ for various scenarios}
\label{PowerPlantPlots}
\end{figure}
In Figure \ref{PowerPlantPlots} we value a 1000 MW power plant with a life of three years in the Margrabe case and stack model, exploiting our closed-form formulas for reasonable computation time.  Instead of making an arbitrary assumption about the important periodicities of power demand (which vary from market to market) we use a fixed mean $\mu_d$ for all hours in each calculation, but investigate the resulting plant value as a function of $\mu_d$.  We also fix $h_i=\exp(k_i+m_i\bar{\xi}^i/2)$ throughout.  In Scenario I (Figure \ref{PowerPlant1}), Margrabe prices consistently higher than the stack model as expected.  Figure \ref{PowerPlant2} considers Scenario II where fuel forward prices pull future gas and coal bids in opposite directions, \emph{and} Margrabe matches the distribution of $P_T$ only from history.  Remarkably, for long enough maturities and high enough $\mu_d$, Margrabe sometimes underprices a coal power plant relative to the stack model.  Here the stack captures that gas is likely to be the marginal fuel in the future, with coal plants operating near full capacity.

Another interesting case to consider is Scenario III, in which there is very little overlap between coal and gas bids to begin with, and coal is likely to remain below gas in the merit order. Figures \ref{PowerPlant3}-\ref{PowerPlant4} reveal the result of this change.  Unsurprisingly, for a gas power plant (Figure \ref{PowerPlant4}) the deviation between the stack model and Margrabe is large, since the gas plant has little chance of being called upon to produce power.  On the other hand, the difference between Margrabe and the stack model is much less for the coal plant, and the stack model price appears to converge to Margrabe for high demand (similarly to Figure \ref{ImpCorrXiG} for high $\bar{\xi}^g$).  The reason for this is that when coal is always below gas in the stack and demand always high, then the power price can be approximated by the gas stack alone.  Hence, power price should be close to lognormal and the correlation between power and coal close to that of gas and coal.  Furthermore, as there is no mismatch of mean or variance (as there was in \ref{PowerPlant2}), under such a scenario, Margrabe's formula should give a very similar price to the stack model.

Finally, it is important to remember the heavy-tailed nature of most electricity spot prices, which makes a lognormal distribution for $P_T$ highly questionable.  As discussed in Section \ref{str:spikes}, the bid stack model allows for a straightforward extension to capture spikes (or negative prices) consistently without limiting the availability of derivative pricing formulae.  To illustrate the impact of spikes, Figure \ref{PowerPlantPlotsSpikes} plots the same scenarios as in Figures \ref{PowerPlant1} and \ref{PowerPlant4}, except using our extended stack model\footnote{This implies a spike of about \$150 in the event $X_t=1.1\,\bar{\xi}$, and so is a fairly conservative choice.  Note also that we still match the mean of the Margrabe formula (i.e., calibrate to the stack model's electricity forward curve), but only match the variance to the case of no spikes, since otherwise the fit to the new variance (with spikes) becomes infeasible for large $\mu_d$.} with parameter $m_s=50$ (and $\sigma_d$ reduced from 0.2 to 0.1).  The impact of spikes is to shift the stack model valuation closer to Margrabe, but only in the case of high $\mu_d$.  This is intuitive of course because spikes are only present for high enough demand.  Since the stack model produces a heavier-tailed distribution for $P_T$ than the lognormal, very high $\mu_d$ can lead to stack model valuation even above Margrabe.  

\begin{figure}[htbp]
  \centering
  \subfloat[Scenario I]{\label{PowerPlant1spikes}\includegraphics[width=0.5\textwidth]{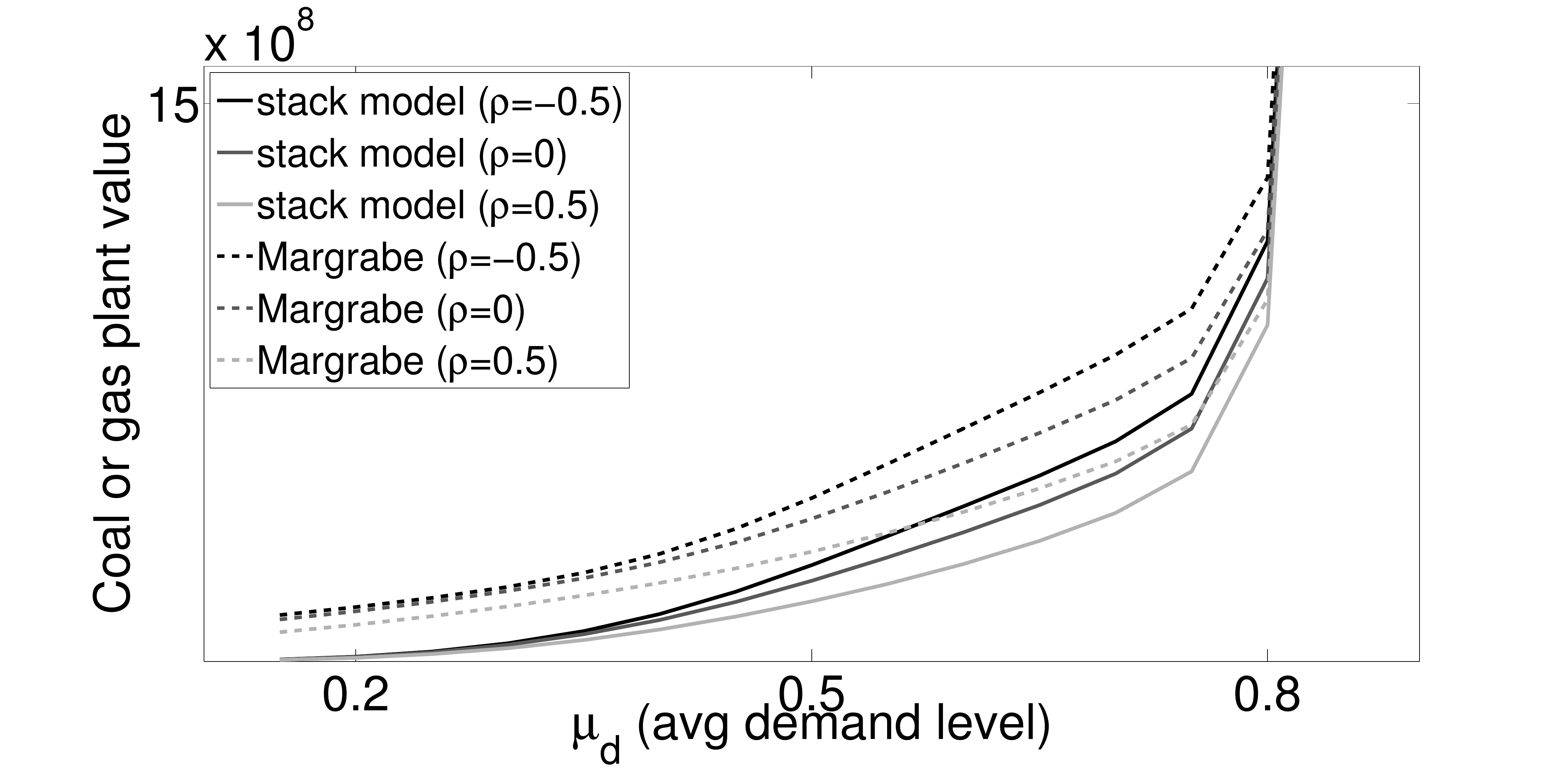}}
  \subfloat[Scenario III (Gas Plant)]{\label{PowerPlant2bspikesB}\includegraphics[width=0.5\textwidth]{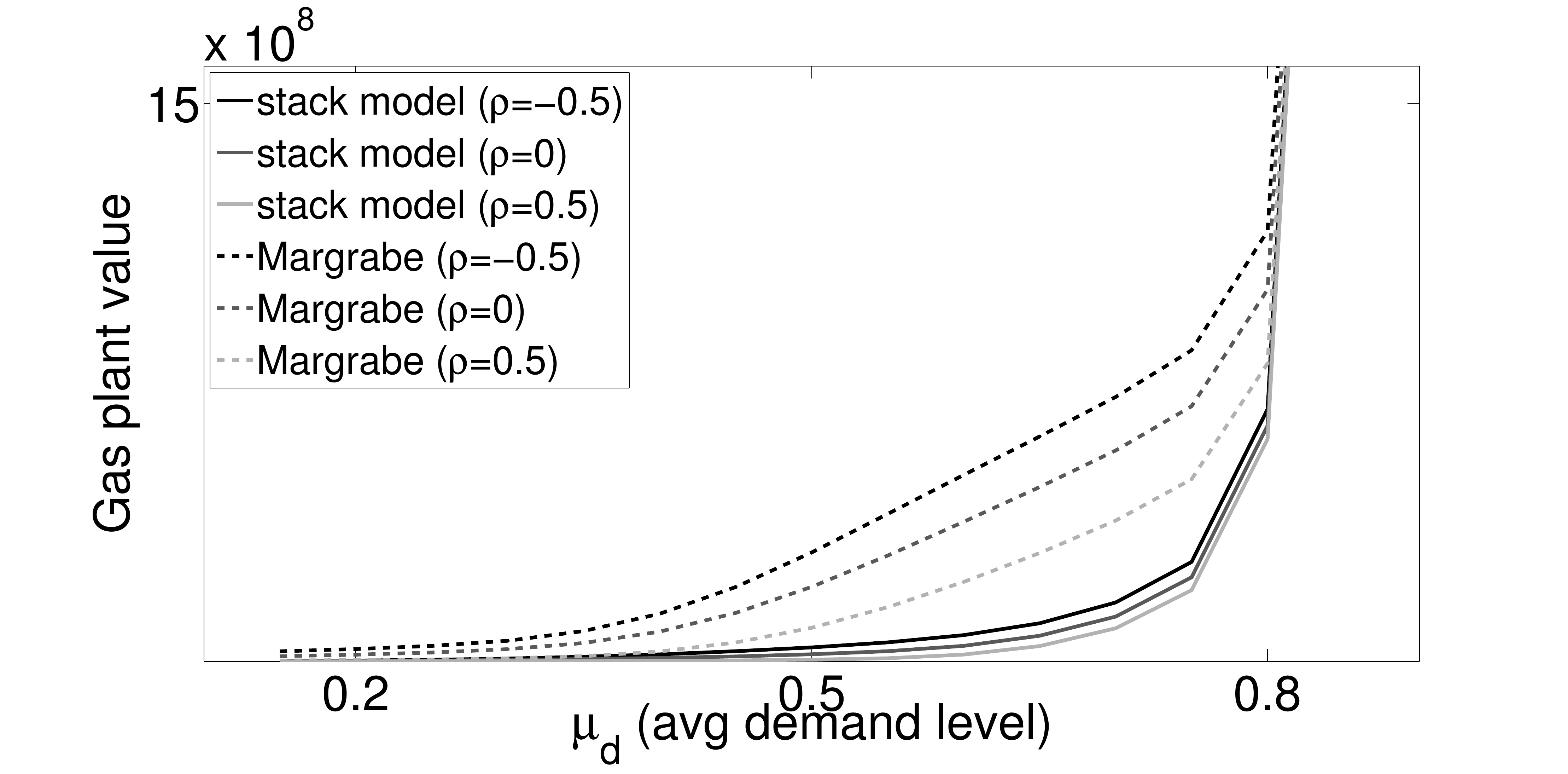}}
   \caption{Power plant value with stack model extended to include spike regime}
\label{PowerPlantPlotsSpikes}
\end{figure}

\section{Conclusion}\label{str:conclusion}

The valuation and hedging of both physical and financial assets in electricity markets is a complex and highly state-dependent challenge, particularly for medium to long term problems in markets driven by multiple underlying fuels.  As we have seen in the examples of \textsection \ref{str:pp_valuation}, it is important to be aware of the current merit order and resulting bid stack shape, as well as likely or possible changes to this order in the future.  Observed fuel forward curves can provide valuable information for this purpose, but cannot be incorporated easily into traditional reduced-form models for power prices.  On the other hand, a structural approach maintains a close link with the physical characteristics of the electricity market, allowing for the inclusion of a variety of forward looking information, such as demand forecasts, or changes in the generation mix of the market, a pertinent issue in many countries nowadays.  The piecewise exponential bid stack model proposed here achieves this link, while crucially retaining closed-form expressions for forwards and spread options, as presented in Sections \ref{str:forwards} and \ref{str:spreads}.  In this way, it enjoys the benefits of a simple reduced-form model, while mimicking the complex dependence structure produced by a full production cost optimization model, for which derivative pricing is typically a computationally infeasible task.  Furthermore, the availability of explicit expressions for forwards and options is highly beneficial for many other practical purposes, such as calibration to observed market quotes, the calculation of Greeks and for risk management applications (e.g. simulating price changes for a portfolio of physical assets).  Although we focused primarily here on a simple coal and gas based market, we believe that our general framework has the flexibility to be adapted to various market conditions, as illustrated for example by our simple extension to capture extreme spikes at times of high demand, an important feature of most power markets. As we have shown through many examples, the delicate interplay between demand, capacity, and multiple fuel prices is embedded into our approach, thus providing an intuitive framework for understanding complicated interdependencies, while also helping to bridge the prevalent gap between mathematical tractability and the economics of supply and demand.  

\appendix
\section{Moments and Covariances}\label{ap:moments}
 If demand at maturity satisfies
\begin{equation*}
 D_T = \max \left(0,\min\left(\bar{\xi},X_T\right)\right),
\end{equation*}
with $X_T\sim N(\mu_d,\sigma_d^2)$ independent of $\mathbf{S}_T$, then for $t\in[0,T]$, the $n$-th moment of $P_T$ is given by
\begin{multline*}
\E[P_T^n|\mathcal{F}_t]=\Phi_1\left(\frac{-\mu_d}{\sigma_d}\right)\sum_{i\in I} b^n_i\left(0,F^i_t\right)\exp\left(\frac12(n^2-n)\sigma_i^2\right) \Phi_1\left(\frac{R^{(n)}_i(0,0)}{\sigma}\right)\\
+\Phi_1\left(\frac{\mu_d-\bar{\xi}}{\sigma_d}\right)\sum_{i\in I} b^n_i\left(\bar{\xi}^i,F^i_t\right)\exp\left(\frac12(n^2-n)\sigma_i^2\right)\Phi_1\left(\frac{-R^{(n)}_i\left(\bar{\xi}^i,\bar{\xi}^i\right)}{\sigma}\right)\\
+\sum_{i\in I} b^n_i\left(\mu_d,F_t^i\right)\exp\left(\zeta_i^{(n)}\right)\Phi_2^{2 \times 1}\left(\left[\begin{array}{c}\frac{\bar{\xi}^i-\mu_d}{\sigma_d}-n m_i\sigma_d\\\frac{-\mu_d}{\sigma_d}-n m_i\sigma_d\end{array}\right],\frac{R^{(n)}_i(\mu_d,0)-n m_i^2\sigma_d^2}{\sigma_{i,d}};\frac{m_i\sigma_d}{\sigma_{i,d}}\right)\\
+\sum_{i\in I} b^n_i\left(\mu_d-\bar{\xi}^j,F^i_t\right)  \exp\left(\zeta_i^{(n)}\right)\Phi_2^{2 \times 1}\left(\left[\begin{array}{c}\frac{\bar{\xi}-\mu_d}{\sigma_d}-n m_i\sigma_d\\\frac{\bar{\xi}^j-\mu_d}{\sigma_d}-n m_i\sigma_d\end{array}\right],\frac{-R^{(n)}_i(\mu_d-\bar{\xi}^j,\bar{\xi}^j)+n m_i^2\sigma_d^2}{\sigma_{i,d}};\frac{-m_i\sigma_d}{\sigma_{i,d}}\right)\\
+\sum_{i\in I}\delta_i b^n_{cg}\left(\mu_d,\mathbf{F}_t\right) \exp\left(\eta^{(n)}\right)\left\{-
\Phi_2^{2 \times 1}\left(\left[\begin{array}{c}\frac{\bar{\xi}^i-\mu_d}{\sigma_d}-n\gamma\sigma_d\\\frac{-\mu_d}{\sigma_d}-n\gamma\sigma_d\end{array}\right],\frac{R_i(\mu_d,0)+n\alpha_j\sigma^2-n\gamma m_i\sigma_d^2}{\delta_i\sigma_{i,d}};\frac{m_i\sigma_d}{\delta_i\sigma_{i,d}}\right)\right.\\
+\left.\Phi_2^{2 \times 1}\left(\left[\begin{array}{c}\frac{\bar{\xi}-\mu_d}{\sigma_d}-n\gamma\sigma_d\\\frac{\bar{\xi}^j-\mu_d}{\sigma_d}-n\gamma\sigma_d\end{array}\right],\frac{R_i(\mu_d-\bar{\xi}^j,\bar{\xi}^j)+n\alpha_j\sigma^2-n\gamma m_i\sigma_d^2}{\delta_i\sigma_{i,d}};\frac{m_i\sigma_d}{\delta_i\sigma_{i,d}}\right)\right\},
\end{multline*}
where $j=I\setminus \{i\}$, $\delta_i=(-1)^{\mathbb{I}_{\{i=i_+\}}}	$ as before and
\begin{align*}
R_i^{(n)}(\xi_i,\xi_j) &:= k_j+m_j\xi_j-k_i-m_i \xi_i + \log(F^j_t)-\log(F^i_t) - (n-\frac12)\sigma_i^2-\frac12\sigma_j^2+n\rho\sigma_i\sigma_j,\\
\eta^{(n)}&:=\frac{n^2}{2}\left(\gamma^2\sigma_d^2-\alpha_c\alpha_g\sigma^2\right),\qquad \text{and} \qquad
\zeta_i^{(n)}:=\frac12(n^2-n)\sigma_i^2+\frac12n^2m_i^2\sigma_d^2.
\end{align*}
A more general formula for $\E[P_T^n (S_T^c)^{n_c} (S_T^g)^{n_g}|\mathcal{F}_t]$ can be obtained similarly, allowing us to calculate for example covariances between electricity and fuels.

\section{Terms in Spread Formula}\label{ap:spread_formula}

The integrands in Proposition \ref{prop:spreadGaussianD} correspond to the dark spread price in various demand intervals, in the case of a given or known demand value.  These terms resemble those for forwards in Proposition \ref{prop:forward}, but can be categorized by whether the option is always, never or sometimes in the money for each case.  For the last of these three cases, we have 
\begin{align*}
 &v_{\text{low},1}\left(\xi,\mathbf{x}\right)=b_c\left(\xi,x^c\right) \Phi_1\left(R_c(\xi,0)/\sigma\right) - h_c x^c \Phi_1\left(\tilde{R}_c\left(h(\xi)\right)/\sigma\right)\\
&\qquad+b_{cg}\left(\xi,\mathbf{x}\right)\exp\left(\sigma^2_{\alpha_c\alpha_g}\right) \left[1-\Phi_1\left(R_c(\xi,0)/\sigma-\alpha_g\sigma\right)- \Phi_1\left(\tilde{R}_g\left(h(\xi)\right)/\sigma-\alpha_c\sigma\right)\right],\\
 &v_{\text{high},2}\left(\xi,\mathbf{x}\right)= b_g(\xi-\bar{\xi}^c,x^g)\Phi_1\left(-R_g\left(\xi-\bar{\xi}^c,\bar{\xi}^c\right)/\sigma\right)- h_c x^c \Phi_1\left(\tilde{R}_c\left(h(\xi)\right)/\sigma\right)\\
 &\quad+b_{cg}\left(\xi,\mathbf{x}\right)\exp\left(-\frac{\alpha_c\alpha_g\sigma^2}{2}\right)\left[\Phi_1\left(R_g\left(\xi-\bar{\xi}^c,\bar{\xi}^c\right)/\sigma-\alpha_c\sigma\right)- \Phi_1\left(\tilde{R}_g\left(h(\xi)\right)/\sigma-\alpha_c\sigma\right)\right],
 \end{align*}
and $v_{\text{mid},2,c}=v_{\text{low},1}$, $v_{\text{mid},2,g}=v_{\text{high},2}$, where $ h(\xi)=\left(\log h_c-\beta-\gamma \xi\right)/\alpha_g$.  Then 
 \begin{equation*}
 v_{\text{mid},1}\left(\xi,\mathbf{x}\right)=f_{\text{mid}}\left(\xi,\mathbf{x}\right)-h_c x^c, \qquad  v_{\text{high},1}\left(\xi,\mathbf{x}\right)=f_{\text{high}}\left(\xi,\mathbf{x}\right)-h_c x^c 
 \end{equation*}
correspond to cases that are always in the money, while the following are never in the money:
 \begin{equation*}
 v_{\text{low},2}\left(\xi,\mathbf{x}\right)=v_{\text{mid},3}\left(\xi,\mathbf{x}\right)=0
 \end{equation*}

\section*{Acknowledgements}
This work was partially supported by the National Science Foundation grant DMS-0739195.

\addcontentsline{toc}{chapter}{Bibliography}

\bibliography{ReferencesMAFE}
\bibliographystyle{spmpsci}      

\end{document}